 \newtheorem{definition}{Definition}
\newtheorem{observation}{Observation}
\newtheorem{remark}{Remark}
\newtheorem{informaltheorem}{Informal Theorem}
\newenvironment{prevproof}[2]{\noindent {\em {Proof of {#1}~\ref{#2}:}}}{$\Box$\vskip \belowdisplayskip}
\newcommand{\junk}[1]{}
\newcommand{\poly}{\text{poly}}
\newcommand{\opt}{\text{OPT}}
\newcommand{\mattnote}[1]{{\color{black}{#1}}}
\newcommand{\yangnote}[1]{{\color{black}{#1}}}
\newcommand{\notshow}[1]{{}}
\DeclareMathOperator{\argmax}{argmax}
\definecolor{MyGray}{rgb}{0.8,0.8,0.8}
\begin{document}
\title{{Reducing Revenue to Welfare Maximization: Approximation Algorithms and other Generalizations}}
\author {Yang Cai\thanks{Supported by NSF Award CCF-0953960 (CAREER) and CCF-1101491.}\\
EECS, MIT \\
\tt{ycai@csail.mit.edu}
\and
Constantinos Daskalakis\thanks{Supported by a Sloan Foundation Fellowship, a Microsoft Research Faculty Fellowship and NSF Award CCF-0953960 (CAREER) and CCF-1101491.}\\
EECS, MIT \\
\tt{costis@mit.edu}
\and
S. Matthew Weinberg\thanks{Supported by a NSF Graduate Research Fellowship and NSF award CCF-1101491.}\\
EECS, MIT\\
\tt{smw79@mit.edu}
}
\addtocounter{page}{-1}
\date{}
\maketitle
\begin{abstract}

It was recently shown in~\cite{CaiDW12b} that revenue optimization can be computationally efficiently reduced to welfare optimization in all multi-dimensional Bayesian auction problems with arbitrary (possibly combinatorial) feasibility constraints and independent additive bidders with arbitrary (possibly combinatorial) demand constraints. This reduction provides a poly-time solution to the optimal mechanism design problem in all auction settings where welfare optimization can be solved efficiently, but it is fragile to approximation and cannot provide solutions to settings where welfare maximization can only be tractably approximated. In this paper, we extend the reduction to accommodate approximation algorithms, providing an approximation preserving reduction from (truthful) revenue maximization to (not necessarily truthful) welfare maximization. The mechanisms output by our reduction choose allocations via black-box calls to welfare approximation on randomly selected inputs, thereby generalizing also our earlier structural results on optimal multi-dimensional mechanisms to approximately optimal mechanisms. Unlike~\cite{CaiDW12b}, our results here are obtained through novel uses of the Ellipsoid algorithm and other optimization techniques over {\em non-convex regions}. 

%
%
\end{abstract}
\thispagestyle{empty}


\section{Introduction} \label{sec:intro}

The \emph{optimal mechanism design} problem is a central problem in mathematical economics that has recently  gained a lot of attention in computer science. The setting for this problem is simple: a seller has a limited supply of several items for sale and many buyers interested in these items. The problem is to design an auction for the buyers to play that will maximize the seller's revenue. While being able to solve this problem in the worst-case would be desirable, it is easy to see that there can't be any meaningful worst-case solution. Indeed, even in the simpler case of a single buyer and a single item, how would the seller sell the item to optimize her profit without any assumptions about the buyer who can, in principle, lie about his willingness to pay? 

To cope with this impossibility, economists have taken a Bayesian approach, assuming that a prior distribution is known that determines the values of the buyers for each item (and each bundle of items), and aiming to optimize the seller's revenue in expectation over bidders sampled from this distribution. Under this assumption, Myerson solved the single-item version of the  problem in a seminal paper on Bayesian mechanism design~\cite{Myerson81}. On the other hand, until very recently there had been very small progress on the ``multi-dimensional'' version of the problem, i.e. the setting where the seller has multiple {\em heterogeneous} items for sale~\cite{ManelliV07}. This challenging, and more general problem is the focus of this paper. 

We proceed to state the problem we study more precisely. First, as we move beyond single-item settings, one may want to consider settings with non-trivial \emph{feasibility constraints} on which bidders may simultaneously receive which items. Here are some examples:
\begin{enumerate}
\item Maybe the items are baseball cards. Here, a feasible allocation should award each card to at most one bidder.
\item Maybe the items are houses. Here, a feasible allocation should award each house to at most one bidder, and to each bidder at most one house.
\item Maybe the items are links in a network, and all bidders have a desired source-destination pair. Here, a feasible allocation should award each link to at most one bidder, and to each bidder a simple path from their source to their destination (or nothing).
\end{enumerate}
Following the notation of~\cite{CaiDW12b} we encapsulate the feasibility constraints that the auctioneer faces in a set system $\mathcal{F}$ on possible allocations. Namely, if the bidder set is $[m]$ and the item set is $[n]$ then an allocation of items to bidders can be represented by a subset $O \subseteq [m]\times[n]$.  ${\cal F}$ is then a set-system ${\cal F} \subseteq 2^{[m] \times [n]}$, determining what allocations are allowed for the auctioneer to choose from. 

With this notation in place, we formally state our problem, and note that virtually every recent result in the revenue-maximization literature~\cite{AlaeiFHHM12,BhattacharyaGGM10,CaiD11,CaiDW12, CaiDW12b, CaiH12, ChawlaHK07,ChawlaHMS10,DW12,HartN12,KleinbergW12} studies a special case of this problem (perhaps replacing Bayesian Incentive Compatibility with Incentive Compatibility). A detailed review of this literature is given in Section~\ref{sec:related}.

\smallskip\noindent  \framebox{
\begin{minipage}{0.46\textwidth}
\textbf{Revenue-Maximizing Multi-Dimensional Mechanism Design Problem (MDMDP):} Given $m$ distributions  $\mathcal{D}_1,\ldots,\mathcal{D}_m$, supported on $\mathbb{R}^n$, over valuation vectors for $n$ heterogenous items (possibly correlated across items), and feasibility constraints $\mathcal{F}$, output a Bayesian Incentive Compatible (BIC)\footnote{A mechanism is said to be BIC if it asks bidders to report their valuation to the mechanism and it is in each bidder's best interest to report truthfully, given that every other bidder does so as well. See Section~\ref{sec:notation} for a formal definition.} mechanism $M$ whose allocation is in $\mathcal{F}$ with probability $1$ and whose expected revenue is optimal relative to any other, possibly randomized, BIC mechanism when played by $m$ additive bidders\footnote{A bidder is additive if their value for a bundle of items is the sum of their value for each item in the bundle.} whose valuation vectors are sampled from $\mathcal{D} = \times_i \mathcal{D}_i$.
\end{minipage}}

\smallskip It was recently shown that solving MDMDP under feasibility constraints $\mathcal{F}$ can be poly-time reduced to (the algorithmic problem of) maximizing social welfare under the same feasibility constraints $\cal F$, i.e. running the VCG allocation rule with constraints ${\cal F}$~\cite{CaiDW12b}. This result implies that, for all ${\cal F}$'s such that maximizing social welfare can be solved efficiently, MDMDP can also be solved efficiently. On the other hand, the reduction of~\cite{CaiDW12b} is geometric and sensitive to having an exact algorithm for maximizing welfare, and this limits the span of mechanism design settings that can be tackled. In this work we extend this reduction, making it robust to approximation. Namely, we reduce the problem of approximating MDMDP to within a factor $\alpha$ to the problem of approximately optimizing social welfare  to within the same factor $\alpha$. Before  stating our result formally, let us define the concept of a {\em virtual implementation} of an algorithm.

\begin{definition}\label{def: virtual implementation}
Let $A$ be a social welfare algorithm, i.e. an algorithm that takes as input a vector $(t_1,\ldots,t_m)$ of valuations (or types) of bidders and outputs an allocation $O \in {\cal F}$. A {\em virtual implementation of $A$} is defined by a collection of functions $f_1,\ldots,f_m$, such that $f_i : T_i \rightarrow \mathbb{R}^n$, where $T_i$ is bidder $i$'s type set. On input $(t_1,\ldots,t_m)$ the virtual implementation outputs $A(f_1(t_1),\ldots,f_m(t_m))$, i.e. instead of running $A$ on the ``real input'' $(t_1,\ldots,t_m)$ it runs the algorithm on the ``virtual input'' $(f_1(t_1),\ldots,f_m(t_m))$ defined by the functions $f_1,\ldots,f_m$. The functions $f_1,\ldots,f_m$ are called \emph{virtual transformations}.
\end{definition}


\noindent With this definition, we state our main result informally below, and formally as Theorem~\ref{thm:main} of Section~\ref{sec:theorems}. 

\begin{informaltheorem}\label{infthm:appx} Fix some arbitrary ${\cal F}$ and finite $T_1,\ldots,T_m$ and let $A: \times_i T_i \rightarrow {\cal F}$ be a (possibly randomized, not necessarily truthful) social welfare algorithm, whose output is in $\mathcal{F}$ with probability $1$. Suppose that, for some $\alpha \leq 1$, $A$ is an $\alpha$-approximation algorithm to the social welfare optimization problem for ${\cal F}$, i.e. on all inputs $\vec{t}$ the allocation output by $A$ has social welfare that is within a factor of $\alpha$ from the optimum for $\vec{t}$. Then {for all $\mathcal{D}_1,\ldots,\mathcal{D}_m$} supported on $T_1,\ldots,T_m$ respectively, and all $\epsilon >0$, given black-box access to $A$ and without knowledge of ${\cal F}$, we can obtain  an $(\alpha-\epsilon)$-approximation algorithm for  MDMDP whose runtime is polynomial in the number of items, the number of bidder types (and \emph{not} type profiles), and the runtime of $A$. Moreover, the allocation rule of the output mechanism is a distribution over virtual implementations of $A$. 
\end{informaltheorem}
\noindent In addition to our main theorem, we provide in Section~\ref{sec:theorems} extensions for distributions of infinite support and improved runtimes in certain cases, making use of techniques from~\cite{DW12}. We also show that our results still hold even in the presence of bidders with hard budget constraints. {We remark that the functions defining a virtual implementation of a social welfare algorithm (Definition~\ref{def: virtual implementation}) may map a bidder type to a vector with negative coordinates. We require that the approximation guarantee of the given social welfare algorithm is still valid for inputs with negative coordinates. This is not a restriction for arbitrary downwards-closed ${\cal F}$'s, as any $\alpha$-factor approximation algorithm that works for non-negative vectors can easily be (in a black-box way) converted to an $\alpha$-factor approximation algorithm allowing arbitrary inputs.\footnote{The following simple black-box transformation achieves this: first zero-out all negative coordinates in the input vectors; then call the approximation algorithm; in the allocation output by the algorithm un-allocate item $j$ from bidder $i$ if the corresponding coordinate is negative; this is still a feasible allocation as the setting is downwards-closed.} But this is not necessarily true for non downwards-closed ${\cal F}$'s. If optimal social welfare cannot be tractably approximated (without concern for truthfulness) under arbitrary inputs, our result is not applicable.}

\paragraph{Beyond Additive Settings:} We note that the additivity assumption on the bidders' values for bundles of items is already general enough to model all settings that have been studied in the revenue-maximizing literature cited above, and already contains all unit-demand settings. 

{Beyond these settings that are already additive, we remark that we can easily extend our results to broader settings with  minimal loss in computational efficiency. As an easy example, consider a single-minded combinatorial auction where bidder $i$ is only interested  in receiving some fixed subset  $S_i$ of items, or nothing, and has (private) value $v_i$ for $S_i$. Instead of designing an auction for the original setting, we can design an auction for a single ``meta-item'' such that allocating the meta-item to bidder $i$ means allocating  subset $S_i$ to bidder $i$. So bidder $i$ has value $v_i$ for the meta-item. The meta-item can be simultaneously allocated to several bidders. However, to faithfully represent the underlying setting, we define our feasibility constraints to enforce that we never simultaneously allocate the meta-item to bidders $i$ and $j$ if $S_i \cap S_j \neq \emptyset$. As there is now only one item, the bidders are trivially additive. So, the new setting faithfully represents the original setting, there is only $1$ item, and the bidders are additive. So we can use our main theorem to solve this setting efficiently.}


More generally, we can define the notion of \emph{additive dimension} of an auction setting to be the minimum number of meta-items required so that the above kind of transformation can be applied to yield an equivalent setting whose bidders are additive. For example, the additive dimension of any setting with arbitrary feasibility constraints and additive bidders with arbitrary demand constraints is $n$.  The additive dimension of a single-minded combinatorial auction setting is $1$. The additive dimension of general (i.e. non single-minded) combinatorial auction settings, as well as all settings with risk-neutral bidders is at most $2^n$ (make a meta-item for each possible subset of items). In Appendix~\ref{sec:dimension} we discuss the following observation and give examples of settings with low additive dimension, including settings where bidders have symmetric submodular valuations~\cite{BadanidiyuruKS12}.

\begin{observation}\label{obs:dimension} In any setting with additive dimension $d$, Informal Theorem~\ref{infthm:appx} holds after multiplying the runtime by a $\poly (d)$ factor, assuming that the transformation to the additive representation of the setting can be carried out computationally efficiently in the setting's specification.
\end{observation}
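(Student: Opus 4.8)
The plan is to reduce to Informal Theorem~\ref{infthm:appx} by passing through the additive representation guaranteed by the definition of additive dimension. First I would invoke that definition: since the setting has additive dimension $d$, there is a set of $d$ meta-items and a transformation that replaces the original ground set $[n]$ of items with $[d]$ meta-items, producing (i) new feasibility constraints $\mathcal{F}' \subseteq 2^{[m]\times[d]}$, (ii) for each bidder $i$ a new finite type set $T_i' \subseteq \mathbb{R}^d$ together with a value-preserving bijection $T_i \to T_i'$, and (iii) induced type distributions $\mathcal{D}_i'$ obtained by pushing $\mathcal{D}_i$ forward through this bijection. By construction every allocation in $\mathcal{F}'$ corresponds to a feasible allocation of the original items, and corresponding allocations have equal social welfare; consequently BIC mechanisms for the meta-item setting are in value-preserving and revenue-preserving correspondence with BIC mechanisms for the original setting, so the two MDMDP instances have the same optimal revenue and an $(\alpha-\epsilon)$-approximation for one yields an $(\alpha-\epsilon)$-approximation for the other.

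Second, I would build the social welfare algorithm $A'$ for $\mathcal{F}'$ that will be fed to Informal Theorem~\ref{infthm:appx}. On a (possibly negative) virtual input $(v_1,\ldots,v_m) \in (\mathbb{R}^d)^m$, $A'$ translates it into the original item space, calls the given $\alpha$-approximate welfare algorithm $A$ for $\mathcal{F}$, and maps the returned allocation back into $\mathcal{F}'$. Here I would use the stated hypothesis that translation between the two representations is computable in time polynomial in the setting's specification (hence $\poly(d)$), so each invocation of $A'$ costs only a $\poly(d)$ overhead over an invocation of $A$; and the equality of welfare across representations ensures $A'$ is still an $\alpha$-approximation, including on inputs with negative coordinates — which is precisely the regime in which Informal Theorem~\ref{infthm:appx} operates, and the reason we must assume $A$ handles negative inputs (since $\mathcal{F}'$ need not be downwards-closed even when $\mathcal{F}$ is).

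Third, I would apply Informal Theorem~\ref{infthm:appx} to the instance $(\mathcal{F}', \{T_i'\}_i, \{\mathcal{D}_i'\}_i, A')$. This yields, in time polynomial in the number of meta-items $d$, the number of bidder types $|T_i'| = |T_i|$, and the runtime of $A'$, an $(\alpha-\epsilon)$-approximate BIC mechanism for the meta-item MDMDP whose allocation rule is a distribution over virtual implementations of $A'$. Composing with the revenue-preserving correspondence from the first step, and unwinding each virtual implementation of $A'$ into the corresponding operation on the original items, produces the desired mechanism. The only blow-up relative to the runtime bound of Informal Theorem~\ref{infthm:appx} is the $\poly(d)$ cost of the representation translation inside each call to $A'$ and of reading off the final allocation, which accounts for the claimed extra $\poly(d)$ factor.

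I expect the main point requiring care to be verifying that the meta-item transformation is genuinely equivalence-preserving at the level of \emph{randomized} BIC mechanisms and their expected revenues — i.e. that the forward and backward maps on allocations commute with taking expectations and with the incentive-compatibility constraints — rather than any new algorithmic ingredient; once that bookkeeping is in place the observation follows as an immediate corollary of Informal Theorem~\ref{infthm:appx}. A secondary caveat, already flagged above, is that $\mathcal{F}'$ may fail to be downwards-closed (as the single-minded example shows), so the black-box fix for negative inputs is unavailable and we rely essentially on $A$ approximating welfare on all real-valued inputs.
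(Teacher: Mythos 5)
Your proposal is correct and follows essentially the same route as the paper's argument in Appendix~\ref{sec:dimension}: pass to the $d$-meta-item representation via the equivalence maps $g,h$, observe that $g$ and $h$ give a revenue- and BIC-preserving bijection between mechanisms in the two settings, turn $A$ into a welfare approximator $A'$ for $\mathcal{F}'$ by translating inputs through the additive representation and post-composing with $g$, apply Informal Theorem~\ref{infthm:appx} in the meta-setting, and pull the resulting mechanism back with $h$. You also correctly flag the same caveat the paper emphasizes, namely that $\mathcal{F}'$ need not be downwards-closed even when $\mathcal{F}$ is, so one must assume $A$ handles negative (virtual) inputs rather than relying on the black-box zeroing trick.
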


\subsection{Approach and Techniques.} \label{sec:techniques}

Our main result, as well as those of~\cite{AlaeiFHHM12,CaiDW12,CaiDW12b}, are enabled by an algorithmic characterization of {\em interim allocation rules} of auctions.\footnote{The interim rule of an auction is the collection of marginal allocation probabilities $\pi_{ij}(t_i)$, defined for each item $j$, bidder $i$, and type $t_i$ of that bidder, representing the probability that item $j$ is allocated to bidder $i$ when her type is $t_i$, and in expectation over the other bidders' types, the randomness in the mechanism, and the bidders' equilibrium behavior. See Section~\ref{sec:notation}.}  The benefit of working with the interim rule is, of course, the exponential (in the number of bidders) gain in description complexity that it provides compared to the ex post allocation rule, which specifies the behavior of the mechanism for every {\em vector} of bidders' types. On the other hand, checking whether a given interim rule is consistent with an auction is a non-trivial task. Indeed, even in single-item settings, where a necessary and sufficient condition for feasibility of interim rules had been known for a while~\cite{Border91,Border07,CheKM11}, it was only recently that efficient algorithms were obtained~\cite{CaiDW12,AlaeiFHHM12}. These approaches also generalized to serving many copies of an item with a matroid feasibility constraint on which bidders can be served an item simultaneously~\cite{AlaeiFHHM12}, but for more general feasibility constraints there seemed to be an obstacle in even  defining necessary and sufficient conditions for feasibility~\cite{CaiDW12}, let alone checking them efficiently.

In view of this difficulty, it is quite surprising that a general approach for the problem was offered in~\cite{CaiDW12b}. The main realization was that, for arbitrary feasibility constraints, the set of feasible interim  rules is a convex polytope, whose facets are accessible via black-box calls to an exact welfare optimizer for the same feasibility constraints. Such an algorithm can be turned into a separation oracle for the polytope and used to optimize over it with Ellipsoid. However, this approach requires use of an exact optimizer for welfare, making it computationally intractable in settings where optimal social welfare can only be tractably approximated. 

Given only an approximation algorithm for optimizing social welfare, one cannot pin down the facets of the polytope of feasible interim rules exactly. Still, a natural approach could be to resign from the exact polytope of feasible interim rules, and let the approximation algorithm define a large enough sub-polytope. Namely, whenever the separation oracle of~\cite{CaiDW12b} uses the output of the social welfare optimizer to define a facet, make instead a call to the social welfare approximator and use its output to define the facet. Unfortunately, unless the approximation algorithm is a maximal-in-range algorithm, the separation oracle obtained does not necessarily define a polytope. In fact, the region is likely not even convex, taking away all the  geometry that is crucial for applying Ellipsoid.

Despite this, we show that ignoring the potential non-convexity, and running Ellipsoid with this ``weird separation oracle'' (called ``weird'' because it does not define a convex region) gives an approximation guarantee anyway, allowing us to find an approximately optimal interim rule with black-box access to the social welfare approximator. The next difficulty is that, after we find the approximately optimal interim rule, we still need to find an auction implementing it. In~\cite{CaiDW12b} this is done via a geometric algorithm that decomposes a point in the polytope of feasible interim rules into a convex combination of its corners. Now that we have no polytope to work with, we have no hope of completing this task. Instead, we show that for any point $\vec{\pi}$ deemed feasible by our weird separation oracle, the black-box calls made during the execution to the social welfare approximator contain enough information to decompose $\vec{\pi}$ into a convex combination of virtual implementations of the approximation algorithm (which are not necessarily extreme points, or even contained in the region defined by our weird separation oracle). After replacing the separation oracle of~\cite{CaiDW12b} with our weird separation oracle, and the decomposition algorithm with this new decomposition approach, we obtain the proof of our main theorem (Informal Theorem~\ref{infthm:appx} above, and Theorem~\ref{thm:main} in Section~\ref{sec:theorems}). Our approach is detailed in Sections~\ref{sec:WSO},~\ref{sec:revenue} and~\ref{sec:runtime}. 

\subsection{Related Work} \label{sec:related}
\subsubsection{Optimal Mechanism Design.}

In his seminal paper, Myerson solved the single-item case of the MDMDP~\cite{Myerson81}. Shortly after, the result was extended to all ``single-dimensional settings,'' where the seller has multiple copies of the same item and some feasibility constraint ${\cal F}$ on which of the bidders can simultaneously receive a copy. The algorithmic consequence of these results is that, for all ${\cal F}$'s such that social welfare can be (not necessarily truthfully) efficiently optimized, the revenue-optimal auction can also be efficiently found, and run. On the other hand, before this work, there was no analogue of this for approximation algorithms, allowing a generic reduction from revenue approximation to (not necessarily truthful) social-welfare approximation.

On the multi-dimensional front, where there are multiple, heterogeneous items for sale, progress had been slower~\cite{ManelliV07}, and only recently computationally efficient constant factor approximations for special cases were obtained~\cite{ChawlaHK07,BhattacharyaGGM10,ChawlaHMS10,Alaei11,KleinbergW12,RoughgardenTY12}. These results cover settings where the bidders are unit-demand and the seller has matroid or matroid-intersection constraints on which bidders can simultaneously receive items, or the case of additive-capacitated bidders, i.e. settings that are special cases of the MDMDP framework.\footnote{In some of these results, bidders may also have  budget constraints (and this does not directly fit in the MDMDP framework). Nevertheless, budgets can be easily incorporated to the framework without any loss, as was shown in~\cite{CaiDW12b}.} More recently, computationally efficient optimal solutions were obtained for even more restricted cases of MDMDP~\cite{AlaeiFHHM12,CaiDW12,DW12}, until a general, computationally efficient reduction  from revenue to welfare optimization was given in~\cite{CaiDW12b}. This result offers the analog of Myerson's result for multi-dimensional settings. Nevertheless, the question still remained whether there is an approximation preserving reduction from revenue  to (not necessarily truthful) welfare optimization. This reduction is precisely what this work provides, resulting in approximately optimal solutions to MDMDP for all settings where maximizing welfare is intractable, but approximately optimizing welfare (without concern for truthfulness) is tractable.


\subsubsection{Black-Box Reductions in Mechanism Design.}

Our reduction from approximate revenue optimization to non-truthful welfare approximation is a black-box reduction. Such reductions have been a recurring theme in mechanism design literature but only for {\em welfare}, where approximation-preserving reductions from truthful welfare maximization to non-truthful welfare maximization have been provided~\cite{BriestKV05,BabaioffLP06,HartlineL10,DughmiR10,BeiH11,HartlineKM11}. The techniques used here are orthogonal to the main techniques of these works. In the realm of black-box reductions in mechanism design, our work is best viewed as ``catching up'' the field of revenue maximization to welfare maximization, for the settings covered by the MDMDP framework.

\subsubsection{Weird Separation Oracle, Approximation, and Revenue Optimization.}

Gr{\"o}tschel et al.~\cite{GLS} show that exactly optimizing any linear function over a bounded polytope $P$ is equivalent to having a separation oracle for $P$. This is known as the equivalence of exact separation and optimization. Jansen extends this result to accommodate approximation \cite{Klaus}. He shows that given an approximation algorithm $\mathcal{A}$ such that for any direction $\vec{w}$, $\mathcal{A}$ returns an approximately extreme point $\mathcal{A}(\vec{w})\in P$, where $\mathcal{A}(\vec{w})\cdot \vec{w} \geq \alpha \cdot \max_{\vec{v} \in P}\{ \vec{w} \cdot \vec{v} \}$, one can construct a strong, approximate separation oracle which either asserts that a given point $\vec{x}\in P$ or outputs a hyperplane that separates $\vec{x}$ from $\alpha P$ (the polytope $P$ shrunk by $\alpha$). We show a similar but stronger result. Under the same conditions, our weird separation oracle either outputs a hyperplane separating $\vec{x}$ from a polytope $P_1$ that contains $\alpha P$, or asserts that $\vec{x} \in P_2$, where $P_2$ is a polytope contained in $P$. A precise definition of $P_1$ and $P_2$ is given in Section~\ref{subsec:threepolytopes}. Moreover, for any point $\vec{x}$ that the weird separation oracle asserts is in $P_2$, we show how to decompose it into a convex combination of points of the form $\mathcal{A}(\vec{w})$. This is crucial for us, as our goal is not just to find an approximately optimal reduced form, but also to implement it. The technology of~\cite{Klaus} is not enough to accomplish this, which motivates our stronger results. 

But there is another, crucial reason that prevents using the results of~\cite{Klaus}, and for that matter~\cite{GLS} (for the case $\alpha=1$), as a black box for our purposes. Given a computationally efficient, $\alpha$-approximate social-welfare algorithm $A$ for feasibility constraints ${\cal F}$, we are interested in obtaining a separation oracle for the polytope $P=F({\cal F}, {\cal D})$ of feasible interim allocation rules of auctions that respect ${\cal F}$ when bidder types are drawn from distribution ${\cal D}$. To use~\cite{Klaus} we need to use $A$ to come up with an $\alpha$-approximate linear optimization algorithm for $P$. But, in fact, we do not know how to find such an algorithm efficiently for general ${\cal F}$, due to the exponentiality of the support of ${\cal D}$ (which is a product distribution over ${\cal D}_1,\ldots,{\cal D}_m$). Indeed, given $\vec{w}$ we only know how to query $A$ to obtain some $\pi^*(\vec{w})$ such that $\pi^*(\vec{w}) \cdot \vec{w} \geq \alpha \cdot \max_{\vec{\pi} \in P}\{ \vec{w} \cdot \vec{\pi} \} - \epsilon$, for some small $\epsilon >0$. This additive approximation error that enters the approximation guarantee of our linear optimization algorithm is not compatible with using the results of~\cite{Klaus} or~\cite{GLS} as a black box, and requires us to provide our own separation to optimization reduction, together with additional optimization tools.

\section{Preliminaries and notation}\label{sec:notation}
\subsection{MDMDP.}

Here are preliminaries and notation regarding the MDMDP aspect of our results. We use the same notation as~\cite{CaiDW12b}. Denote the number of bidders by $m$, the number of items by $n$, and the type space of bidder $i$ by $T_i$. To ease notation, we sometimes use $B$ ($C$, $D$, etc.) to denote possible types of a bidder (i.e. elements of $T_i$), and use $t_i$ for the random variable representing the instantiated type of bidder $i$. So when we write $\Pr[t_i = B]$, we mean the probability that bidder $i$'s type is $B$. The elements of $\times_i T_i$ are called {\em type profiles}, and specify a type for every bidder. We assume type profiles are sampled from a known distribution ${\cal D}$ over $\times_i T_i$. We denote by ${\cal D}_i$ the marginal of this distribution on bidder $i$'s type, and use {${\cal D}_{-i}(B)$} to denote the marginal of ${\cal D}$ over the types of all bidders except $i$, conditioned on $t_i = B$. If $\mathcal{D}$ is a product distribution, we will drop the parameter $B$ and just write $\mathcal{D}_{-i}$. We refer the reader to Appendix~\ref{sec:input distribution} for a discussion on how an algorithm might access the distribution $\mathcal{D}$. 

Let $ [m]\times[n]$ denote the set of possible \emph{assignments} (i.e. the element $(i,j)$ denotes that bidder $i$ was awarded item $j$). {We call {(distributions over)} subsets of $[m]\times [n]$ {(randomized)} \emph{allocations}, and functions mapping {type} profiles to {(possibly randomized)} allocations \emph{allocation rules}. We call an allocation combined with a price charged to each bidder an \emph{outcome}, and an allocation rule combined with a pricing rule a {(direct revelation, or direct)} \emph{mechanism}.} {As discussed in Section~\ref{sec:intro}, we may also have a set system $\mathcal{F}$ on $[m]\times[n]$ (that is, a subset of $2^{[m] \times [n]}$), encoding constraints on what allocations are feasible. ${\cal F}$ may be incorporating arbitrary demand constraints imposed by each bidder, and supply constraints imposed by the seller, and will be referred to as our {\em feasibility constraints}. In this case, we restrict all allocation rules to be supported on ${\cal F}$. We always assume that $\emptyset \in {\cal F}$, i.e. the auctioneer has the option to allocate no item.}

The interim allocation rule, also called \emph{reduced form} of an allocation rule,  is a vector function $\pi(\cdot)$, specifying values $\pi_{ij}(B)$, for all items $j$, bidders $i$ and types $B \in T_i$. $\pi_{ij}(B)$ is the probability that bidder $i$ receives item $j$ when truthfully reporting type $B$, where the probability is over the randomness of all other bidders' types {(drawn from ${\cal D}_{-i}(B)$)} and the internal randomness of the allocation rule, assuming that the other bidders report truthfully their types. Sometimes, we will want to think of the reduced form as a $n\sum_{i=1}^{m} |T_i|$-dimensional vector, and may write $\vec{\pi}$ to emphasize this view. To ease notation we will also denote by $T := n\sum_i |T_i|$.

Given a reduced form $\pi$, we will be interested in whether the form is ``feasible'', or can be ``implemented.''  By this we mean designing a feasible allocation rule $M$ (i.e. one that respects feasibility constraints ${\cal F}$ on every type profile with probability $1$ over the randomness of the allocation rule) such that the probability that bidder $i$ receives item $j$ when truthfully reporting type $B$ is exactly $\pi_{ij}(B)$, where the probability is computed with respect to the randomness in the allocation rule and the randomness in the types of the other bidders (drawn from ${\cal D}_{-i}(B)$), assuming that the other bidders report truthfully.  While viewing reduced forms as vectors, we denote by $F(\mathcal{F},\mathcal{D})$ the set of feasible reduced forms when the feasibility constraints are $\mathcal{F}$ and bidder types are sampled from $\mathcal{D}$. 

A bidder is {\em additive} if her value for a bundle of items is the sum of her values for the items in that bundle. To specify the preferences of additive bidder $i$, we can provide a valuation vector $\vec{v}_i$, with the convention that $v_{ij}$ represents her value for item $j$.  Even in the presence of arbitrary (possibly combinatorial) demand constraints, the \emph{value} of bidder $i$ of type $\vec{v}_i$ for a randomized allocation that respects the bidder's demand constraints with probability $1$, and whose expected probability of allocating item $j$ to the bidder is $\pi_{ij}$, is just the bidder's expected value, namely $\sum_j v_{ij} \cdot \pi_{ij}$. The \emph{utility} of bidder $i$ for the same allocation when paying price $p_i$ is just $\sum_j v_{ij} \cdot \pi_{ij} - p_i$. Such bidders whose value for a distribution of allocations is their expected value for the sampled allocation are called \emph{risk-neutral}. Bidders subtracting price from expected value are called \emph{quasi-linear}. 

Throughout the paper we denote by $A$ a (possibly randomized, non-truthful) social welfare algorithm that achieves an $\alpha$-fraction of the optimal welfare for feasibility constraints $\mathcal{F}$. We denote by $A(\{f_i\}_i)$ the virtual implementation of $A$ with virtual transformations $f_i$ (see Definition~\ref{def: virtual implementation}).

Some arguments will involve reasoning about the \emph{bit complexity} of a rational number. We say that a rational number has bit complexity $b$ if it can be written with a binary numerator and denominator that each have at most $b$ bits. {We also take the bit complexity of a rational vector to be the total number of bits required to describe its coordinates. Similarly, the bit complexity of an explicit distribution supported on rational numbers with rational probabilities is the total number of bits required to describe the points in the support of the distribution and the probabilities assigned to each point in the support. For our purposes the bidder distributions ${\cal D}_1, \ldots, {\cal D}_m$ are given explicitly, while ${\cal D}=\times_i {\cal D}_i$ is described implicitly as the product of ${\cal D}_1,\ldots,{\cal D}_m$. }

Finally, for completeness, we define in Appendix~\ref{app:prelims} the standard notions of Bayesian Incentive Compatibility (BIC) and Individual Rationality (IR) of mechanisms.

\subsection{Weird Separation Oracles.}
In our technical sections, we will make use of ``running the ellipsoid algorithm with a weird separation oracle.'' A weird separation oracle is just an algorithm that, on input $\vec{x}$, either outputs ``yes,'' or a hyperplane that $\vec{x}$ violates. We call it ``weird'' because the set of points that will it accepts is not necessarily convex, or even connected, so it is not a priori clear what it means to run the ellipsoid algorithm with a weird separation oracle. When we say ``run the ellipsoid algorithm with a weird separation oracle'' we mean:

\begin{enumerate}
\item Find a meaningful ellipsoid to start with (this will be obvious for all weird separation oracles we define, so we will not explicitly address this).
\item Query the weird separation oracle on the center of the current ellipsoid. If it is accepted, output it as a feasible point. Otherwise, update the ellipsoid using the violated hyperplane (in the same manner that the standard ellipsoid algorithm works).
\item Repeat step 2) for a pre-determined number of iterations $N$ ($N$ will be chosen appropriately for each weird separation oracle we define). If a feasible point is not found after $N$ iterations, output ``infeasible.''
\end{enumerate}

It is also important to note that we are \emph{not} using the ellipsoid algorithm as a means to learning whether some non-convex set is empty. We are using properties of the ellipsoid algorithm with carefully chosen weird separation oracles to learn information, not necessarily related to a feasibility question.
\section{The Weird Separation Oracle (WSO)}\label{sec:WSO}
	In this section, we take a detour from mechanism design, showing how to construct a weird separation oracle from an algorithm that approximately optimizes linear functions over a convex polytope. Specifically, let $P$ be a bounded polytope containing the origin, and let $\mathcal{A}$ be any algorithm that takes as input a linear function $f$ and outputs a point $\vec{x} \in P$ that approximately optimizes $f$ (over $P$). We will define our weird separation oracle using black-box access to $\mathcal{A}$ and prove several useful properties that will be used in future sections. We begin by discussing three interesting convex regions related to $P$ in Section~\ref{subsec:threepolytopes}. This discussion provides insight behind why we might expect $WSO$ to behave reasonably. In addition, the polytopes discussed will appear in later proofs. In Section~\ref{subsec:WSO} we define $WSO$ formally and prove several useful facts about executing the ellipsoid algorithm with $WSO$. For this section, we will not address running times, deferring this to Section~\ref{sec:runtime}. Our basic objects for this section are encapsulated in the following definition. 

\begin{definition}\label{def:avcg}
		$P$ is a convex $d$-dimensional polytope contained in $[-1,1]^d$, $\alpha\leq 1$ is an absolute constant, and $\cal{A}$ is an approximation algorithm such that  for any $\vec{w}\in \mathbb{R}^d$, $\mathcal{A}(\vec{w})\in P$ and $\mathcal{A}(\vec{w})\cdot \vec{w} \geq \alpha\cdot\max_{\vec{x}\in P}\{ \vec{x}\cdot\vec{w}\}$. (Since $\vec{0}\in P$, this is always non-negative.)\end{definition}
	
Notice that the restriction that $P \subseteq [-1,1]^d$ is without loss of generality as long as $P$ is bounded, as in this section we deal with multiplicative approximations. 


\subsection{Three Convex Regions.}\label{subsec:threepolytopes}
Consider the following convex regions, where $Conv(S)$ denotes the convex hull of $S$.
		
\begin{itemize}
		\item $P_0=\{\vec{\pi}\ |\ \frac{\vec{\pi}}{\alpha}\in P\}$.
		\item $P_1 = \{\vec{\pi}\ |\ \vec{\pi}\cdot\vec{w}\leq \mathcal{A}(\vec{w})\cdot\vec{w},\ \forall \vec{w}\in[-1,1]^d\}$.
		\item $P_2 = Conv(\{\mathcal{A}(\vec{w}),\ \forall \vec{w}\in[-1,1]^d\})$. 
\end{itemize}

It is not hard to see that, if $\mathcal{A}$ always outputs the exact optimum (i.e. $\alpha = 1$), then all three regions are the same. It is this fact that enables the equivalence of separation and optimization~\cite{GLS}. It is not obvious, but perhaps not difficult to see also that if $\mathcal{A}$ is a maximal-in-range algorithm,\footnote{Let $S$ denote the set of vectors that are ever output by $\mathcal{A}$ on any input. Then $\mathcal{A}$ is maximal-in-range if, for all $\vec{w}$, $\mathcal{A}(\vec{w}) \in \argmax_{\vec{x} \in S}\{\mathcal{A}(\vec{x}) \cdot \vec{w} \}$.} then $P_1 = P_2$. It turns out that in this case, $WSO$ (as defined in Section~\ref{subsec:WSO}) actually defines a polytope. We will not prove this as it is not relevant to our results, but it is worth observing where the complexity comes from. We conclude this section with a quick lemma about these regions, whose proof appears in Appendix~\ref{app:threepolytopes}.

\begin{lemma}\label{lem:chain}
	$P_0\subseteq P_1\subseteq P_2$.
	\end{lemma}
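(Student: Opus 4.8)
The plan is to establish the two inclusions $P_0\subseteq P_1$ and $P_1\subseteq P_2$ separately; the first is an immediate consequence of the approximation guarantee of $\mathcal{A}$, and the second is a separating-hyperplane argument.

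\textbf{The inclusion $P_0\subseteq P_1$.} Let $\vec\pi\in P_0$, i.e.\ $\vec\pi/\alpha\in P$, and fix an arbitrary $\vec w\in[-1,1]^d$. Since $\alpha>0$ we may scale the trivial bound for $\vec\pi/\alpha\in P$ by $\alpha$, obtaining
\[
\vec\pi\cdot\vec w \;=\; \alpha\cdot\Big(\tfrac{\vec\pi}{\alpha}\cdot\vec w\Big)\;\le\;\alpha\cdot\max_{\vec x\in P}\{\vec x\cdot\vec w\}\;\le\;\mathcal{A}(\vec w)\cdot\vec w,
\]
where the last inequality is exactly the defining property of $\mathcal{A}$ in Definition~\ref{def:avcg}. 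As $\vec w$ was arbitrary, $\vec\pi\in P_1$.

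\textbf{The inclusion $P_1\subseteq P_2$.} I would argue by contradiction. Suppose $\vec\pi\in P_1$ but $\vec\pi\notin P_2$. Note first that $P_2\subseteq P\subseteq[-1,1]^d$, since $P_2$ is the convex hull of points of the convex set $P$; in particular $P_2$ is bounded (in the settings where we apply the lemma $\mathcal{A}$ has finite range, so $P_2$ is a closed polytope; in general one works with the compact convex set $\overline{P_2}$ and concludes $\vec\pi\in\overline{P_2}$, which is all the later sections need). Applying the strict separating hyperplane theorem to the point $\vec\pi$ and the (closed) convex set $P_2$ yields a nonzero vector $\vec u$ with $\vec\pi\cdot\vec u>\sup_{\vec y\in P_2}\vec y\cdot\vec u$; after rescaling we may assume $\|\vec u\|_2=1$, so in particular $\vec u\in[-1,1]^d$. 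Since $\mathcal{A}(\vec u)\in P_2$ by construction of $P_2$, this gives $\vec\pi\cdot\vec u>\mathcal{A}(\vec u)\cdot\vec u$, contradicting the defining inequalities of $P_1$ instantiated at $\vec w=\vec u\in[-1,1]^d$. Hence $\vec\pi\in P_2$.

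\textbf{Main obstacle.} The only genuine subtlety is matching the domain of the separating direction to the quantifier in the definition of $P_1$: the normal vector produced by convex separation can be scaled to have Euclidean norm $1$, and such a vector automatically lies in the cube $[-1,1]^d$ over which the constraints defining $P_1$ range. A secondary, minor point is whether $P_2$ is literally closed, which is automatic when $\mathcal{A}$ has finite range (our case) and otherwise is handled by passing to $\overline{P_2}$.
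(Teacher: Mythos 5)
Your proof is correct and matches the paper's argument in substance. The first inclusion is word-for-word the paper's reasoning (scale the $P$-membership of $\vec\pi/\alpha$ by $\alpha$ and invoke the $\alpha$-approximation guarantee of $\mathcal{A}$). For the second inclusion, the paper simply cites the dual characterization ``$\vec\pi\in \mathrm{Conv}(S)$ iff for all $\vec w\in[-1,1]^d$ there is $\vec x(\vec w)\in S$ with $\vec\pi\cdot\vec w\le\vec x(\vec w)\cdot\vec w$'' and instantiates $\vec x(\vec w)=\mathcal{A}(\vec w)$; you derive that same characterization from scratch via a separating-hyperplane contradiction and a normalization step to land the separating direction inside the cube $[-1,1]^d$. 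That is the same idea unpacked one level deeper, not a genuinely different route. Your explicit observation that a unit-$\ell_2$ normal automatically lies in $[-1,1]^d$, and your aside about closedness of $P_2$ (automatic here since $\mathcal{A}$ has finite range), are the only places where you supply detail the paper leaves implicit, and both are correct.
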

	
	\subsection{WSO.}\label{subsec:WSO}
	
	Before defining $WSO$, let's state the properties we want it to have. First, for any challenge $\vec{\pi}$, $WSO$ should either assert $\vec{\pi}\in P_{2}$ or output a hyperplane separating $\vec{\pi}$ from $P_{1}$. Second, for any $\vec{\pi}$ such that $WSO(\vec{\pi})=``yes''$, we should be able to decompose $\vec{\pi}$ into a convex combination of points of the form $\mathcal{A}(\vec{w})$. Why do we want these properties? Our goal in later sections is to write a LP that will use $WSO$ for $F(\mathcal{F},\mathcal{D})$ to find a reduced form auction whose revenue is at least $\alpha \opt$. Afterwards, we have to find an actual mechanism that implements this reduced form. So $WSO$ needs to guarantee two things: First, running a revenue maximizing LP with $WSO$ must terminate in a reduced form with good revenue. Second, we must be able to implement any reduced form that $WSO$ deems feasible. Both claims will be proved in Section~\ref{sec:revenue} using lemmas proved here. That using $WSO$ achieves good revenue begins with Fact~\ref{fact:WSOpolytope}. That we can implement any reduced form deemed feasible by $WSO$ begins with Lemma~\ref{lem:convhull}. We define $WSO$ in Figure~\ref{fig:WSO}. 
	
		\begin{figure}[h!]
	\colorbox{MyGray}{
	\begin{minipage}{0.46\textwidth} {	
		$WSO(\vec{\pi})=$
	\begin{itemize}	
	\item ``\textbf{Yes}'' if the ellipsoid algorithm with $N$ iterations\footnote{The appropriate choice of $N$ for our use of $WSO$ is provided in Corollary~\ref{cor:N} of Section~\ref{sec:runtime}. The only place that requires an appropriate choice of $N$ is the proof of Lemma~\ref{lem:convhull}.} outputs ``infeasible'' on the following problem:
			 
			  \underline{\textbf{variables:}} $\vec{w}, t$;
			 
			 \underline{\textbf{constraints:}}
				\begin{itemize}
		 		\item $\vec{w} \in [-1,1]^d;$
		 		\item $t - \vec{\pi} \cdot \vec{w} \leq -\delta$;\footnote{The appropriate choice of $\delta$ for our use of $WSO$ is provided in Lemma~\ref{lem:delta} of Section~\ref{sec:runtime}. The only place that requires an appropriate choice of $\delta$ is the proof of Lemma~\ref{lem:convhull}.}
		 		\item $\widehat{WSO}(\vec{w},t) = $
		 		\begin{itemize}
		 		\item ``yes'' if $t \geq \mathcal{A}(\vec{w}) \cdot \vec{w}$;\footnote{Notice that the set $\{(\vec{w},t)|\widehat{WSO}(\vec{w},t) =$ ``Yes''$\}$ is not necessarily convex or even connected.}
		 		\item the violated hyperplane $t' \geq \mathcal{A}(\vec{w})\cdot \vec{w}'$ otherwise.
		 		\end{itemize}
				 		\end{itemize}
	\item If a feasible point $(t^*,\vec{w}^*)$ is found, output the violated hyperplane $\vec{w}^* \cdot \vec{\pi}' \leq t^*$.
	\end{itemize}		}
			\end{minipage}} \caption{A ``weird'' separation oracle.}\label{fig:WSO}
	\end{figure}
	
	Let's now understand what exactly $WSO$ is trying to do. What $WSO$ really wants is to act as a separation oracle for $P_2$. As $P_2$ is a polytope, if $\vec{\pi} \notin P_2$, then there is some weight vector $\vec{w}$ such that $\vec{\pi} \cdot \vec{w} > \max_{\vec{x} \in P_2}\{\vec{x} \cdot \vec{w}\}$. $WSO$ wants to find such a weight vector or prove that none exists (and therefore $\vec{\pi} \in P_2$). It is shown in~\cite{GLS} that if we replace $\mathcal{A}(\vec{w})$ with $\argmax_{\vec{x} \in P_2} \{\vec{x} \cdot \vec{w} \}$ inside $\widehat{WSO}$, then $WSO$ would be a separation oracle for $P_2$.	Unfortunately, unless $\mathcal{A}$ is maximal-in-range, we cannot find $\argmax_{\vec{x} \in P_2} \{\vec{x} \cdot \vec{w} \}$ with only black-box access to $\mathcal{A}$.\footnote{If $\mathcal{A}$ is maximal-in-range, then this is exactly $\mathcal{A}(\vec{w})$.} So $WSO$ makes its best guess that $\mathcal{A}(\vec{w})$ is the maximizer it is looking for. Of course, this is not necessarily the case, and this is why the set of points accepted by $WSO$ is not necessarily a convex region. Now, we need to prove some facts about $WSO$ despite this. All proofs can be found in Appendix~\ref{subapp:WSO}.
	
	\begin{fact}\label{fact:WSOpolytope} 	Consider an execution of the ellipsoid algorithm using $WSO$, possibly together with additional variables and constraints. Let $Q$ be the polytope defined by the halfspaces output by $WSO$ during its execution. Then during the entire execution, $P_1 \subseteq Q$.
	\end{fact}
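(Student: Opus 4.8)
The plan is to show that every halfspace that $WSO$ can ever output is a valid inequality for $P_1$ — i.e.\ is satisfied by every point of $P_1$ — and then observe that, since $Q$ is by definition the intersection of the halfspaces output during the execution, this immediately yields $P_1 \subseteq Q$; applying the same observation after each query shows that containment is maintained throughout the run.

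So I would fix an arbitrary query $\vec{\pi}$ on which $WSO$ does \emph{not} answer ``Yes.'' By the description in Figure~\ref{fig:WSO}, this happens precisely when the inner ellipsoid run finds a feasible point $(\vec{w}^*, t^*)$ of the inner feasibility problem, in which case $WSO$ outputs the halfspace $\vec{w}^* \cdot \vec{\pi}' \leq t^*$. Being feasible for the inner problem, $(\vec{w}^*, t^*)$ satisfies all its constraints; in particular $\vec{w}^* \in [-1,1]^d$ and $\widehat{WSO}(\vec{w}^*, t^*) = $ ``yes,'' which by the definition of $\widehat{WSO}$ means exactly $t^* \geq \mathcal{A}(\vec{w}^*)\cdot\vec{w}^*$. (It is worth emphasizing that when the inner ellipsoid \emph{reports} a feasible point, that point genuinely satisfies every constraint — including $t^* \geq \mathcal{A}(\vec{w}^*)\cdot\vec{w}^*$ — because $\widehat{WSO}$ together with the checks for the linear constraints $\vec{w}\in[-1,1]^d$ and $t - \vec{\pi}\cdot\vec{w}\leq -\delta$ constitute an exact separation oracle for the inner region; the finite iteration bound $N$ affects only the ``infeasible'' verdict.)

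Now take any $\vec{\rho} \in P_1$. Since $\vec{w}^* \in [-1,1]^d$, the defining inequalities of $P_1$ give $\vec{\rho}\cdot\vec{w}^* \leq \mathcal{A}(\vec{w}^*)\cdot\vec{w}^*$; combining with $t^* \geq \mathcal{A}(\vec{w}^*)\cdot\vec{w}^*$ yields $\vec{\rho}\cdot\vec{w}^* \leq t^*$, so $\vec{\rho}$ satisfies $\vec{w}^* \cdot \vec{\pi}' \leq t^*$. As $\vec{\rho}$ and the query were arbitrary, every halfspace ever produced by $WSO$ contains $P_1$, hence $P_1$ lies in their intersection $Q$ at all times. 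The ``additional variables and constraints'' clause does not interfere: the halfspaces produced by $WSO$ constrain only the $\vec{\pi}$-coordinates, so the argument is unchanged — one simply reads $Q$ as the corresponding cylinder over the extra coordinates (or its projection onto $\vec{\pi}$-space), against which $P_1$, a polytope in $\vec{\pi}$-space, is compared.

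I do not anticipate a real obstacle here; the proof is a direct unpacking of the definitions of $P_1$, $WSO$, and $\widehat{WSO}$. The only points that need care are the two flagged above: that the feasible point returned by the inner ellipsoid is \emph{exactly} feasible, so that $t^* \geq \mathcal{A}(\vec{w}^*)\cdot\vec{w}^*$ holds without slack, and that the orientation of the output halfspace is the one keeping $P_1$ (and indeed the ``truthful'' region $P_2 \supseteq P_1$, via Lemma~\ref{lem:chain}) on the feasible side — both of which follow immediately from the specification in Figure~\ref{fig:WSO}.
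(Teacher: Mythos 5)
Your proof is correct and follows essentially the same argument as the paper's: any halfspace output by $WSO$ has the form $\vec{w}^*\cdot\vec{\pi}'\le t^*$ with $(\vec{w}^*,t^*)$ accepted by $\widehat{WSO}$, so $t^*\ge\mathcal{A}(\vec{w}^*)\cdot\vec{w}^*$, and since $\vec{w}^*\in[-1,1]^d$ the defining inequalities of $P_1$ give $\vec{\rho}\cdot\vec{w}^*\le\mathcal{A}(\vec{w}^*)\cdot\vec{w}^*\le t^*$ for all $\vec{\rho}\in P_1$. The extra remarks you flag (exact feasibility of the returned inner point, and the cylinder interpretation for additional variables) are sound clarifications but do not change the substance.
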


\begin{fact}\label{fact:P1} If $\vec{\pi} \in P_1$, then $WSO(\vec{\pi}) = $ ``yes.''
\end{fact}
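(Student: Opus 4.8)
The plan is to argue by contradiction, reducing the claim to a short chain of inequalities. Recall from Figure~\ref{fig:WSO} that $WSO(\vec{\pi})$ returns ``yes'' exactly when the inner ellipsoid run reports ``infeasible,'' i.e. when that run finds \emph{no} point $(\vec{w},t)$ satisfying all three inner constraints; otherwise $WSO$ outputs a hyperplane built from some feasible point $(\vec{w}^*,t^*)$ it did find. So I would assume $\vec{\pi}\in P_1$ but $WSO(\vec{\pi})\neq$ ``yes,'' fix the feasible point $(\vec{w}^*,t^*)$ produced by the inner run, and derive a contradiction with the definition of $P_1$.

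The key step is to read off what feasibility of $(\vec{w}^*,t^*)$ for the inner problem gives us. By the three constraints in Figure~\ref{fig:WSO}: $\vec{w}^*\in[-1,1]^d$; $t^*-\vec{\pi}\cdot\vec{w}^*\leq-\delta$, i.e. $t^*\leq\vec{\pi}\cdot\vec{w}^*-\delta$; and $\widehat{WSO}(\vec{w}^*,t^*)=$ ``yes,'' which by definition of $\widehat{WSO}$ means $t^*\geq\mathcal{A}(\vec{w}^*)\cdot\vec{w}^*$. Here I rely on the fact that the inner ellipsoid run only declares a point feasible after verifying \emph{all} of its constraints, in particular the two ordinary linear ones $\vec{w}\in[-1,1]^d$ and $t\leq\vec{\pi}\cdot\vec{w}-\delta$, which are checked before $\widehat{WSO}$ is invoked. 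Combining these, and using that $\delta>0$ for the choice of $\delta$ in play (Lemma~\ref{lem:delta}),
\[
  \mathcal{A}(\vec{w}^*)\cdot\vec{w}^* \;\leq\; t^* \;\leq\; \vec{\pi}\cdot\vec{w}^*-\delta \;<\; \vec{\pi}\cdot\vec{w}^*.
\]
Thus $\vec{w}^*\in[-1,1]^d$ is a direction witnessing $\vec{\pi}\cdot\vec{w}^*>\mathcal{A}(\vec{w}^*)\cdot\vec{w}^*$, which contradicts $\vec{\pi}\in P_1=\{\vec{\pi}\,:\,\vec{\pi}\cdot\vec{w}\leq\mathcal{A}(\vec{w})\cdot\vec{w}\ \ \forall\vec{w}\in[-1,1]^d\}$. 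Hence no such feasible point exists, the inner run reports ``infeasible,'' and $WSO(\vec{\pi})=$ ``yes.''

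I do not expect a genuine obstacle here: once the inner notion of ``feasible point'' is unpacked correctly, the rest is a two-line inequality chain. The only things to be careful about are (i) that the two ordinary constraints are genuinely enforced on any point the inner run certifies (so that $t^*\leq\vec{\pi}\cdot\vec{w}^*-\delta$ may legitimately be used), and (ii) that $\delta>0$, which is exactly what makes the final inequality strict; both are guaranteed by the construction in Figure~\ref{fig:WSO} and the choice of $\delta$ from Lemma~\ref{lem:delta}. Note in particular that the argument never uses convergence of the inner ellipsoid run, nor any property of the iteration bound $N$, nor convexity of the set accepted by $\widehat{WSO}$: it uses only the soundness of any feasibility declaration the run makes, which is precisely why it goes through despite the ``weirdness'' of $\widehat{WSO}$.
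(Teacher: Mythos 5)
Your proof is correct and follows exactly the same contrapositive argument as the paper: a feasible point $(t^*,\vec{w}^*)$ found by the inner ellipsoid run yields $\mathcal{A}(\vec{w}^*)\cdot\vec{w}^* \leq t^* < \vec{\pi}\cdot\vec{w}^*$ with $\vec{w}^*\in[-1,1]^d$, contradicting $\vec{\pi}\in P_1$. Your added remark that the strict inequality comes from $\delta>0$ is a nice bit of explicitness the paper leaves implicit.
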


\begin{corollary}\label{cor:WSOandP1}
When $WSO$ rejects $\vec{\pi}$, it acts as a valid separation oracle for $P_1$, or any polytope contained in $P_1$ (i.e. the hyerplane output truly separates $\vec{\pi}$ from $P_1$). In other words, the only difference between $WSO$ and a valid separation oracle for $P_1$ is that $WSO$ may accept points outside of $P_1$.
\end{corollary}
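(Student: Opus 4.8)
The plan is to obtain this essentially for free from Fact~\ref{fact:WSOpolytope} together with Fact~\ref{fact:P1}; the only thing to check is that the hyperplane $WSO$ produces when it rejects is simultaneously (i) violated by the challenge point $\vec{\pi}$ and (ii) valid for all of $P_1$. Property (i) holds by construction: $WSO$ is a separation oracle in the syntactic sense, i.e. whenever it outputs a hyperplane it outputs one that $\vec{\pi}$ violates (this is just the last line of Figure~\ref{fig:WSO}). So the whole content is property (ii), and given that, ``separates $\vec{\pi}$ from $P_1$'' is immediate, and ``separates $\vec{\pi}$ from any polytope $\subseteq P_1$'' follows a fortiori.

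To verify (ii) I would unpack the definition in Figure~\ref{fig:WSO}. Suppose $WSO(\vec{\pi})$ rejects; then the inner ellipsoid run found a feasible point $(\vec{w}^*,t^*)$ of its subproblem, meaning $\vec{w}^*\in[-1,1]^d$, $\;t^*-\vec{\pi}\cdot\vec{w}^*\le-\delta$, and $\widehat{WSO}(\vec{w}^*,t^*)=$``yes'', i.e. $t^*\ge \mathcal{A}(\vec{w}^*)\cdot\vec{w}^*$. The hyperplane returned is $\{\vec{\pi}'\ :\ \vec{w}^*\cdot\vec{\pi}'\le t^*\}$. For any $\vec{\pi}'\in P_1$, the defining inequalities of $P_1$ (which range over all $\vec{w}\in[-1,1]^d$, hence in particular over $\vec{w}^*$) give $\vec{w}^*\cdot\vec{\pi}'\le \mathcal{A}(\vec{w}^*)\cdot\vec{w}^*\le t^*$, so $\vec{\pi}'$ lies in the returned halfspace; this is (ii). Meanwhile $\vec{w}^*\cdot\vec{\pi}\ge t^*+\delta>t^*$, so $\vec{\pi}$ is strictly outside it, re-confirming (i). Alternatively one can simply invoke Fact~\ref{fact:WSOpolytope} instantiated to the trivial one-query execution: the single halfspace output there already contains $P_1$.

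The ``in other words'' sentence then follows by combining this with Fact~\ref{fact:P1}: that fact says every point of $P_1$ is accepted by $WSO$, while the argument above shows every point that $WSO$ rejects is genuinely outside $P_1$; hence the accepting region of $WSO$ is a superset of $P_1$, and a valid separation oracle for $P_1$ differs from $WSO$ only in that $WSO$ may additionally accept points outside $P_1$ (e.g. points of $P_2\setminus P_1$). I do not anticipate any real obstacle — the corollary is a bookkeeping consequence of the two preceding facts — the only point needing a moment's care is that Fact~\ref{fact:WSOpolytope} is phrased for a full ellipsoid execution, so one should either specialize it to a single query or, as above, re-derive the one-line inequality directly from the definition of $P_1$.
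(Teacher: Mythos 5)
Your proposal is correct and takes essentially the same route as the paper: the paper's proof is exactly the two-line invocation of Fact~\ref{fact:P1} (to get that the rejected point is outside $P_1$) and Fact~\ref{fact:WSOpolytope} (to get that the output halfspace contains $P_1$), which is the combination you describe; your explicit re-derivation of the one-line inequality is a harmless expansion of Fact~\ref{fact:WSOpolytope} rather than a different argument.
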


	\begin{lemma}\label{lem:convhull} Let $WSO(\vec{\pi}) = $ ``yes'' and let $S$ denote the set of weights $\vec{w}$ such that $WSO$ queried $\widehat{WSO}(\vec{w},t)$ for some $t$ during its execution. Then $\vec{\pi} \in Conv(\{\mathcal{A}(\vec{w}) | \vec{w} \in S\})$. 
	\end{lemma}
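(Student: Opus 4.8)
\textbf{Proof proposal for Lemma~\ref{lem:convhull}.}

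The plan is to exploit the fact that when $WSO(\vec{\pi}) = $ ``yes,'' the inner ellipsoid run (on the variables $\vec{w}, t$) returned ``infeasible'' after $N$ iterations. Consider the polytope $Q'$ carved out by the hyperplanes that $\widehat{WSO}$ output during that inner run: every time $\widehat{WSO}(\vec{w},t)$ said ``no'' it returned the halfspace $t' \geq \mathcal{A}(\vec{w}) \cdot \vec{w}'$. Together with the fixed constraints $\vec{w} \in [-1,1]^d$ and $t - \vec{\pi} \cdot \vec{w} \leq -\delta$, these halfspaces define the region the ellipsoid was searching, and since it certified infeasibility, the intersection of all of them is empty (here we need the choice of $N$ from Corollary~\ref{cor:N} and of $\delta$ from Lemma~\ref{lem:delta} to guarantee that ``ran $N$ steps without finding a point'' genuinely implies emptiness — i.e. the feasible region, if nonempty, would contain a ball of radius bounded below, so $N$ iterations suffice; this is the one place the parameter choices matter). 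So the system
\[
\vec{w} \in [-1,1]^d,\qquad \vec{\pi} \cdot \vec{w} - t \geq \delta,\qquad t \geq \mathcal{A}(\vec{w}) \cdot \vec{w}\ \ \forall \vec{w} \in S
\]
is infeasible, where $S$ is exactly the set of weights queried.

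The key step is to convert this infeasibility into a membership statement via LP duality / Farkas, restricted to the queried weights. Define the polytope $P_S := Conv(\{\mathcal{A}(\vec{w}) \mid \vec{w} \in S\})$. I claim $\vec{\pi} \in P_S$, which is precisely the conclusion. Suppose not. Then by separation for the polytope $P_S$ there is a direction $\vec{u}$ with $\vec{\pi} \cdot \vec{u} > \max_{\vec{w} \in S} \mathcal{A}(\vec{w}) \cdot \vec{u}$; scaling $\vec{u}$ into $[-1,1]^d$ (legitimate since $P_S \subseteq [-1,1]^d$ and we only care about the sign of a strict inequality, and a small enough scaling keeps the gap positive — here is where $\delta$ being chosen small enough relative to the minimum achievable gap is used) we get a $\vec{w}^\star \in [-1,1]^d$ and a value $t^\star := \max_{\vec{w}\in S}\mathcal{A}(\vec{w})\cdot\vec{w}^\star$ with $t^\star \geq \mathcal{A}(\vec{w})\cdot\vec{w}^\star$ for all $\vec{w}\in S$ and $\vec{\pi}\cdot\vec{w}^\star - t^\star \geq \delta$. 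That is, $(\vec{w}^\star, t^\star)$ satisfies the fixed constraints and all the halfspaces $\widehat{WSO}$ ever output — contradicting that the inner ellipsoid certified infeasibility of that very system. Hence $\vec{\pi} \in P_S$.

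The main obstacle I anticipate is the quantitative bookkeeping around $\delta$ and $N$: one must argue that if $\vec{\pi}\notin P_S$ the separating direction can be taken with a gap bounded below by something like $\delta$ (so that the ``$\leq -\delta$'' constraint does not spuriously kill the witness), and dually that $N$ ellipsoid iterations are enough to certify emptiness of a rational polytope of the relevant bit-complexity — this is why the lemma's footnotes point to Lemma~\ref{lem:delta} and Corollary~\ref{cor:N}. The conceptual content, though, is just: the halfspaces $\widehat{WSO}$ emits are exactly the supporting halfspaces $\{t' \geq \mathcal{A}(\vec{w})\cdot\vec{w}'\}$ of the points $\mathcal{A}(\vec{w})$ at the queried weights, so a certificate that no $(\vec{w},t)$ escapes all of them while also satisfying $\vec{\pi}\cdot\vec{w} - t \geq \delta$ is, by Farkas, a certificate that $\vec{\pi}$ lies in (an arbitrarily good approximation of, hence in) the convex hull of those points.
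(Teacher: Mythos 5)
Your proposal is correct and follows essentially the same route as the paper's proof: by contrapositive, if $\vec{\pi}\notin Conv(\{\mathcal{A}(\vec{w})\mid \vec{w}\in S\})$ then a separating direction gives a feasible point $(t^\star,\vec{w}^\star)$ for the system $\{t-\vec{\pi}\cdot\vec{w}\le -\delta,\ t\ge \mathcal{A}(\vec{w}')\cdot\vec{w}\ \forall \vec{w}'\in S,\ \vec{w}\in[-1,1]^d\}$, while the choice of $N$ (Corollary~\ref{cor:N}) guarantees the ellipsoid's failure certifies that system is empty — contradiction. The only imprecision is the remark about "scaling $\vec{u}$ into $[-1,1]^d$": the paper's Lemma~\ref{lem:delta} instead bounds the gap below by passing to a corner of the auxiliary polytope $P'(S)$ and invoking its bit complexity (naive rescaling could shrink the gap below $\delta$), but since you correctly delegate that quantitative step to Lemma~\ref{lem:delta}, the argument stands.
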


	
\section{Approximately Maximizing Revenue using $WSO$}\label{sec:revenue}
In this section, we show that running the revenue maximizing LP of~\cite{CaiDW12b} using the weird separation oracle of the previous section obtains good revenue, and outputs a reduced form that can be implemented with only black-box access to a social welfare algorithm $A$. 

In brush strokes, the approach of~\cite{CaiDW12b} is the following. They start by creating a proxy distribution $\mathcal{D}'$ that is a (correlated across bidders) uniform distribution over $\poly(n,T,1/\epsilon)$ type profiles. {Roughly,} $\mathcal{D}'$ is obtained by sampling the same number of profiles from $\mathcal{D}$, and forming the uniform distribution over them, and its advantage over ${\cal D}$ is that its support is polynomial. With ${\cal D}'$ at hand, it shown that the LP of Figure~\ref{fig:revenue benchmark} in Appendix~\ref{app:revenue} outputs a reduced form whose revenue is at least $\opt - \epsilon$. This is proved by showing that the polytopes $F(\mathcal{F},\mathcal{D})$ and $F(\mathcal{F},\mathcal{D}')$ are ``$\epsilon$-close'' in some meaningful way. To show how we adapt this approach to our setting, we need a definition.

\begin{definition}\label{def:avcg2}
			 Let $\vec{w} \in \mathbb{R}^{T}$, and $\hat{\mathcal{D}}$ be a {(possibly correlated)} distribution over bidder type profiles. Define $f_i: T_i \rightarrow \mathbb{R}^n$ so that $f_{ij}(B) = \frac{w_{ij}(B)}{\Pr[t_i = B]}$. Then $A_{\hat{\cal{D}}}(\vec{w})$ denotes the allocation rule $A(\{f_i\}_i)$, $R^A_{\hat{\cal{D}}}(\vec{w})$ denotes the reduced form of $A_{\hat{\cal{D}}}(\vec{w})$, and $W^A_{\hat{\cal{D}}}(\vec{w}) := R^A_{\hat{\cal{D}}}(\vec{w}) \cdot \vec{w}$ is exactly the expected virtual welfare obtained by algorithm $A$ under the virtual transformations $\{f_i\}_i$. For the purpose of the dot product, recall that we may view reduced forms as $T$-dimensional vectors (Section~\ref{sec:notation}).
		\end{definition}
		
Given this definition, {and for the same ${\cal D}'$ used in~\cite{CaiDW12b},} we let $P=F(\mathcal{F},\mathcal{D}')$, and $\mathcal{A}(\vec{w})$ be the algorithm that on input $\vec{w}\in \mathbb{R}^{T}$ returns $R^A_{\cal{D}'}(\vec{w})$. Because taking a dot product with $\vec{w}$ is exactly computing expected virtual welfare (as in Definition~\ref{def:avcg2}), it is clear that $\mathcal{A}$ is an $\alpha$-factor approximation algorithm for optimizing any linear function $\vec{w} \cdot \vec{x}$ over $\vec{x} \in P$. Using ${\cal A}$, we define $P_0$, $P_{1}$ and $P_{2}$ as in Section~\ref{sec:WSO}.

We continue to bound the revenue of the reduced form output by our LP of Figure~\ref{fig:revenue benchmark}. Denote by $Rev(F)$ the revenue obtained by the LP of Figure~\ref{fig:revenue benchmark}, and by $Rev(P_i)$ the revenue obtained by replacing $P$ with $P_i$. The proof of Lemma~\ref{lem:goodRevenue}, as well as all other results of this section are in Appendix~\ref{app:revenue}.

\begin{lemma}\label{lem:goodRevenue} $Rev(P_0) \geq \alpha Rev(F) \geq \alpha (\opt - \epsilon)$.
\end{lemma}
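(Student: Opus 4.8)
\textbf{Proof proposal for Lemma~\ref{lem:goodRevenue}.}

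The plan is to establish the chain $Rev(P_0) \geq \alpha\, Rev(F) \geq \alpha(\opt - \epsilon)$ in two independent pieces, leaning on Lemma~\ref{lem:chain} and on the analysis of the LP of Figure~\ref{fig:revenue benchmark} from~\cite{CaiDW12b}. The second inequality, $Rev(F) \geq \opt - \epsilon$, is essentially inherited: the LP of Figure~\ref{fig:revenue benchmark} run over $P = F(\mathcal{F},\mathcal{D}')$ is exactly the LP shown in~\cite{CaiDW12b} to output a reduced form with revenue at least $\opt - \epsilon$, via the fact that $F(\mathcal{F},\mathcal{D})$ and $F(\mathcal{F},\mathcal{D}')$ are $\epsilon$-close for the carefully chosen proxy distribution $\mathcal{D}'$. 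So I would just cite that result; the only thing to check is that our reformulation via $R^A_{\mathcal{D}'}$ has not changed the feasible region (it has not — $P$ is still the genuine polytope of feasible reduced forms for $\mathcal{D}'$, and $\mathcal{A}$ is merely an $\alpha$-approximate linear optimizer over it, as noted right before the lemma).

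For the first inequality, $Rev(P_0) \geq \alpha\, Rev(F)$, the key observation is that $P_0 = \{\vec{\pi} : \vec{\pi}/\alpha \in P\} = \alpha P$, i.e. $P_0$ is just $P$ scaled down by $\alpha$. The revenue objective of the LP of Figure~\ref{fig:revenue benchmark} is a linear function of the reduced form $\vec{\pi}$ together with the price variables; moreover the BIC/IR constraints relating prices to $\vec{\pi}$ are homogeneous-friendly in the sense that scaling $\vec{\pi}$ and all prices by the same factor $\alpha$ preserves feasibility and scales the objective by exactly $\alpha$. Concretely: take the optimal solution $(\vec{\pi}^*, \vec{p}^*)$ of $Rev(F)$ with $\vec{\pi}^* \in P$; then $(\alpha \vec{\pi}^*, \alpha \vec{p}^*)$ has $\alpha\vec{\pi}^* \in P_0$, still satisfies the incentive and individual-rationality constraints (each such constraint is a linear inequality that is homogeneous in $(\vec{\pi}, \vec{p})$ jointly, so scaling both by $\alpha > 0$ preserves it), and achieves revenue $\alpha \cdot Rev(F)$. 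Hence the optimum of the LP with feasible region $P_0$ is at least $\alpha\, Rev(F)$, which is the claim. I would write this out carefully by recalling the explicit constraints of Figure~\ref{fig:revenue benchmark} and verifying the homogeneity claim line by line.

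The main obstacle — really the only subtlety — is the homogeneity check: one has to confirm that every constraint of the revenue LP (in particular the BIC constraints, which compare a bidder's utility from reporting truthfully against reporting some other type, and the IR constraints) is invariant under simultaneously scaling the allocation variables and the price variables by a common positive constant, so that the scaled solution is genuinely feasible for the LP over $P_0$. This is true because utilities are quasi-linear (value minus price) and values are linear in $\vec{\pi}$, so every constraint has the form ``linear-in-$(\vec{\pi},\vec{p})$ $\geq$ linear-in-$(\vec{\pi},\vec{p})$'' with no constant terms; but it deserves to be stated explicitly. Once that is in hand, the lemma follows immediately by combining the scaling argument with the cited revenue bound from~\cite{CaiDW12b}.
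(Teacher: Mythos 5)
Your proposal is correct and matches the paper's own argument: take the optimal solution $(\vec{\pi}^*, \vec{p}^*)$ of the LP over $F(\mathcal{F},\mathcal{D}')$, observe that $(\alpha\vec{\pi}^*,\alpha\vec{p}^*)$ is feasible for the LP over $P_0$ because the BIC/IR constraints are homogeneous and $\alpha\vec{\pi}^* \in P_0$ by definition, and combine this with the $Rev(F) \geq \opt - \epsilon$ bound cited from~\cite{CaiDW12b}. Your extra care in spelling out the homogeneity check is sound but not a departure from the paper's route.
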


Now, denote by $Rev(WSO)$ the revenue obtained by replacing $P$ with $WSO$ in Figure~\ref{fig:revenue benchmark}. By ``replace $P$ with $WSO$,'' we mean run the optimization version of the ellipsoid algorithm that does a binary search on possible values for the objective function. On each subproblem {(i.e. for a guess $x$ of the revenue)}, run the ellipsoid algorithm using a new weird separation oracle $WSO'$, which does the following. For challenge $(\vec{\pi},\vec{p})$, first check if it satisfies the IR and BIC constraints in Figure~\ref{fig:revenue benchmark} and the revenue constraint $\sum_{i=1}^{m} \sum_{\vec{v}_i \in T_i} \Pr[t_i = \vec{v}_i]\cdot p_i(\vec{v}_i) \geq x$, for the guessed value $x$ of revenue. If not, output the hyperplane it violates. If yes, output $WSO(\vec{\pi})$. The ellipsoid algorithm will use \textbf{exactly the same parameters as if $WSO$ was a separation oracle for $P_0$.} In particular, we can calculate the number of iterations and the precision that Ellipsoid would use if it truly had access to a separation oracle for $P_0$,\footnote{These parameters were computed in~\cite{CaiDW12b} except for $F(\mathcal{F},\mathcal{D}')$ rather than $P_0$. As the latter is just the former scaled by $\alpha$ it is easy to modify these parameters to accommodate $\alpha$.  This is addressed in Lemma~\ref{lem:LPruntime} in Section~\ref{sec:runtime}.} and use the same number of iterations here. Moreover, we use here the same criterion for deeming the feasible region lower-dimensional that the Ellipsoid with separation oracle for $P_0$ would use. Similarly, the bit complexity over  values of $x$ that the binary search will search over is taken to be the same as if binary search and the ellipsoid algorithm were used to solve the LP of Figure~\ref{fig:revenue benchmark} with $P_0$ in place of $F(\mathcal{F},\mathcal{D}')$.

We now want to use Lemma~\ref{lem:goodRevenue} to lower bound $Rev(WSO)$. This is almost a direct corollary of Fact~\ref{fact:WSOpolytope}. The only remaining step is understanding the ellipsoid algorithm. 

\begin{proposition}\label{prop:WSOrevenue} If $x \leq Rev(P_0)$, then the ellipsoid algorithm using $WSO'$ (with the same parameters as if $WSO$ was a separation oracle for $P_{0}$) will always find a feasible point.
\end{proposition}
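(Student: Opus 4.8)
\textbf{Proof proposal for Proposition~\ref{prop:WSOrevenue}.}

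The plan is to show that if the target revenue $x$ is at most $Rev(P_0)$, then the feasible region of the LP of Figure~\ref{fig:revenue benchmark} with $P_0$ substituted for $F(\mathcal{F},\mathcal{D}')$ is nonempty and, moreover, "fat enough" for Ellipsoid to locate a point in it within the pre-determined number of iterations and at the pre-determined precision — and that the execution of Ellipsoid with $WSO'$ never diverges from the execution it would have had with a genuine separation oracle for this $P_0$-region. First I would record that, by Lemma~\ref{lem:goodRevenue}, $Rev(P_0) \geq \alpha(\opt - \epsilon)$, so whenever $x \le Rev(P_0)$ the polytope $\mathcal{P}_{P_0,x} := \{(\vec\pi,\vec p) : \vec\pi \in P_0,\ \text{IR, BIC constraints hold},\ \sum_i \sum_{\vec v_i} \Pr[t_i=\vec v_i] p_i(\vec v_i) \ge x\}$ is nonempty; this is the polytope on which the parameters (number of iterations $N$, working precision, and the lower-dimensionality cutoff) were tuned in~\cite{CaiDW12b} (as reflected in Lemma~\ref{lem:LPruntime}). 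So by the correctness guarantee of the Ellipsoid algorithm with those parameters, an idealized run using a true separation oracle for $\mathcal{P}_{P_0,x}$ would find a feasible point within $N$ iterations.

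The key step is then a coupling/simulation argument between that idealized run and the actual run of $WSO'$. I would argue by induction on the iteration count that, as long as no point has yet been accepted, the two runs maintain the \emph{same} sequence of ellipsoid centers. At each step there are two cases. If the current center $(\vec\pi,\vec p)$ violates one of the IR/BIC/revenue constraints, both runs output the very same violated hyperplane, so the update is identical. If $(\vec\pi,\vec p)$ satisfies all of these, the idealized oracle checks $\vec\pi \in P_0$, whereas $WSO'$ calls $WSO(\vec\pi)$. Here I invoke the chain $P_0 \subseteq P_1 \subseteq P_2$ (Lemma~\ref{lem:chain}) together with Fact~\ref{fact:P1}: if $\vec\pi \in P_0$ then $\vec\pi \in P_1$, hence $WSO(\vec\pi) = $ ``yes,'' so $WSO'$ also accepts and outputs $(\vec\pi,\vec p)$ as feasible — matching the idealized run's acceptance. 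If instead the idealized oracle would \emph{reject} because $\vec\pi \notin P_0$, then either $WSO$ also rejects (and by Corollary~\ref{cor:WSOandP1} the hyperplane it returns is a valid cut for $P_1 \supseteq P_0$, so it is a valid cut for $\mathcal{P}_{P_0,x}$, meaning the idealized run using \emph{that} cut would also make progress — but to keep the runs literally identical it is cleaner to note that any valid separation oracle for $P_0$ suffices for correctness of Ellipsoid, so we may simply take the idealized oracle to be $WSO$ itself restricted to the $\vec\pi \notin P_0$ branch, observing $WSO$'s cuts there are genuine $P_0$-cuts), or $WSO$ accepts a point $\vec\pi \notin P_0$ — but then $WSO'$ terminates early declaring feasibility, which is precisely the conclusion we want. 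Either way, the actual run of $WSO'$ reaches an accepted point no later than the idealized run does, i.e., within $N$ iterations.

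The main obstacle I anticipate is the bookkeeping around this coupling: the Ellipsoid algorithm's parameters were computed for the region cut out by a true separation oracle for $P_0$, and I must make sure that substituting $WSO$ does not cause Ellipsoid to (i) run past its iteration budget without accepting, or (ii) wrongly declare the region lower-dimensional and abort. Point (i) is handled by the observation above that every cut $WSO$ emits on a point failing to lie in $P_0$ (or failing an LP constraint) is also a valid cut for the honest region $\mathcal{P}_{P_0,x}$ — so the sequence of cuts shrinks a superset of $\mathcal{P}_{P_0,x}$ exactly as fast, and Ellipsoid's volume-shrinkage argument applies verbatim to certify a feasible point is found within $N$ steps (or an accepting "yes" is emitted sooner); and $\mathcal{P}_{P_0,x}\ne\emptyset$ because $x\le Rev(P_0)$. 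Point (ii) is handled by the stipulation (made in the setup preceding the proposition) that $WSO'$ uses \emph{exactly} the same lower-dimensionality criterion as the honest run — since $\mathcal{P}_{P_0,x}$ is a genuine polytope with the volume/bit-complexity bounds from~\cite{CaiDW12b}, that criterion is never triggered erroneously. I would close by stating that, assembling these observations, whenever $x \le Rev(P_0)$ the Ellipsoid algorithm driven by $WSO'$ halts with a feasible point, which is exactly the claim.
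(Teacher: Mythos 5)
Your proof is correct and follows essentially the same route as the paper's: both hinge on the non-emptiness of $Q_0$ (the paper's name for your $\mathcal{P}_{P_0,x}$) and on Corollary~\ref{cor:WSOandP1}, which guarantees that whenever $WSO$ rejects it behaves like a valid $P_0$ separation oracle, so an ``infeasible'' verdict would contradict Ellipsoid's correctness. Your initial coupling framing (``the two runs maintain the same sequence of centers'') is actually ill-posed as first stated, since there is no canonical idealized $P_0$ oracle for the centers to track, but you notice and patch this yourself by redefining the idealized oracle to echo $WSO$'s cuts, at which point your argument collapses into the paper's cleaner proof by contradiction.
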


\begin{corollary}\label{cor:WSOrevenue} $Rev(WSO) \geq Rev(P_{0})$.
\end{corollary}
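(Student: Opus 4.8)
The plan is to derive Corollary~\ref{cor:WSOrevenue} as an immediate consequence of Proposition~\ref{prop:WSOrevenue} together with the way the optimization-version of the ellipsoid algorithm with binary search over the objective value is set up. Recall that $Rev(WSO)$ is defined by running binary search over guesses $x$ of the revenue, and for each guess running the ellipsoid algorithm with the weird separation oracle $WSO'$ (which first checks the IR, BIC and revenue-$\geq\!x$ constraints explicitly, and otherwise defers to $WSO(\vec\pi)$), using exactly the parameters — number of iterations, precision, lower-dimensionality criterion, and bit complexity of the $x$-grid — that the ellipsoid algorithm would use if $WSO$ were a genuine separation oracle for $P_0$.

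The key step is the following. Fix the largest guess $x^\star$ in the binary search grid with $x^\star \leq Rev(P_0)$. By Proposition~\ref{prop:WSOrevenue}, the ellipsoid algorithm with $WSO'$ succeeds in finding a feasible point $(\vec\pi,\vec p)$ for this guess $x^\star$; in particular this point satisfies all the IR, BIC constraints of Figure~\ref{fig:revenue benchmark} (since $WSO'$ checks them explicitly and would have output a violated hyperplane otherwise) and satisfies $\sum_{i}\sum_{\vec v_i\in T_i}\Pr[t_i=\vec v_i]\cdot p_i(\vec v_i)\geq x^\star$. Hence the binary search, which accepts a guess exactly when the corresponding ellipsoid run returns a feasible point, will accept $x^\star$ and therefore return an objective value at least $x^\star$. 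It then remains to argue that the grid on which the binary search runs contains a value $x^\star$ with $x^\star \geq Rev(P_0)$ up to the binary-search granularity, i.e. that running binary search with the prescribed bit complexity over $x$ loses nothing relative to $Rev(P_0)$; this is exactly the point noted before the statement — the $x$-grid is chosen to be the same as the one that binary search plus ellipsoid would use to solve the LP of Figure~\ref{fig:revenue benchmark} with $P_0$ in place of $F(\mathcal F,\mathcal D')$, so by the standard analysis of ellipsoid-based LP solving the returned value is $\geq Rev(P_0)$.

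I would phrase this as: let $x^\star$ be the value returned by the binary search. For every grid value $x \leq Rev(P_0)$, Proposition~\ref{prop:WSOrevenue} guarantees the ellipsoid run with $WSO'$ finds a feasible point, so the binary search does not reject $x$; consequently $x^\star$ is at least the largest grid value not exceeding $Rev(P_0)$, and since the grid's bit complexity is the same as in the exact-$P_0$ LP solve, $x^\star \geq Rev(P_0)$. Finally, $Rev(WSO)$ is by definition the objective value $x^\star$ achieved by the point output at the end of this procedure (which, as noted, is genuinely IR, BIC and collects revenue $\geq x^\star$), so $Rev(WSO) \geq Rev(P_0)$.

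The main obstacle — really the only subtlety — is not the logical structure, which is routine, but making sure the "same parameters as for $P_0$" bookkeeping genuinely goes through: that the number of ellipsoid iterations and the numerical precision that suffice for a true separation oracle for $P_0$ still suffice here for detecting feasibility via $WSO'$, and that the lower-dimensionality handling does not cause the procedure to incorrectly declare infeasibility on a guess $x \leq Rev(P_0)$. All of this is encapsulated in Proposition~\ref{prop:WSOrevenue} (whose proof in turn leans on Fact~\ref{fact:WSOpolytope}, giving $P_1\subseteq Q$ throughout, hence $P_0\subseteq Q$ since $P_0\subseteq P_1$), so at the level of this corollary one simply invokes it; the remaining work is the one-line binary-search argument above.
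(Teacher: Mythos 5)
Your proposal is correct and matches the paper's own argument: both derive the corollary by invoking Proposition~\ref{prop:WSOrevenue} to show the ellipsoid run with $WSO'$ finds a feasible point on every binary-search guess $x \leq Rev(P_0)$, so the binary search terminates at a value at least $Rev(P_0)$. Your additional remarks about grid granularity and parameter bookkeeping are a slightly more careful spelling-out of the same reasoning, not a different route.
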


\begin{corollary}\label{cor:goodrevenue} $Rev(WSO) \geq \alpha (\opt - \epsilon)$.
\end{corollary}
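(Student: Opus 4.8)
\textbf{Proof proposal for Corollary~\ref{cor:goodrevenue}.}

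The plan is to simply chain the two results immediately preceding the statement. First I would invoke Corollary~\ref{cor:WSOrevenue}, which states $Rev(WSO) \geq Rev(P_{0})$; this is itself an immediate consequence of Proposition~\ref{prop:WSOrevenue}, which guarantees that whenever the binary-search guess $x$ satisfies $x \leq Rev(P_0)$, the ellipsoid algorithm run with $WSO'$ (using exactly the parameters it would use for a genuine separation oracle for $P_0$) finds a feasible point, so the binary search cannot terminate below $Rev(P_0)$. Then I would apply Lemma~\ref{lem:goodRevenue}, whose conclusion is $Rev(P_0) \geq \alpha Rev(F) \geq \alpha(\opt - \epsilon)$. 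Composing the two inequalities gives $Rev(WSO) \geq Rev(P_0) \geq \alpha(\opt-\epsilon)$, which is exactly the claim.

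Concretely, the proof is one line: by Corollary~\ref{cor:WSOrevenue} we have $Rev(WSO) \geq Rev(P_0)$, and by Lemma~\ref{lem:goodRevenue} we have $Rev(P_0) \geq \alpha(\opt-\epsilon)$, so $Rev(WSO) \geq \alpha(\opt-\epsilon)$. There is essentially no obstacle here — the corollary is a bookkeeping consequence of the two substantive results that come before it. The only thing worth double-checking is that the quantity $Rev(WSO)$ appearing in Corollary~\ref{cor:WSOrevenue} is literally the same quantity as in the statement of Corollary~\ref{cor:goodrevenue}, i.e. the revenue obtained by running the optimization version of Ellipsoid with $WSO'$ in place of $P$ in Figure~\ref{fig:revenue benchmark}; this is the case by the definition of $Rev(WSO)$ given in the paragraph preceding Proposition~\ref{prop:WSOrevenue}.

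If I wanted to be slightly more careful about why the composition is legitimate, I would note that $Rev(\cdot)$ as used throughout this section always refers to the value returned by the same binary-search-plus-ellipsoid procedure, only differing in which separation oracle (or which polytope) is plugged in, so the inequalities $Rev(WSO) \geq Rev(P_0)$ and $Rev(P_0) \geq \alpha(\opt - \epsilon)$ are statements about comparable real numbers and transitivity of $\geq$ applies directly. Hence $Rev(WSO) \geq \alpha(\opt - \epsilon)$, completing the proof.
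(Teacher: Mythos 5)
Your proof is correct and matches the paper's (implicit) argument exactly: Corollary~\ref{cor:goodrevenue} is stated without a separate proof precisely because it follows by chaining Corollary~\ref{cor:WSOrevenue} ($Rev(WSO) \geq Rev(P_0)$) with Lemma~\ref{lem:goodRevenue} ($Rev(P_0) \geq \alpha(\opt - \epsilon)$). Your extra remark confirming that $Rev(\cdot)$ denotes the same binary-search-plus-ellipsoid quantity throughout is a reasonable sanity check but not a substantive addition.
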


Finally, we need to argue that we can implement any reduced form output by the LP with $WSO$, as otherwise the reduced form is useless. This is a direct consequence of Lemma~\ref{lem:convhull}:

\begin{corollary}\label{cor:implement} Let $\vec{\pi}^*$ denote the reduced form output by the LP of Figure~\ref{fig:revenue benchmark} using $WSO$ instead of $F(\mathcal{F},\mathcal{D}')$, and let $S$ be the set of weights $\vec{w}$ that are queried to $\widehat{WSO}$ during the execution. Then $\vec{\pi}^*$ can be implemented (for bidders sampled from $\mathcal{D}'$) as a distribution over virtual implementations of $A$ using only virtual transformations corresponding to weights in $S$.
\end{corollary}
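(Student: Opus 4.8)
The plan is to derive Corollary~\ref{cor:implement} as a more or less immediate consequence of Lemma~\ref{lem:convhull}, together with a bookkeeping argument that tracks how the LP-with-$WSO$ execution queries $\widehat{WSO}$. First I would recall the setup: we are running the ellipsoid-based optimization over the LP of Figure~\ref{fig:revenue benchmark} with $F(\mathcal{F},\mathcal{D}')$ replaced by $WSO$, and $WSO$ here is instantiated with $P = F(\mathcal{F},\mathcal{D}')$ and $\mathcal{A}(\vec{w}) = R^A_{\mathcal{D}'}(\vec{w})$ (Definition~\ref{def:avcg2}). The binary search terminates with some value of $x$ for which the inner ellipsoid run accepts a challenge $(\vec{\pi}^*,\vec{p}^*)$; in particular the final call to $WSO'$ on $(\vec{\pi}^*,\vec{p}^*)$ passes the IR, BIC and revenue constraints and then calls $WSO(\vec{\pi}^*)$, which returns ``yes.'' So $\vec{\pi}^*$ is a point with $WSO(\vec{\pi}^*) = $ ``yes.''

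Next I would invoke Lemma~\ref{lem:convhull} directly on this $\vec{\pi}^*$: letting $S$ be the set of weights $\vec{w}$ for which $\widehat{WSO}(\vec{w},t)$ was queried during the execution of $WSO(\vec{\pi}^*)$, the lemma gives $\vec{\pi}^* \in Conv(\{\mathcal{A}(\vec{w}) \mid \vec{w} \in S\})$. The one subtlety is that the statement of the corollary says ``$S$ is the set of weights $\vec{w}$ that are queried to $\widehat{WSO}$ during the execution,'' meaning during the whole LP execution, not just the final $WSO$ call; since the set of weights queried in the accepting $WSO(\vec{\pi}^*)$ call is a subset of the weights queried over the whole run, the convex-combination conclusion only gets easier, so this is harmless. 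Then I would translate the geometric statement into a mechanism: $\vec{\pi}^* = \sum_{\vec{w}\in S} \lambda_{\vec{w}}\, \mathcal{A}(\vec{w})$ with $\lambda_{\vec{w}}\geq 0$, $\sum \lambda_{\vec{w}} = 1$, and by Definition~\ref{def:avcg2}, $\mathcal{A}(\vec{w}) = R^A_{\mathcal{D}'}(\vec{w})$ is the reduced form (under $\mathcal{D}'$) of the virtual implementation $A_{\mathcal{D}'}(\vec{w}) = A(\{f_i\}_i)$ with $f_{ij}(B) = w_{ij}(B)/\Pr[t_i=B]$. Since the reduced form of a mixture of allocation rules (with the mixture taken by first sampling which rule to run) equals the corresponding mixture of reduced forms — this is linearity of the reduced-form operator, which I would note as a one-line observation, exactly as used throughout~\cite{CaiDW12b} — the randomized allocation rule ``with probability $\lambda_{\vec{w}}$ run $A_{\mathcal{D}'}(\vec{w})$'' has reduced form exactly $\vec{\pi}^*$ when bidders are sampled from $\mathcal{D}'$. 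This is a distribution over virtual implementations of $A$ whose virtual transformations correspond precisely to weights in $S$, which is the claim.

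I do not expect a genuine obstacle here, since the corollary is essentially a repackaging; the only care needed is (i) pinning down that the point $\vec{\pi}^*$ actually output by the LP-with-$WSO$ is a ``yes''-point of $WSO$ — this requires remembering that $WSO'$ calls $WSO(\vec{\pi})$ only after the linear IR/BIC/revenue constraints are satisfied, so an accepted challenge's $\vec{\pi}$-part is genuinely accepted by $WSO$ — and (ii) being explicit that reduced forms compose linearly under mixtures of allocation rules, so that ``convex combination of reduced forms $R^A_{\mathcal{D}'}(\vec{w})$'' literally means ``reduced form of the randomized mechanism that mixes over the virtual implementations $A_{\mathcal{D}'}(\vec{w})$.'' One should also note the implicit requirement that the weights $\vec{w}\in S$ produce well-defined virtual transformations, i.e. $\Pr[t_i=B]>0$ for all $B\in T_i$; this holds since we may assume each type in the (explicitly given) support of $\mathcal{D}'_i$ has positive probability, and types of zero probability can be discarded without affecting anything. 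With these points in place the proof is a few lines invoking Lemma~\ref{lem:convhull} and Definition~\ref{def:avcg2}.
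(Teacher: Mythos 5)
Your proof is correct and takes essentially the same route as the paper's: identify that $\vec{\pi}^*$ is a ``yes''-point of $WSO$, apply Lemma~\ref{lem:convhull} to get $\vec{\pi}^*\in Conv(\{R^A_{\mathcal{D}'}(\vec{w})\mid\vec{w}\in S\})$, and observe that a convex combination of reduced forms is the reduced form of the corresponding mixture of allocation rules. The extra bookkeeping you add (which $S$, linearity of the reduced-form operator, positivity of $\Pr[t_i=B]$) is sound but just fleshes out what the paper leaves implicit.
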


At this point, we have shown that the reduced form $\vec{\pi}^*$ and pricing rule $p^*$ computed by the LP of Figure~\ref{fig:revenue benchmark} after replacing $F(\mathcal{F},\mathcal{D}')$ with $WSO$ achieves good revenue when bidders are sampled from $\mathcal{D}$, and define a BIC mechanism when bidders are sampled from $\mathcal{D}'$. We have also shown that we can implement $\vec{\pi}^*$ as a distribution over virtual implementations of $A$ using only weights that were queried during the execution of the LP, \emph{albeit for bidders are sampled from $\mathcal{D}'$}. 

The remaining step for correctness (we still have not addressed running time) is to show that, with high probability, the same distribution over virtual implementations of $A$ implements some reduced form $\vec{\pi}'$ when the bidders are sampled from $\mathcal{D}$ that satisfies $|\vec{\pi}^* - \vec{\pi}'|_1 \leq \epsilon$. Once we show this, we will have proved that our distribution over virtual implementations of $A$ and our pricing rule $p^*$ define an $\epsilon$-BIC, {$\epsilon$-IR} mechanism when bidders are sampled from $\mathcal{D}$ with good revenue. We will argue this informally in Appendix~\ref{app:epsBIC} and refer the reader to~\cite{CaiDW12b} for a formal proof of the same fact when using $F(\mathcal{F},\mathcal{D}')$ rather than $WSO$ in the LP of Figure~\ref{fig:revenue benchmark} as the proof is nearly identical. In addition, we can give every bidder type an $\epsilon$ rebate in order to get an $\epsilon$-BIC, IR mechanism for bidders sampled from $\mathcal{D}$ for an additional hit of $m\epsilon$ in revenue. (Recall that the runtime we are shooting for is polynomial in $1/\epsilon$, so $\epsilon$ can be made small enough to cancel the additional factor of $m$.) With this discussion, we have shown that our algorithm is correct: we have implemented some $\epsilon$-BIC, IR mechanism $(\vec{\pi}',\vec{p}^*-\epsilon)$ whose revenue is at least $\alpha(\opt - \epsilon)$. We show that our approach runs in polynomial time in Section~\ref{sec:runtime}.

\section{Runtime}\label{sec:runtime}
Until now, we have only established that our algorithms are correct, up to maybe choosing the right parameters in WSO, which was deferred to this section. Here, we set these parameters appropriately and analyze the running times of all our algorithms. In particular, we show that all reduced forms required in Section~\ref{sec:revenue} can be computed in polynomial time, and that both $WSO$ from Section~\ref{sec:WSO} and our revenue maximizing LP from Section~\ref{sec:revenue} run in polynomial time.\\ 

\paragraph{Analyzing WSO from Section~\ref{sec:WSO}.} We start with the appropriate choice of $\delta$. The proof of the following lemma is in Appendix~\ref{app:WSOruntime}. 
\begin{lemma}\label{lem:delta} Let $S$ be any subset of weight vectors in $[-1,1]^d$, $b$ be the bit complexity of $\vec{\pi}$, and $\ell$ {be an upper bound on the bit complexity of $\mathcal{A}(\vec{w})$ for all $\vec{w}\in[-1,1]^{d}$}. Then if $\vec{\pi} \notin Conv(\{\mathcal{A}(\vec{w}) |\vec{w} \in S\})$, there exists a weight vector $\vec{w}^*$ such that $\vec{\pi} \cdot \vec{w}^* \geq \max_{\vec{w} \in  S} \{\mathcal{A}(\vec{w}) \cdot \vec{w}^*\} + {4}\delta$, where $\delta = 2^{-\poly(d,\ell,b)}$ (does not depend on $S$).
\end{lemma}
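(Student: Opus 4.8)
The plan is to prove this as a quantitative, bit-complexity-aware version of the separating hyperplane theorem. The key first observation is that, although $S$ may be infinite (even uncountable), the image set $V := \{\mathcal{A}(\vec{w}) \mid \vec{w}\in S\}$ is \emph{finite}: every point of $V$ has bit complexity at most $\ell$, and there are only $2^{O(\ell d)}$ rational vectors in $[-1,1]^d$ of that bit complexity. Hence $K := Conv(\{\mathcal{A}(\vec{w}) \mid \vec{w}\in S\}) = Conv(V)$ is a closed, bounded polytope, all of whose vertices lie in $V$ and therefore have bit complexity at most $\ell$, and $\max_{\vec{w}\in S}\{\mathcal{A}(\vec{w})\cdot\vec{w}^*\} = \max_{v\in V}\{v\cdot\vec{w}^*\}$ for every choice of $\vec{w}^*$.

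Given $\vec{\pi}\notin K$, I would extract the separating weight vector as an optimal vertex of the following explicit linear program over variables $(\vec{w},c)\in\mathbb{R}^{d+1}$: maximize $\vec{\pi}\cdot\vec{w} - c$ subject to $v\cdot\vec{w}\le c$ for all $v\in V$, $-1\le w_j\le 1$ for all $j$, and $c\le d$ (the last inequality only bounds the feasible region into a polytope and is slack at optimum, since $v\cdot\vec{w}\le d$ always). Since $\vec{\pi}\notin K$, there is some $\vec{u}$ with $\vec{\pi}\cdot\vec{u} > \max_{v\in V}\{v\cdot\vec{u}\}$; scaling $\vec{u}$ down into $[-1,1]^d$ preserves the strict inequality, so the LP is feasible with strictly positive objective value, hence its optimum is strictly positive. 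At an optimal vertex $(\vec{w}^*,c^*)$ we must have $c^* = \max_{v\in V}\{v\cdot\vec{w}^*\}$ (decreasing $c$ always improves the objective), so the optimal value equals $\vec{\pi}\cdot\vec{w}^* - \max_{v\in V}\{v\cdot\vec{w}^*\} = \vec{\pi}\cdot\vec{w}^* - \max_{\vec{w}\in S}\{\mathcal{A}(\vec{w})\cdot\vec{w}^*\} > 0$, with $\vec{w}^*\in[-1,1]^d$ as required.

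The quantitative bound then follows from standard facts about the bit complexity of vertices and optimal values of rational LPs (see, e.g., \cite{GLS}). The constraint matrix has entries of bit complexity $O(\ell)$ (from $v\in V$) and $O(1)$ (from the box and $c\le d$ constraints), and the LP has $d+1$ variables; thus every vertex $(\vec{w}^*,c^*)$ is the unique solution of a $(d{+}1)\times(d{+}1)$ rational linear system with entries of bit complexity $O(\ell)$, and by Cramer's rule has bit complexity $\poly(d,\ell)$. Combined with the bit complexity $b$ of $\vec{\pi}$, the positive rational number $\vec{\pi}\cdot\vec{w}^* - c^*$ has bit complexity $\poly(d,\ell,b)$, hence is at least $2^{-\poly(d,\ell,b)}$. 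Setting $4\delta$ equal to this lower bound, i.e. $\delta = 2^{-\poly(d,\ell,b)}$, and noting this exponent depends only on $d$, $\ell$, $b$ (the relevant constraint-matrix entries are bounded by $\ell$ no matter how large $|V|$ is), yields $\vec{\pi}\cdot\vec{w}^* \ge \max_{\vec{w}\in S}\{\mathcal{A}(\vec{w})\cdot\vec{w}^*\} + 4\delta$.

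The conceptual crux — and the only nonroutine step — is the first one: recognizing that the possibly infinite index set $S$ is irrelevant because its image $V=\mathcal{A}(S)$ is a finite point set of bounded bit complexity, so that $K$ is genuinely a polytope and all the polyhedral/bit-complexity machinery applies uniformly in $S$. The remaining work (phrasing separation as an LP, invoking vertex and optimal-value bit-complexity bounds) is standard, with minor care needed to (i) add the harmless constraint $c\le d$ so the LP is over a polytope, (ii) verify the optimum is strictly positive, and (iii) confirm that the polynomial in the exponent of $\delta$ never depends on $|V|$, only on the fact that vertices are cut out by $d+1$ constraints of bit complexity $O(\ell)$.
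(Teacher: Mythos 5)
Your proof is correct and takes essentially the same approach as the paper: the paper's argument likewise works with the polytope cut out by $c\le d$, $v\cdot\vec w\le c$ for $v\in\mathcal{A}(S)$, and $\vec w\in[-1,1]^d$, finds a vertex at which $\vec\pi\cdot\vec w - c>0$, and bounds its bit complexity by $\poly(d,\ell)$ to conclude $\delta=2^{-\poly(d,\ell,b)}$. The one place you are a bit more careful than the paper is in making explicit that $\mathcal{A}(S)$ is a finite set of bit complexity $\le\ell$ (so the polytope has finitely many constraints and the bit-complexity bound is uniform in $S$), which the paper leaves implicit.
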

\noindent The requirement that $\delta$ is chosen appropriately only appears in the proof of Lemma~\ref{lem:convhull}. As Lemma~\ref{lem:delta} describes an appropriate choice of $\delta$ for the proof to be correct, we take $\delta = 2^{-\poly(d,\ell,b)}$ in the definition of $WSO$. 

Next we address the appropriate choice of $N$ for the number of iterations used in $WSO$. This is stated in Corollary~\ref{cor:N}, and proved in Appendix~\ref{app:WSOruntime}.

\begin{corollary}\label{cor:N}
There exists some $N = \poly(d,\ell,b)$ such that, if $WSO$ has not found a feasible point after $N$ iterations of the ellipsoid algorithm, the following polytope ($P(S)$) is empty:
	$$t - \vec{\pi} \cdot \vec{w} \leq - \delta;$$
	$$t \geq \mathcal{A}(\vec{w}') \cdot \vec{w},\ \forall \vec{w}' \in S;$$
	$$\vec{w} \in [-1,1]^d;$$
where $S$ is the set of weights $\vec{w}'$ such that $WSO$ queried $\widehat{WSO}$ on $(t,\vec{w}')$ for some $t$ during its execution, $b$ is the bit complexity of $\vec{\pi}$, $\ell$ is {an upper bound on the bit complexity of $\mathcal{A}(\vec{w})$ for all $\vec{w}\in[-1,1]^{d}$}, and $\delta$ is chosen as in Lemma~\ref{lem:delta}.
\end{corollary}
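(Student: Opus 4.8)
The plan is to package what is already known about the Ellipsoid algorithm. The key point is that $WSO$, when run on $\vec\pi$, is internally running the Ellipsoid algorithm on the feasibility problem with variables $(\vec w,t)$ and constraints ``$\vec w\in[-1,1]^d$'', ``$t-\vec\pi\cdot\vec w\le-\delta$'', and $\widehat{WSO}(\vec w,t)=$``yes''. Whenever $\widehat{WSO}$ rejects a candidate $(\vec w,t)$, it outputs the halfspace $t'\ge\mathcal{A}(\vec w)\cdot\vec w'$, which is a valid cut for the polytope $P(S)$ in the statement, since every point of $P(S)$ satisfies $t\ge\mathcal{A}(\vec w')\cdot\vec w$ for all $\vec w'\in S$, in particular for the current $\vec w\in S$. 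Likewise the box constraint and the $t-\vec\pi\cdot\vec w\le-\delta$ constraint are literally among the defining inequalities of $P(S)$. Hence every hyperplane the Ellipsoid algorithm uses during the execution of $WSO$ is a valid separating hyperplane for $P(S)$: $WSO$ behaves \emph{exactly} like a genuine separation oracle for the convex polytope $P(S)$, even though $P(S)$ is not known in advance (it is revealed incrementally as $S$ grows).

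Next I would invoke the standard volume-bound guarantee of the Ellipsoid method. The ambient polytope lives in $[-1,1]^d\times[-1,1]$ (note $t=\vec\pi\cdot\vec w$ forces $|t|\le d$, but one may rescale or just take the box $[-d,d]$ for $t$), so it is contained in a ball of radius $R=\mathrm{poly}(d)$. The defining inequalities of $P(S)$ have bit complexity bounded in terms of $b$ (the bit complexity of $\vec\pi$), $\ell$ (an upper bound on the bit complexity of $\mathcal{A}(\vec w)$), $d$, and the bit complexity of $\delta=2^{-\mathrm{poly}(d,\ell,b)}$, which is itself $\mathrm{poly}(d,\ell,b)$. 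Therefore, by the classical analysis (e.g.\ \cite{GLS}), there is a threshold $N=\mathrm{poly}(d,\ell,b)$ such that if the Ellipsoid algorithm, fed valid separating hyperplanes for a polytope $P(S)$ whose facet descriptions have bit complexity $\mathrm{poly}(d,\ell,b)$, has not found a feasible point after $N$ iterations, then $P(S)$ must be empty: after $N$ iterations the current ellipsoid has volume smaller than the volume any nonempty polytope with inequalities of that bit complexity would be forced to contain. Since $WSO$ uses exactly this $N$ and exactly these valid cuts, the conclusion follows directly: if $WSO$ fails to find a feasible point after $N$ iterations, $P(S)$ is empty.

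One subtlety to handle is that $P(S)$ is not fixed at the start — the set $S$ only grows as the oracle is queried — so a priori the ``target'' polytope changes between iterations. This is handled by observing that the Ellipsoid bound is monotone in the right direction: at iteration $k$, the cut used is valid for $P(S_k)$ where $S_k\subseteq S$ is the set of weights queried so far, and $P(S)\subseteq P(S_k)$ since adding constraints only shrinks the polytope. So every cut ever used is valid for the final $P(S)$, and the standard argument applies verbatim with $P(S)$ as the fixed target. I expect this bookkeeping — making precise that all cuts remain valid for the final polytope and that the bit complexity of $P(S)$'s description is $\mathrm{poly}(d,\ell,b)$ uniformly over the (unknown) $S$ — to be the only mildly delicate step; everything else is a citation of the Ellipsoid volume bound. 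The reason $\delta$ must be chosen as in Lemma~\ref{lem:delta} is exactly to keep this bit complexity polynomial (and, downstream in Lemma~\ref{lem:convhull}, to guarantee that an empty $P(S)$ certifies $\vec\pi\in Conv(\{\mathcal{A}(\vec w)\mid \vec w\in S\})$), but for the present statement we only need that $\delta$ has polynomial bit complexity, which Lemma~\ref{lem:delta} provides.
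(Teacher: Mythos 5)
Your high-level plan — show every cut used by $WSO$'s internal Ellipsoid run is a valid separating hyperplane for the final $P(S)$, then invoke a volume lower bound for nonempty $P(S)$ to conclude emptiness after $N=\poly(d,\ell,b)$ iterations — matches the paper's structure, and your handling of the subtlety that $S$ only grows during the execution (so that $P(S)\subseteq P(S_k)$ for every partial $S_k$) is exactly the observation the paper makes inside the proof of Lemma~\ref{lem:convhull}. However, there is a genuine gap in the middle step: you assert that ``after $N$ iterations the current ellipsoid has volume smaller than the volume any nonempty polytope with inequalities of that bit complexity would be forced to contain,'' and you add that for this corollary ``we only need that $\delta$ has polynomial bit complexity.'' Neither of these is correct as stated. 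A nonempty polytope whose defining inequalities have bounded bit complexity need not have any volume at all — it can be lower-dimensional (even a single point). The variant of Ellipsoid that $WSO$ runs does \emph{not} do the GLS dimension-reduction step; it just shrinks volume and gives up after $N$ rounds. So if $P(S)$ were nonempty but lower-dimensional, $WSO$ would wrongly declare it empty, and the corollary's contrapositive would fail.

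The paper closes this gap with Lemma~\ref{lem:goodvolume}, which shows that $P(S)$ is either empty or has volume at least $2^{-\poly(d,\ell,b)}$, and this is where the particular choice of $\delta$ earns its keep \emph{beyond} bit complexity. Lemma~\ref{lem:delta} shows that if $P(S)$ is nonempty then there is a corner $(t^*,\vec w^*)$ of the relaxation $P'(S)$ (obtained by dropping the strict constraint) with the \emph{stronger} slack $t^*-\vec\pi\cdot\vec w^*\le -4\delta$, because corners of $P'(S)$ have bit complexity $\poly(d,\ell,b)$ and any strictly negative value at such a corner is automatically $\le -4\delta$. The $4\delta$ vs.~$\delta$ gap is then used to place a small box of side $\Theta(\delta/d)$ around the scaled point $(t^*/2,\vec w^*/2)$ entirely inside $P(S)$, giving the $2^{-\poly(d,\ell,b)}$ volume bound. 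So the value of $\delta$ is not merely a bit-complexity bookkeeping device; it is what makes $P(S)$ robustly full-dimensional whenever it is nonempty, which is precisely what licenses the purely volume-based stopping rule that $WSO$ uses. Your write-up would be repaired by inserting a lemma in the role of Lemma~\ref{lem:goodvolume}, rather than citing the classical Ellipsoid volume bound as a black box.
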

Note that Lemma~\ref{lem:delta} and Corollary~\ref{cor:N} complete the description of $WSO$, and establish the truth of Lemma~\ref{lem:convhull}.

It remains to bound the running time of $WSO$. Let $rt_{\mathcal{A}}(x)$ be the running time of algorithm $\mathcal{A}$ on input whose bit complexity is $x$. With Lemma~\ref{lem:delta} and Corollary~\ref{cor:N}, we can bound the running time of $WSO$. This is stated below as Corollary~\ref{cor:WSOruntime} and proved in Appendix~\ref{app:WSOruntime}.

\begin{corollary}\label{cor:WSOruntime} Let $b$ denote the bit complexity of $\vec{\pi}$ and $\ell$ be an upper bound of the bit complexity of $\mathcal{A}(\vec{w})$ for all $\vec{w}\in[-1,1]^{d}$. Then on input $\vec{\pi}$, $WSO$ terminates in time $\poly(d,\ell,b,rt_{\mathcal{A}}(\poly(d,\ell,b)))$. 
\end{corollary}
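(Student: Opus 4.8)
The plan is to unwind the definition of $WSO$ in Figure~\ref{fig:WSO} and charge the cost of each of its components, using Corollary~\ref{cor:N} to bound the number of inner ellipsoid iterations and the standard bit-complexity analysis of the ellipsoid method~\cite{GLS}. Recall that on input $\vec{\pi}$, $WSO$ runs the ellipsoid algorithm for $N$ iterations on the feasibility problem in variables $(\vec{w},t)$ with constraints $\vec{w}\in[-1,1]^d$, $t-\vec{\pi}\cdot\vec{w}\le-\delta$, and $\widehat{WSO}(\vec{w},t)$; by Corollary~\ref{cor:N} we take $N=\poly(d,\ell,b)$, and by Lemma~\ref{lem:delta} we take $\delta=2^{-\poly(d,\ell,b)}$, which therefore has \emph{bit complexity} $\poly(d,\ell,b)$. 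A valid starting ellipsoid is a ball of bit complexity $\poly(d,b)$ about the origin: it suffices that the ball contain $[-1,1]^d$ in the $\vec{w}$-coordinates and a range of magnitude $2^{\poly(d,b)}$ in the $t$-coordinate, since $\vec{\pi}$ has bit complexity $b$ and $\vec{w}\in[-1,1]^d$ bound $\vec{\pi}\cdot\vec{w}$.

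The key step is to bound the bit complexity of every hyperplane the inner ellipsoid can receive. The box constraints contribute hyperplanes of bit complexity $O(1)$; $t-\vec{\pi}\cdot\vec{w}\le-\delta$ has bit complexity $\poly(d,\ell,b)$ (from $\vec{\pi}$ and $\delta$); and each rejection by $\widehat{WSO}(\vec{w},t)$ returns the hyperplane $t'\ge\mathcal{A}(\vec{w})\cdot\vec{w}'$, whose coefficient vector $\mathcal{A}(\vec{w})$ has bit complexity at most $\ell$ by hypothesis. By the standard analysis of the ellipsoid algorithm, after $k$ iterations the current ellipsoid's defining matrix and center have bit complexity polynomial in $k$, in the bit complexity of the starting ellipsoid, and in the largest bit complexity among the hyperplanes used so far; since $k\le N=\poly(d,\ell,b)$, every center $\vec{w}$ that gets passed to $\widehat{WSO}$ has bit complexity $\poly(d,\ell,b)$.

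With this in hand, each iteration costs: $\poly(d,\ell,b)$ to perform the rank-one update and compute the new ellipsoid center; $\poly(d,\ell,b)$ to check $\vec{w}\in[-1,1]^d$ and evaluate $t-\vec{\pi}\cdot\vec{w}+\delta$; and, if both checks pass, a single black-box call $\mathcal{A}(\vec{w})$ on an input of bit complexity $\poly(d,\ell,b)$, costing $rt_{\mathcal{A}}(\poly(d,\ell,b))$, followed by $\poly(d,\ell,b)$ work to form $\mathcal{A}(\vec{w})\cdot\vec{w}$ and compare it to $t$. Summing over the $N=\poly(d,\ell,b)$ iterations and adding the trivial post-processing that outputs ``Yes'' or the violated hyperplane $\vec{w}^*\cdot\vec{\pi}'\le t^*$, the total running time is
$$N\cdot\Big(\poly(d,\ell,b)+rt_{\mathcal{A}}(\poly(d,\ell,b))\Big)=\poly\big(d,\ell,b,rt_{\mathcal{A}}(\poly(d,\ell,b))\big),$$
which is exactly the claimed bound.

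The only delicate point — and the place where one must be careful — is the bit-complexity bookkeeping: the exponentially small $\delta$ is harmless because it is the \emph{number of bits} of $\delta$, namely $\poly(d,\ell,b)$, and not its magnitude, that enters the ellipsoid analysis; and the ellipsoid centers do not suffer a bit-complexity blowup precisely because Corollary~\ref{cor:N} already caps the number of inner iterations at a polynomial and every hyperplane fed to the ellipsoid has polynomial bit complexity. Everything else is a routine accounting of the operations inside Figure~\ref{fig:WSO}.
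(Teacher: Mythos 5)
Your proof is correct and follows essentially the same route as the paper's: cap the number of inner ellipsoid iterations via Corollary~\ref{cor:N}, bound the bit complexity of every hyperplane fed to the inner ellipsoid (the box constraints, the $\delta$-constraint whose bit complexity is $\poly(d,\ell,b)$, and the $\widehat{WSO}$ hyperplanes of bit complexity $O(\ell)$), invoke the standard ellipsoid bit-complexity analysis of~\cite{GLS} to conclude that every queried center has bit complexity $\poly(d,\ell,b)$, and then charge one $\mathcal{A}$-call plus polynomial arithmetic per iteration. The only cosmetic difference is your choice of starting ellipsoid (a ball with $t$-range of magnitude $2^{\poly(d,b)}$) versus the paper's sphere of radius $\sqrt{d}$; both are valid and yield a polynomial-bit-complexity starting body, so this does not change the argument.
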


\paragraph{Computing Reduced Forms.} In Section~\ref{sec:revenue} we need to use a possibly randomized social-welfare algorithm $A$ (to which we have black-box access) to obtain an $\alpha$-approximation algorithm $\mathcal{A}$ for optimizing any linear function $\vec{w} \cdot \vec{x}$ over $\vec{x} \in P=F(\mathcal{F},\mathcal{D}')$, where ${\cal D}'$ is a (correlated across bidders) uniform distribution over $\poly(n,T,1/\epsilon)$ type profiles. We need to argue that for a given input $\vec{w}\in \mathbb{R}^{T}$ we can compute ${\cal A}(\vec{w})\equiv R^A_{\cal{D}'}(\vec{w})$ in time polynomial in the description of $\vec{w}$ and the description of the distribution ${\cal D}'$. If $A$ is randomized we cannot do this exactly, but we {\em do} get with high probability a good enough approximation for our purposes. We explain how to do this in Appendix~\ref{app:reducedForms}. The outcome is an algorithm ${\cal A}$, which has the following properties with probability at least $1-\eta$, and for arbitrary choices of $\eta \in (0,1)$ and $\gamma \in (0,\alpha)$:
\begin{itemize}
\item for all $\vec{w}$ for which our algorithm from Section~\ref{sec:revenue} may possibly query ${\cal A}$, ${\cal A}$ approximately optimizes the linear objective $\vec{w} \cdot \vec{x}$ over $\vec{x} \in F({\cal F},{\cal D}')$ to within a factor of $(\alpha-\gamma)$;
\item the bit complexity of ${\cal A}(\vec{w})$ is always polynomial in the dimension $T$ and the logarithm of the size, $\poly(n,T,1/\epsilon)$, of the support of ${\cal D}'$;
\item on input $\vec{w}$ of bit complexity $y$, the running time of ${\cal A}$ is 
\begin{align*}&rt_{\cal A}(y) = {\rm poly}(n,T,\hat{\ell},1/\epsilon,\log 1/\eta,1/\gamma, y)\\&~~~~~~~~~~~~~~~~~\cdot rt_A({\rm poly}(n,T,\hat{\ell}, \log 1/\epsilon, y)),\end{align*}
where $rt_A(\cdot)$ represents the running time of $A$ and $\hat{\ell}$ the bit complexity of the coordinates of the points in $\times_i T_i$.
\end{itemize}
Note that replacing $\alpha$ with $\alpha-\gamma$ in Section~\ref{sec:revenue} does not affect our guarantees, except for a loss of a small amount in revenue and truthfulness, which can be made arbitrarily small with $\gamma$.\\

\paragraph{Analyzing the Revenue Optimizing LP.}
First we show that the $WSO$ used in Section~\ref{sec:revenue} as a proxy for a separation oracle for $F(\mathcal{F},\mathcal{D}')$ runs in polynomial time. Recall that the dimension is $d=T$, the bit complexity of ${\cal A}(\vec{w})$ for any $\vec{w}$ can be bounded by $\ell={\rm poly}(n, T, \log 1/\epsilon)$, and that $\gamma$ and $\eta$ are constants used in the definition of ${\cal A}$. Hence,  we immediately get the following corollary of Corollary~\ref{cor:WSOruntime}.

\begin{corollary}
 Let $b$ denote the bit complexity of $\vec{\pi}$. Then on input $\vec{\pi}$, $WSO$ terminates in time {\begin{align*}&\poly(b,n,T, \hat{\ell}, 1/\epsilon,\log 1/\eta,1/\gamma) \\&~~~~~~~~~~~~\cdot rt_A({\rm poly}(n, T, \hat{\ell}, \log 1/\epsilon, b)),\end{align*}} where {$\hat{\ell}$ is an upper bound on the bit complexity of the coordinates of the points in $\times_i T_i$}.
\end{corollary}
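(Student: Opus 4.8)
The plan is simply to instantiate Corollary~\ref{cor:WSOruntime} with the parameters that govern the particular $WSO$ employed in Section~\ref{sec:revenue} — the one acting as a proxy separation oracle for $F(\mathcal{F},\mathcal{D}')$ — and then to collapse the resulting nested polynomials. For this use, the ambient dimension is $d=T$, and the approximation algorithm $\mathcal{A}$ is the one constructed in the ``Computing Reduced Forms'' paragraph: on input $\vec{w}$ it returns (a high-accuracy estimate of) $R^A_{\mathcal{D}'}(\vec{w})$. Two facts recorded there are all we need. First, the bit complexity of $\mathcal{A}(\vec{w})$ is, for every queried $\vec{w}$, bounded by $\ell=\poly(n,T,\log 1/\epsilon)$. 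Second, on an input of bit complexity $y$, $\mathcal{A}$ runs in time $rt_{\mathcal{A}}(y)=\poly(n,T,\hat{\ell},1/\epsilon,\log 1/\eta,1/\gamma,y)\cdot rt_A(\poly(n,T,\hat{\ell},\log 1/\epsilon,y))$, where $\hat{\ell}$ bounds the bit complexity of the coordinates of the points in $\times_i T_i$.

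First I would substitute $d=T$ and $\ell=\poly(n,T,\log 1/\epsilon)$ into the conclusion of Corollary~\ref{cor:WSOruntime}, so that the running time of $WSO$ on input $\vec{\pi}$ of bit complexity $b$ becomes $\poly\big(n,T,\log 1/\epsilon,b,\, rt_{\mathcal{A}}(\poly(n,T,\log 1/\epsilon,b))\big)$; here the argument $\poly(n,T,\log 1/\epsilon,b)$ fed to $rt_{\mathcal{A}}$ is exactly the bound on the bit complexity of the weight vectors that $WSO$ ever passes to $\mathcal{A}$ during its internal run of the ellipsoid algorithm (this is precisely what Corollary~\ref{cor:WSOruntime} already bookkeeps, via the number of iterations $N$ and the precision $\delta$). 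Next I would plug the displayed expression for $rt_{\mathcal{A}}(y)$ with $y=\poly(n,T,\log 1/\epsilon,b)$: the first factor becomes $\poly(n,T,\hat{\ell},1/\epsilon,\log 1/\eta,1/\gamma,b)$ (a polynomial in a polynomial is a polynomial, and $\log 1/\epsilon$ is absorbed by $1/\epsilon$), and the second factor becomes $rt_A(\poly(n,T,\hat{\ell},\log 1/\epsilon,b))$, again since $\poly\circ\poly=\poly$.

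Finally I would combine: the outer $\poly(\cdots)$ of Corollary~\ref{cor:WSOruntime} is now applied to quantities that are themselves of the form $\poly(b,n,T,\hat{\ell},1/\epsilon,\log 1/\eta,1/\gamma)$ times $rt_A(\poly(n,T,\hat{\ell},\log 1/\epsilon,b))$; multiplying through and invoking $\poly\circ\poly=\poly$ one more time yields the claimed bound $\poly(b,n,T,\hat{\ell},1/\epsilon,\log 1/\eta,1/\gamma)\cdot rt_A(\poly(n,T,\hat{\ell},\log 1/\epsilon,b))$. There is essentially no obstacle beyond careful bookkeeping of nested polynomials; the only point meriting a word of care is that $\mathcal{A}$ is randomized, so the running-time and bit-complexity bounds it satisfies — and hence this corollary — hold only with probability at least $1-\eta$, i.e.\ on the event that the reduced-form estimation of Appendix~\ref{app:reducedForms} succeeds, which is exactly the regime in which the rest of Section~\ref{sec:revenue} is stated.
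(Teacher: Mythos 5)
Your proof is correct and follows exactly the same route the paper takes: instantiate Corollary~\ref{cor:WSOruntime} with $d=T$ and $\ell=\poly(n,T,\log 1/\epsilon)$, substitute the expression for $rt_{\mathcal{A}}(y)$ derived in the ``Computing Reduced Forms'' paragraph, and collapse the nested polynomials. The paper states the result as an immediate corollary without spelling out the bookkeeping; your version just makes that bookkeeping explicit, and your closing caveat about the $1-\eta$ success event matches the paper's own framing of $\mathcal{A}$.
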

Now that we have shown that $WSO$ runs in polynomial time, we need to show that our revenue maximizing LP does as well. The proof of the following is in Appendix~\ref{app:LPruntime}.

\begin{lemma}\label{lem:LPruntime} Let $\hat{\ell}$ denote an upper bound on the bit complexity of $\alpha$, $v_{ij}(B)$ and $\Pr[t_i = B]$ for all $i,j,B$. Then the revenue maximizing LP (if we replace $P$ with $WSO$)\footnote{See what we mean by ``replacing $P$ with $WSO$'' in Section~\ref{sec:revenue}.} terminates in time
{\begin{align*}&\poly(n,T, \hat{\ell}, 1/\epsilon,\log 1/\eta,1/\gamma) \\&~~~~~~~~~~~~\cdot rt_A({\rm poly}(n, T, \hat{\ell}, \log 1/\epsilon)).\end{align*}}
\end{lemma}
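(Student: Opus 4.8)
\textbf{Proof proposal for Lemma~\ref{lem:LPruntime}.}

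The plan is to carefully track the parameters that the optimization version of Ellipsoid uses when it solves the revenue-maximizing LP of Figure~\ref{fig:revenue benchmark} with $F(\mathcal{F},\mathcal{D}')$ in place, and then observe that ``replacing $P$ with $WSO$'' changes nothing about those parameters (by the stipulation in Section~\ref{sec:revenue} that we use \emph{exactly the same parameters as if $WSO$ were a separation oracle for $P_0$}), while each call to the weird separation oracle now costs what Corollary~\ref{cor:WSOruntime} (in its parameterized form from the preceding corollary) says it costs. Concretely, I would proceed in four steps. First, recall from~\cite{CaiDW12b} the bit complexity of the LP: the variables are the reduced form $\vec{\pi}$ and the pricing rule $\vec{p}$, of total dimension $\poly(T)$; the IR, BIC, and revenue constraints have coefficients of bit complexity $\poly(\hat{\ell})$ since they are built from the $v_{ij}(B)$'s and the $\Pr[t_i=B]$'s over the polynomial-size support of $\mathcal{D}'$; and the binary search over candidate revenue values $x$ ranges over values of bit complexity $\poly(n,T,\hat{\ell},\log 1/\epsilon)$ (the maximum possible revenue is bounded in terms of the largest valuation, and the granularity needed is $2^{-\poly(\cdot)}$). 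This gives $\poly(n,T,\hat{\ell},\log 1/\epsilon)$ many binary-search rounds.

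Second, fix one binary-search round, i.e. a guess $x$ of the revenue, and bound the work of the inner Ellipsoid run with $WSO'$. By the standard Ellipsoid analysis (as used in~\cite{GLS,CaiDW12b}), for a polytope in $d'=\poly(T)$ dimensions whose defining inequalities have bit complexity $L=\poly(n,T,\hat{\ell},\log 1/\epsilon)$, Ellipsoid terminates after $\poly(d',L)=\poly(n,T,\hat{\ell},\log 1/\epsilon)$ iterations, with all ellipsoid centers of bit complexity $\poly(d',L)$. Here I must use Lemma~\ref{lem:goodRevenue} / Proposition~\ref{prop:WSOrevenue} only for \emph{correctness} (that a feasible point is found whenever $x\le Rev(P_0)$); for \emph{runtime} I only need that the prescribed number of iterations $N_{\text{ellip}}$ is polynomial, which is inherited verbatim from the $P_0$ analysis via Lemma~\ref{lem:LPruntime}'s hypothesis being the same $\hat\ell$ that bounds $\alpha$ as well (the ``scaled by $\alpha$'' remark in Section~\ref{sec:revenue}). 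Each iteration does one call to $WSO'$: checking the (polynomially many, $\poly(\hat\ell)$-bit) IR/BIC/revenue constraints is cheap, and otherwise it invokes $WSO(\vec\pi)$ on a challenge $\vec\pi$ of bit complexity $b=\poly(n,T,\hat\ell,\log 1/\epsilon)$ — exactly the bound on the ellipsoid centers. By the corollary preceding this lemma (the parameterized Corollary~\ref{cor:WSOruntime} with $d=T$, $\ell=\poly(n,T,\log 1/\epsilon)$), this call runs in time $\poly(b,n,T,\hat\ell,1/\epsilon,\log 1/\eta,1/\gamma)\cdot rt_A(\poly(n,T,\hat\ell,\log 1/\epsilon,b))$, which with $b=\poly(n,T,\hat\ell,\log 1/\epsilon)$ collapses to $\poly(n,T,\hat\ell,1/\epsilon,\log 1/\eta,1/\gamma)\cdot rt_A(\poly(n,T,\hat\ell,\log 1/\epsilon))$.

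Third, multiply through: (number of binary-search rounds) $\times$ (iterations per inner Ellipsoid) $\times$ (cost per $WSO'$ call) is a product of polynomials in $n,T,\hat\ell,1/\epsilon,\log 1/\eta,1/\gamma$ times $rt_A(\poly(n,T,\hat\ell,\log 1/\epsilon))$, using that $rt_A$ is monotone and that a polynomial-in-$n,T,\hat\ell,1/\epsilon$ number of calls to $rt_A(\cdot)$ of a fixed polynomial argument is still $\poly(\cdot)\cdot rt_A(\poly(\cdot))$. Fourth, account for the remaining bookkeeping — the criterion for declaring the feasible region lower-dimensional, and recovering the actual LP solution (reduced form and prices) from the terminal ellipsoid — all of which is done exactly as in~\cite{CaiDW12b} for $P_0$ and adds only $\poly(n,T,\hat\ell,\log 1/\epsilon)$ overhead. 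Collecting terms yields precisely the claimed bound $\poly(n,T,\hat\ell,1/\epsilon,\log 1/\eta,1/\gamma)\cdot rt_A(\poly(n,T,\hat\ell,\log 1/\epsilon))$.

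The main obstacle I anticipate is not any single estimate but the bookkeeping of showing that replacing $P$ with $WSO$ genuinely does not inflate any Ellipsoid parameter: one must be careful that $WSO$, though it accepts a non-convex set, still only ever emits hyperplanes that are valid for $P_1\supseteq P_0$ (Fact~\ref{fact:WSOpolytope} and Corollary~\ref{cor:WSOandP1}), so the geometry Ellipsoid relies on — every cut is a genuine half-space separating the current center from a fixed bounded body — is intact, and hence the iteration count and precision carried over from the $P_0$ analysis remain valid. A secondary subtlety is that $WSO$ is itself a call to Ellipsoid, so its internal parameters ($N$ from Corollary~\ref{cor:N}, $\delta$ from Lemma~\ref{lem:delta}) depend on $\ell$, the bit complexity of $\mathcal{A}(\vec w)=R^A_{\mathcal{D}'}(\vec w)$; here I lean on the ``Computing Reduced Forms'' paragraph, which guarantees $\ell=\poly(n,T,\log 1/\epsilon)$ with probability $1-\eta$, so the nested polynomials compose without blow-up. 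Once these two points are pinned down, the rest is the routine multiplication above.
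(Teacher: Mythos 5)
Your proposal is correct and follows essentially the same route as the paper's proof: both rest on the observation that, so long as the inner Ellipsoid has not terminated, $WSO$ has been rejecting every queried point, and by Corollary~\ref{cor:WSOandP1} its cuts are then genuine hyperplanes separating the center from $P_0$, so the iteration count, precision, and center bit complexity inherited from the hypothetical $P_0$-separation-oracle run remain valid; one then plugs in the specialized form of Corollary~\ref{cor:WSOruntime} for the per-call cost and multiplies. You unpack the binary search rounds and inner iterations a bit more explicitly than the paper does, but the key lemmas invoked (Corollary~\ref{cor:WSOandP1}, Corollary~\ref{cor:WSOruntime}, the bound on $\ell$ from the reduced-form computation) and the logical skeleton are identical.
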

With this lemma we complete our proof that our algorithm from Section~\ref{sec:revenue} is both correct and computationally efficient.

\section{Formal Theorem Statements}\label{sec:theorems}

In this section we provide our main theorem, formalizing Informal Theorem~\ref{infthm:appx}. In Appendix~\ref{app:theorems}, we also provide two extensions of our theorem to item-symmetric settings using the techniques of~\cite{DW12}. These extensions are Theorems~\ref{thm:itemsym} and~\ref{thm:bounded} of Appendix~\ref{app:theorems}. In all cases, {\em the allocation rule of the mechanism output by our algorithm is a distribution over virtual implementations of the given social-welfare algorithm $A$.} Moreover, the mechanisms are $\epsilon$-BIC and not truly-BIC, as we only know how to implement the target reduced forms exactly when consumers are sampled from $\mathcal{D}'$ (see discussion in Section~\ref{sec:revenue}). Theorems~\ref{thm:main},~\ref{thm:itemsym} and~\ref{thm:bounded} follow directly from Sections~\ref{sec:WSO} through~\ref{sec:runtime} in the same way that their corresponding theorems (Theorems 6 through 8) in Section~6 of~\cite{CaiDW12b} (arXiv version) follow, after replacing the separation oracle for $F(\mathcal{F},\mathcal{D}')$ with $WSO$ in the LP of Figure~\ref{fig:revenue benchmark}. In all theorem statements, $rt_A(x)$ denotes the runtime of algorithm $A$ on inputs of bit complexity $x$.

\begin{theorem}\label{thm:main} For all $\epsilon, \eta>0$, all $\mathcal{D}$ of finite support in $[0,1]^{nm}$, and all $\mathcal{F}$, given black-box access to a (non-truthful) $\alpha$-approximation algorithm, $A$, for finding the welfare-maximizing allocation in $\mathcal{F}$, there is a polynomial-time randomized approximation algorithm for MDMDP with the following properties: the algorithm obtains expected revenue $\alpha(\opt - \epsilon)$, with probability at least $1-\eta$, in time polynomial in $\ell, n, T,1/\epsilon, \log (1/\eta)$ and $rt_A(\poly(\ell, n,T, \log 1/\epsilon, \log \log (1/\eta)))$, where $\ell$ is an upper bound on the bit complexity of the coordinates of the points in the support of ${\cal D}$, as well as of the probabilities assigned by ${\cal D}_1,\ldots,{\cal D}_m$ to the points in their support. The output mechanism is $\epsilon$-BIC, and can be implemented in the same running time.
\end{theorem}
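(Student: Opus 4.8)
The plan is to assemble Theorem~\ref{thm:main} from the pieces already developed in Sections~\ref{sec:WSO}--\ref{sec:runtime}, following the same skeleton as the corresponding theorem in~\cite{CaiDW12b} but with $WSO$ in place of an exact separation oracle for $F(\mathcal{F},\mathcal{D}')$. First I would fix the proxy distribution: given $\mathcal{D}$ of finite support in $[0,1]^{nm}$, sample $\poly(n,T,1/\epsilon)$ type profiles to form the uniform (correlated) distribution $\mathcal{D}'$, exactly as in~\cite{CaiDW12b}, so that (i) the polytopes $F(\mathcal{F},\mathcal{D})$ and $F(\mathcal{F},\mathcal{D}')$ are $\epsilon$-close in the sense needed, and (ii) the support of $\mathcal{D}'$ is polynomial. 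Then, using the black-box social-welfare algorithm $A$ together with Definition~\ref{def:avcg2}, I would instantiate the approximation algorithm $\mathcal{A}(\vec w) = R^A_{\mathcal{D}'}(\vec w)$ for linear optimization over $P = F(\mathcal{F},\mathcal{D}')$; the reduced-form computation, its bit-complexity bound $\ell = \poly(n,T,\log 1/\epsilon)$, and its running time (with failure probability absorbed into $\eta$ and an arbitrarily small multiplicative loss $\gamma$) are exactly what is provided by the ``Computing Reduced Forms'' paragraph of Section~\ref{sec:runtime}.

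Next I would run the revenue-maximizing LP of Figure~\ref{fig:revenue benchmark} with $P$ replaced by $WSO$, using the parameters ($N$ via Corollary~\ref{cor:N}, $\delta$ via Lemma~\ref{lem:delta}, ellipsoid iteration count and precision as for a genuine separation oracle for $P_0$). Correctness of the revenue guarantee follows by chaining Lemma~\ref{lem:goodRevenue}, Proposition~\ref{prop:WSOrevenue} and Corollaries~\ref{cor:WSOrevenue}--\ref{cor:goodrevenue}: the LP with $WSO$ returns a reduced form $\vec\pi^*$ with pricing $\vec p^*$ whose revenue under $\mathcal{D}'$ is at least $\alpha(\opt-\epsilon)$ (with $\alpha$ replaced by $\alpha-\gamma$ if $A$ is randomized). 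Implementability follows from Corollary~\ref{cor:implement} (itself a consequence of Lemma~\ref{lem:convhull}): the weights $S$ queried during the execution suffice to write $\vec\pi^*$ as a convex combination of virtual implementations $A(\{f_i\}_i)$, so the allocation rule is literally a distribution over virtual implementations of $A$, as claimed.

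Then I would transfer the guarantee from $\mathcal{D}'$ back to $\mathcal{D}$: with probability at least $1-\eta$ the same distribution over virtual implementations of $A$ implements, when bidders are drawn from $\mathcal{D}$, a reduced form $\vec\pi'$ with $|\vec\pi^* - \vec\pi'|_1 \le \epsilon$ (this is the concentration argument sketched in Section~\ref{sec:revenue} and in Appendix~\ref{app:epsBIC}, essentially identical to~\cite{CaiDW12b}). Hence $(\vec\pi',\vec p^*)$ is $\epsilon$-BIC and $\epsilon$-IR under $\mathcal{D}$, and after an $\epsilon$ rebate to every bidder type it becomes $\epsilon$-BIC, IR at an additive revenue cost of $m\epsilon$; rescaling $\epsilon$ (the target runtime is $\poly(1/\epsilon)$) absorbs the $m$ factor and yields expected revenue $\alpha(\opt-\epsilon)$. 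Finally, the running time is read off from Lemma~\ref{lem:LPruntime} for the LP and from Corollary~\ref{cor:WSOruntime} (specialized via $d = T$, $\ell = \poly(n,T,\log 1/\epsilon)$) for each $WSO$ call, together with the reduced-form running time from Section~\ref{sec:runtime}; collecting terms gives the stated $\poly(\ell,n,T,1/\epsilon,\log 1/\eta) \cdot rt_A(\poly(\ell,n,T,\log 1/\epsilon,\log\log 1/\eta))$ bound.

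The main obstacle is not any single new inequality — all the hard work is in Sections~\ref{sec:WSO}--\ref{sec:runtime} — but rather the bookkeeping of verifying that replacing a genuine separation oracle for $F(\mathcal{F},\mathcal{D}')$ by $WSO$ does not disturb any of the parameter choices inherited from~\cite{CaiDW12b}: one must check that the ellipsoid precision, iteration counts, and binary-search bit-complexity computed there for $F(\mathcal{F},\mathcal{D}')$ (equivalently $P_0$, a scaling by $\alpha$) remain valid when the region accepted is the non-convex set cut out by $WSO$, using Fact~\ref{fact:WSOpolytope} (so $P_1 \subseteq Q$ throughout) and Corollary~\ref{cor:WSOandP1} (so rejections are genuine separations from $P_1 \supseteq P_0$). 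Once this compatibility is confirmed, the theorem is a direct composition of the stated lemmas.
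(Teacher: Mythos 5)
Your proposal is correct and mirrors the paper's own argument: the paper also proves Theorem~\ref{thm:main} by running the revenue LP of Figure~\ref{fig:revenue benchmark} with $WSO$ substituted for the separation oracle of~\cite{CaiDW12b}, chaining Lemma~\ref{lem:goodRevenue}, Proposition~\ref{prop:WSOrevenue}, Corollaries~\ref{cor:WSOrevenue}--\ref{cor:implement}, the $\mathcal{D}'\to\mathcal{D}$ concentration argument of Appendix~\ref{app:epsBIC}, the $\epsilon$-rebate, and the runtime bounds from Corollary~\ref{cor:WSOruntime} and Lemma~\ref{lem:LPruntime}. Your identification of the one genuinely delicate step — checking that the ellipsoid parameters computed as if for $P_0$ remain valid under $WSO$, via Fact~\ref{fact:WSOpolytope} and Corollary~\ref{cor:WSOandP1} — is exactly the compatibility the paper relies on.
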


We remark that we can easily modify Theorem~\ref{thm:main} and its extensions (Theorems~\ref{thm:itemsym} and~\ref{thm:bounded}) to accommodate bidders with hard budget constraints. We simply add into the revenue-maximizing LP constraints of the form $p_i(\vec{v}_i) \leq B_i$, where $B_i$ is bidder $i$'s budget. It is easy to see that this approach works; this is addressed formally in~\cite{CaiDW12,CaiDW12b,DW12}.

\bibliographystyle{plain}
\bibliography{costasbib}

\appendix
\section{Extensions of Theorem~\ref{thm:main}}\label{app:theorems}
This section contains extensions of Theorem~\ref{thm:main} enabled by the techniques of~\cite{DW12}.

\begin{theorem}\label{thm:itemsym} For all $\epsilon, \eta>0$, item-symmetric $\mathcal{D}$ of finite support in $[0,1]^{nm}$, item-symmetric $\mathcal{F}$,\footnote{Distributions and feasibility constraints are item-symmetric if they are invariant under every item permutation.} and given black-box access to a (non-truthful) $\alpha$-approximation algorithm, $A$, for finding the welfare-maximizing allocation in $\mathcal{F}$, there is a polynomial-time randomized approximation algorithm for MDMDP with the following properties: the algorithm obtains expected revenue $\alpha(\opt - \epsilon)$, with probability at least $1-\eta$, in time polynomial in $\ell$, $m,n^c,1/\epsilon, \log (1/\eta)$ and $rt_A(\poly(n^c,m, \log1/\epsilon, \log \log (1/\eta), \ell))$, where $c = \max_{i,j}|\mathcal{D}_{ij}|$, and $|\mathcal{D}_{ij}|$ is the cardinality of the support of the marginal of $\mathcal{D}$ on bidder $i$ and item $j$, and $\ell$ is as in the statement of Theorem~\ref{thm:main}.  The output mechanism is $\epsilon$-BIC, and can be implemented in the same running time.
\end{theorem}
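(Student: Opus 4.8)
The plan is to follow the template by which the item-symmetric theorem of~\cite{CaiDW12b} is derived from its general counterpart, substituting the weird separation oracle $WSO$ of Section~\ref{sec:WSO} for the exact separation oracle used there. Two ingredients combine: the item-symmetrization technique of~\cite{DW12}, which collapses the reduced-form LP to polynomially many variables, and the fact (Definition~\ref{def:avcg}) that $WSO$ needs only an $\alpha$-approximate linear optimizer over the relevant polytope, which we will manufacture from the black box $A$.

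First I would symmetrize. Because $\mathcal{D}$ and $\mathcal{F}$ are item-symmetric, averaging any BIC mechanism over the $n!$ item relabelings produces a BIC mechanism of the same revenue whose reduced form is item-symmetric (feasibility is preserved since $\mathcal{F}$ is item-symmetric). So it suffices to optimize over the polytope $F_{\mathrm{sym}}(\mathcal{F},\mathcal{D}')$ of item-symmetric reduced forms, where $\mathcal{D}'$ is the polynomial-support proxy distribution of Section~\ref{sec:revenue}. Following~\cite{DW12}, one works in the ``folded'' type space: a bidder's type, up to item permutation, is a multiset of $n$ item-values drawn from a set of size at most $c$, of which there are $n^{O(c)}$; the succinct reduced form then has dimension $d' = \poly(m, n^c)$ in place of $T$. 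Next I would build the approximate optimizer: given a symmetric weight vector $\vec{w}$ over the folded space, ``unfold'' it to a symmetric weight vector over the $n$-item space, form the virtual input as in Definition~\ref{def:avcg2}, call $A$, symmetrize its output allocation over item relabelings, and fold the resulting reduced form back. Symmetrizing the output cannot decrease expected virtual welfare against an item-symmetric objective and preserves feasibility, so the folded rule $\mathcal{A}_{\mathrm{sym}}$ satisfies Definition~\ref{def:avcg} for $P = F_{\mathrm{sym}}(\mathcal{F},\mathcal{D}')$ (using $F_{\mathrm{sym}}(\mathcal{F},\mathcal{D}') \subseteq F(\mathcal{F},\mathcal{D}')$ for the comparison), and we may plug it verbatim into $WSO$. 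As in Section~\ref{sec:revenue}, when $A$ is randomized we compute the folded reduced form of $\mathcal{A}_{\mathrm{sym}}(\vec{w})$ approximately by sampling profiles, running $A$, and tallying in the succinct representation, succeeding with probability $1-\eta$.

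With $\mathcal{A}_{\mathrm{sym}}$ in hand, every statement of Sections~\ref{sec:WSO}--\ref{sec:runtime} applies with $T$ replaced by $d'$: Fact~\ref{fact:WSOpolytope}, Fact~\ref{fact:P1}, Corollary~\ref{cor:WSOandP1} and Lemma~\ref{lem:convhull} are unchanged; running the revenue LP of Figure~\ref{fig:revenue benchmark} with $WSO$ in place of a separation oracle for $F_{\mathrm{sym}}(\mathcal{F},\mathcal{D}')$ outputs a symmetric reduced form $\vec{\pi}^*$ with revenue at least $\alpha(\opt-\epsilon)$, exactly as in Corollary~\ref{cor:goodrevenue}; Lemma~\ref{lem:convhull} and Corollary~\ref{cor:implement} decompose $\vec{\pi}^*$ into a convex combination of (symmetrized) virtual implementations of $A$, implementing it for bidders from $\mathcal{D}'$; and the sampling argument of Appendix~\ref{app:epsBIC} shows the same distribution over virtual implementations implements a reduced form within $\ell_1$-distance $\epsilon$ when bidders come from $\mathcal{D}$, giving an $\epsilon$-BIC, $\epsilon$-IR mechanism, with an $\epsilon$ per-bidder rebate restoring IR at the cost of $m\epsilon$ in revenue. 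For the runtime, Corollary~\ref{cor:WSOruntime} with $d' = \poly(m,n^c)$ bounds $WSO$, each call to $\mathcal{A}_{\mathrm{sym}}$ costs a polynomial overhead plus one call to $A$ on a virtual input of bit complexity $\poly(n,m,\ell,\log 1/\epsilon)$, and collecting terms yields the claimed bound with $n^c$ in the role of $T$.

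The one genuinely non-routine step, already present in~\cite{DW12}, is making the symmetrization computationally efficient: a naive average over the symmetric group costs $n!$, so one must work entirely in the folded representation and verify that (a) computing the symmetric reduced form of $\mathcal{A}_{\mathrm{sym}}(\vec{w})$ needs only sampling, black-box calls to $A$, and bookkeeping polynomial in $m$ and $n^c$; (b) $\mathcal{A}_{\mathrm{sym}}$ is a bona fide $\alpha$-approximate \emph{linear} optimizer over $F_{\mathrm{sym}}(\mathcal{F},\mathcal{D}')$, including for weight vectors with negative coordinates, subject to the caveat in the introduction that $A$'s guarantee must survive negative inputs; and (c) the folded reduced forms it outputs have bit complexity polynomial in $d'$. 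Everything else is the bookkeeping substitution of $d'$ for $T$ throughout Sections~\ref{sec:WSO}--\ref{sec:runtime}.
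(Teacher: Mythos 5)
Your proposal is correct and matches the paper's intended argument: the paper's proof is a one-line citation stating that Theorem~\ref{thm:itemsym} follows from Sections~\ref{sec:WSO}--\ref{sec:runtime} exactly as the corresponding item-symmetric theorem of~\cite{CaiDW12b} follows, after replacing the separation oracle for $F(\mathcal{F},\mathcal{D}')$ with $WSO$. You have simply unpacked that recipe---folding the type space via~\cite{DW12}, building the symmetrized approximate optimizer, plugging it into $WSO$, and bookkeeping the substitution of $\poly(m,n^c)$ for $T$---and correctly flagged the only non-routine step (efficient symmetrization in the folded representation, which~\cite{DW12} already handles).
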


\begin{theorem}\label{thm:bounded} For all $\epsilon, \eta, \delta>0$, item-symmetric $\mathcal{D}$ supported on $[0,1]^{nm}$, item-symmetric $\mathcal{F}$, and given black-box access to a (non-truthful) $\alpha$-approximation algorithm, $A$, for finding the welfare-maximizing allocation in $\mathcal{F}$, there is a polynomial-time randomized approximation algorithm for MDMDP with the following properties: If $C$ is the maximum number of items that are allowed to be allocated simultaneously by $\mathcal{F}$, the algorithm obtains expected revenue $\alpha(\opt - {(\sqrt{\epsilon} + \sqrt{\delta}) C})$, with probability $1-\eta$, in time polynomial in $m, n^{1/\delta}, 1/\epsilon, \log (1/\eta),$ and $rt_{A}(\poly({n^{1/\delta}m},{\log 1/\epsilon}, \log \log 1/\eta))$. In particular, the runtime does \emph{not} depend on $|\mathcal{D}|$ at all). The output mechanism is $\epsilon$-BIC, and can be implemented in the same running time.
\end{theorem}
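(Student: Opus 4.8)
The plan is to obtain Theorem~\ref{thm:bounded} from the machinery of Sections~\ref{sec:WSO}--\ref{sec:runtime} in exactly the way Theorem~8 of~\cite{CaiDW12b} is obtained from Theorem~6 there: replace the exact separation oracle for $F(\mathcal{F},\mathcal{D}')$ in the LP of Figure~\ref{fig:revenue benchmark} with the weird separation oracle $WSO$, use Corollary~\ref{cor:goodrevenue} in place of the exact ``revenue $\ge\opt-\epsilon$'' guarantee, and use the decomposition of Lemma~\ref{lem:convhull}/Corollary~\ref{cor:implement} in place of the exact corner-decomposition of~\cite{CaiDW12b} when implementing the computed reduced form. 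Everything specific to item-symmetry and to distributions of unbounded support is imported from~\cite{DW12}; the only new content is to check that those reductions are indifferent to whether they are handed a true separation oracle for $F(\mathcal{F},\mathcal{D}')$ or our $WSO$, and that the resulting mechanism's allocation rule is, as claimed, a distribution over virtual implementations of $A$.

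Concretely, first I would invoke the $\delta$-discretization of~\cite{DW12}: since $\mathcal{D}$ and $\mathcal{F}$ are item-symmetric, round every coordinate of every valuation to one of $O(1/\delta)$ grid values via the mean-preserving randomized rounding of~\cite{DW12}, obtaining an item-symmetric distribution $\hat{\mathcal{D}}$. By item-symmetry a bidder's type in $\hat{\mathcal{D}}$ is determined by the multiset of grid values of her $n$ items, of which there are only $n^{O(1/\delta)}$, so $T=n\sum_i|\hat T_i|=\poly(m,n^{1/\delta})$ and $\hat{\mathcal{D}}$ has an explicit description of size $\poly(m,n^{1/\delta})$. The discretization lemma of~\cite{DW12} gives $\opt(\hat{\mathcal{D}})\ge\opt(\mathcal{D})-\sqrt{\delta}\,C$ (the $C$ because a feasible allocation perturbs a bidder's value by at most $\delta$ per allocated item and at most $C$ items are ever allocated; the square root from the concentration accounting of~\cite{DW12}), and conversely that any $\epsilon$-BIC mechanism for $\hat{\mathcal{D}}$ whose allocation rule is a distribution over virtual implementations of $A$ can be ``un-rounded'' to an $\epsilon$-BIC mechanism for $\mathcal{D}$ of the same form, losing at most $\sqrt{\delta}\,C$ in revenue.

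Second, I would run the item-symmetric instantiation of the reduction on $\hat{\mathcal{D}}$, i.e. the argument behind Theorem~\ref{thm:itemsym}: form the proxy $\mathcal{D}'$ as the uniform distribution over $\poly(m,n^{1/\delta},1/\epsilon,\log(1/\eta))$ samples, using the symmetric representation so its description stays $\poly(m,n^{1/\delta})$; solve the LP of Figure~\ref{fig:revenue benchmark} over $\mathcal{D}'$ with $WSO$ substituted for the separation oracle, which by Corollary~\ref{cor:goodrevenue} (and Fact~\ref{fact:WSOpolytope}) yields a reduced form and prices of expected revenue at least $\alpha(\opt(\hat{\mathcal{D}})-\epsilon)$ for bidders from $\hat{\mathcal{D}}$, with the passage back to the continuous $\hat{\mathcal{D}}$ costing a further $\sqrt{\epsilon}\,C$ by the sampling analysis of~\cite{DW12}; and use Corollary~\ref{cor:implement} to decompose the reduced form into a distribution over the (polynomially many) virtual implementations of $A$ indexed by the weights queried during the run. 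Composing the two un-rounding steps ($\mathcal{D}'\!\to\!\hat{\mathcal{D}}\!\to\!\mathcal{D}$) and summing the additive losses gives an $\epsilon$-BIC mechanism for $\mathcal{D}$, of the claimed form, with expected revenue $\alpha(\opt-(\sqrt{\epsilon}+\sqrt{\delta})C)$, succeeding with probability $1-\eta$ over the sampling and over the randomized computation of reduced forms (Section~\ref{sec:runtime}); the runtime is $\poly$ in the stated parameters and in $rt_A$ because every quantity that entered the runtime bound of Theorem~\ref{thm:main} is now $\poly(m,n^{1/\delta})$, and in particular never refers to $|\mathcal{D}|$.

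The main obstacle I anticipate is bookkeeping rather than a new idea: one must verify that the $\epsilon$-BIC and $(\sqrt{\epsilon}+\sqrt{\delta})C$ revenue guarantees of~\cite{DW12} survive when the reduced form is produced by $WSO$ — which may accept forms slightly outside $F(\mathcal{F},\mathcal{D}')$ — rather than by an exact separation oracle, exactly as Sections~\ref{sec:revenue}--\ref{sec:runtime} verify this for Theorem~\ref{thm:main}; and that the decomposition of Lemma~\ref{lem:convhull}, together with the $\delta$-rounding of $\mathcal{D}$, still produces virtual transformations of polynomial bit complexity, so that the choice of $N$ in Corollary~\ref{cor:N} and the runtime bound of Corollary~\ref{cor:WSOruntime} go through unchanged with $d=T=\poly(m,n^{1/\delta})$.
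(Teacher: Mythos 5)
Your proposal is correct and takes essentially the same approach as the paper: the paper itself gives no standalone argument for Theorem~\ref{thm:bounded}, but simply asserts that it follows from Sections~\ref{sec:WSO}--\ref{sec:runtime} ``in the same way'' that Theorem~8 of~\cite{CaiDW12b} follows from Theorem~6 there, after replacing the exact separation oracle for $F(\mathcal{F},\mathcal{D}')$ with $WSO$ — which is exactly the reduction you spell out (item-symmetric $\delta$-grid rounding of~\cite{DW12}, sampling $\mathcal{D}'$, running the LP of Figure~\ref{fig:revenue benchmark} with $WSO$, then invoking Corollaries~\ref{cor:goodrevenue} and~\ref{cor:implement} and composing the two un-rounding losses). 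You have fleshed out more detail than the paper records, but there is no divergence in the route.
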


\begin{remark} The assumption that $\mathcal{D}$ is supported in $[0,1]^{mn}$ as opposed to some other bounded set is w.l.o.g., as we could just scale the values down by a multiplicative $v_{\max}$. This would cause the additive approximation error to be $\epsilon v_{\max}$. In addition, the point of the additive error in the revenue of Theorem~\ref{thm:bounded} is \emph{not} to set $\epsilon,\delta$ so small that they cancel out the factor of $C$, but rather to accept the factor of $C$ as lost revenue. For ``reasonable'' distributions, the optimal revenue scales with $C$, so it is natural to expect that the additive loss should scale with $C$ as well.
\end{remark}

\section{Appendix to Preliminaries}\label{app:prelims}
Here we provide two missing details from the Preliminaries: a formal definition of Bayesian Incentive Compatibility (BIC) and Individual Rationality (IR) for \mattnote{(possibly correlated)} bidders.

\begin{definition}~\cite{DW12}(BIC/$\epsilon$-BIC Mechanism)\label{def:BIC} A direct mechanism $M$ is called $\epsilon$-BIC iff the following inequality holds for all bidders $i$ and types $\tau_i,\tau_i' \in T_i$:
\begin{align*}&\mathbb{E}_{t_{-i} \sim {\cal D}_{-i}{(\tau_i)}}\left[U_i(\tau_i,M_i(\tau_i~;~{t}_{-i}))\right]\\  &~~~~~~~\ge \mathbb{E}_{t_{-i} \sim {\cal D}_{-i}{(\tau_i)}}\left[ U_i(\tau_i,M_i(\tau_i'~;~{t}_{-i})) \right]\\
&~~~~~~~~~~- \epsilon v_{\max} \cdot \max\left\{1,\sum_j \pi^{M}_{ij}(\tau_i',{\tau_i})\right\},\end{align*}
where: 
\begin{itemize}
\item $U_i(B,M_i(C~;~{t}_{-i}))$ denotes the utility of bidder $i$ for the outcome of mechanism $M$ if his true type is $B$, he reports $C$ to the mechanism, and the other bidders report $t_{-i}$;
\item $v_{\max}$ is the maximum possible value of any bidder for any item in the support of the value distribution; and

\item $\pi^{M}_{ij}(A,B)$ is the probability that item $j$ is allocated to bidder $i$ by mechanism $M$ if bidder $i$ reports type $A$ to the mechanism, in expectation over the types of the other bidders, assuming they {are drawn from $\mathcal{D}_{-i}(B)$} and report truthfully, as well as the mechanism's internal randomness.
\end{itemize}
In other words, $M$ is $\epsilon$-BIC iff when a bidder $i$ lies by reporting $\tau_i'$ instead of his true type $\tau_i$, she does not expect to gain more than $\epsilon v_{\max}$ times the maximum of $1$ and the expected number of items that {she would receive by reporting $\tau_i'$ instead}. A mechanism is called BIC iff it is $0$-BIC.\footnote{Strictly speaking, the definition of BIC in~\cite{DW12} is the same but without taking a max with $1$. We are still correct in applying their results with this definition because any mechanism that is considered $\epsilon$-BIC by~\cite{DW12} is certainly considered $\epsilon$-BIC by this definition. We basically call a mechanism $\epsilon$-BIC if either the definition in~\cite{BeiH11,HartlineKM11,HartlineL10} ($\epsilon v_{\max}$) or~\cite{DW12} ($\epsilon v_{\max} \sum_j \pi_{ij}(\tau'_i)$) holds.}
\end{definition}

\begin{definition}\label{def:IR} (IR{/$\epsilon$-IR}) A direct mechanism $M$ is called (interim) $\epsilon$-IR iff the following inequality holds for all bidders $i$ and types $\tau_i \in T_i$:
$$\mathbb{E}_{t_{-i} \sim {\cal D}_{-i}{(\tau_i)}}\left[U_i(\tau_i,M_i(\tau_i~;~{t}_{-i}))\right] \ge -\epsilon,$$
where $U_i(B,M_i(C~;~{t}_{-i}))$ is as in Definition~\ref{def:BIC}. {A mechanism is said to be IR iff it is $0$-IR.}
\end{definition}

%
\section{Input Model} \label{sec:input distribution}
We discuss two models for accessing a value distribution $\mathcal{D}$ over some known $\times_i T_i$, as well as what modifications are necessary, if any, to our algorithms to work with each model:
\begin{itemize}
\item \textbf{Exact Access:} We are given access to a sampling oracle as well as an oracle that exactly integrates the pdf of the distribution over a specified region.
\item \textbf{Sample-Only Access:} We are given access to a sampling oracle and nothing else.
\end{itemize}
The presentation of the paper focuses on the first model. In this case, we can exactly evaluate the probabilities of events without any special care. If we have sample-only access to the distribution, some care is required. Contained in Appendix~A of~\cite{DW12} is a sketch of the modifications necessary for all our results to apply with sample-only access. {The sketch is given for the item-symmetric case, but the same approach will work in the asymmetric case.} Simply put, repeated sampling will yield some distribution $\mathcal{D}'$ that is very close to $\mathcal{D}$ with high probability. If the distributions are close enough, then a solution to the MDMDP for $\mathcal{D}'$ is an approximate solution for $\mathcal{D}$. The error in approximating $\mathcal{D}$ is absorbed into the additive error in both revenue and truthfulness.
\section{Additive Dimension}\label{sec:dimension}
Here we discuss the notion of additive dimension and show some interesting examples of settings with low additive dimension. Consider two settings, both with the same \emph{possible type-space} for each bidder, $\hat{T}_i$ (i.e. $\hat{T}_i$ is the entire set of types that the settings model, $T_i \subseteq \hat{T}_i$ is the set of types that will ever be realized for the given distribution. As a concrete example, $\hat{T}_i = \mathbb{R}^n$ for additive settings.): the first is the ``real'' setting, with the actual items and actual bidder valuations. The real setting has $n$ items, $m$ bidders, feasibility constraints $\mathcal{F}$, and valuation functions $V_{i,B}(S): \mathcal{F} \rightarrow \mathbb{R}$ for all $i,B \in \hat{T}_i$ that map $ S \in \mathcal{F}$ to a value of bidder $i$ of type $B$ for the allocation of items $S$. The second is the ``meta'' setting, with meta-items. The meta-setting has $d$ meta-items, $m$ bidders, feasibility constraints $\mathcal{F}'$, and valuation functions $V'_{i,B}(S'): \mathcal{F}' \rightarrow \mathbb{R}$ for all $i,B \in \hat{T}_i$ that map $S' \in \mathcal{F}'$ to the value of bidder $i$ of type $B$ for the allocation of meta-items $S'$. We now define what it means for a meta-setting to faithfully represent the real setting. 

\begin{definition}\label{def:equiv} A meta-setting is equivalent to a real setting if there is a mapping from $\mathcal{F}$ to $\mathcal{F}'$, $g$, and another from $\mathcal{F}'$ to $\mathcal{F}$, $h$, such that $V_{i,B}(S) = V'_{i,B}(g(S))$, and $V'_{i,B}(S') = V_{i,B}(h(S'))$ for all $i,B \in \hat{T}_i, S \in \mathcal{F}, S' \in \mathcal{F}'$.

\end{definition}

When two settings are equivalent, there is a natural mapping between mechanisms in each setting. Specifically, let $M$ be any mechanism in the real setting. Then in the meta-setting, have $M'$ run $M$, and if $M$ selects allocation $S$ of items, $M'$ selects the allocation $g(S)$ of meta-items and charges exactly the same prices. It is clear that when bidders are sampled from the same distribution, $M$ is BIC/IC/IR if and only if $M'$ is as well. It is also clear that $M$ and $M'$ achieve the same expected revenue. The mapping in the other direction is also obvious, just use $h$. We now define the additive dimension of an auction setting.

\begin{definition} The \emph{additive dimension} of an auction setting is the minimum $d$ such that there is an equivalent (by Definition~\ref{def:equiv}) meta-setting with additive bidders and $d$ meta-items (i.e. due to the feasibility constraints, all bidders valuations can be models as additive over their values for each meta-item).
\end{definition}

In Section~\ref{sec:intro}, we observed that all of our results also apply to settings with additive dimension $d$ after multiplying the runtimes by a $\poly(d)$ factor. This is because a black-box algorithm for approximately maximizing welfare in the real setting is also a black-box algorithm for approximately maximizing welfare in the meta-setting (just apply $g$ to whatever the algorithm outputs). So if we have black-box access to a social welfare algorithm for the real setting, we have black-box access to a social welfare algorithm for the meta-setting. As the meta-setting is additve, all of our techniques apply. We then just apply $h$ at the end and obtain a feasible allocation in the real setting.

We stress that important properties of the setting are not necessarily preserved under the transformation from the real to meta setting. Importantly, when the real setting is downwards closed, this is \emph{not} necessarily true for the meta-setting. The user of this transformation should be careful of issues arising due to negative weights if the desired meta-setting is not downwards-closed.

Respecting the required care, we argued in Section~\ref{sec:intro} that single-minded combinatorial auctions had additive dimension $1$ (and the meta-setting is still downwards-closed, and therefore can accommodate negative values). Now we will show that two other natural models have low additive dimension, and that their corresponding meta-settings are downwards-closed. The discussions below are not intended to be formal proofs. The point of this discussion is to show that interesting non-additive settings have low additive dimension (via meta-settings where approximation algorithms can accommodate negative values) and can be solved using our techniques.

\subsection{$d$-minded Combinatorial Auctions}
A $d$-minded combinatorial auction setting is where each bidder $i$ has at most $d$ (public) subsets of items that they are interested in, and a (private) value $v_{ij}$ for receiving the $j^{th}$ subset in their list, $S_{ij}$,  and value $0$ for receiving any other subset. Such bidders are clearly not additive over their value for the items, but have additive dimension $d$. Specifically, make $d$ meta-items. Define $g(S)$ so that if bidder $i$ receives subset $S_{ij}$ in $S$, they receive item $j$ in $g(S)$. Define $h(S')$ so that if bidder $i$ receives item $j$ in $S'$, they receive the subset of items $S_{ij}$ in $h(S')$. Also define $\mathcal{F}'$ so that an allocation is feasible iff it assigns each bidder at most $1$ meta-item, and when bidder $i$ is assigned meta-item $j_i$, the sets $\{S_{ij_i}| i \in [m]\}$ are pairwise disjoint. Finally, set $V'_{i,B}(j) = V_{i,B}(S_{ij})$. Then it is clear that these two settings are equivalent. It is also clear that bidders are additive in the meta-setting as they are unit-demand (i.e. they can never feasibly receive more than one item). Therefore, $d$-minded Combinatorial Auctions have additive dimension $d$, and any (not necessarily truthful) $\alpha$-approximation algorithm for maximizing welfare implies a (truthful) $(\alpha-\epsilon)$-approximation algorithm for maximizing revenue whose runtime is $\poly(d,T,1/\epsilon,b)$. It is also clear that the meta-setting is downwards-closed, and therefore all (not necessarily truthful) $\alpha$-approximation algorithms for maximizing welfare can accommodate negative values.

\subsection{Combinatorial Auctions with Symmetric Bidders.}
A bidder is symmetric if their value $V_{i,B}(S) = V_{i,B}(U)$ whenever $|S| = |U|$ (i.e. bidders only care about the cardinality of sets they receive). Such bidders (with the extra constraint of submodularity) are studied in~\cite{BadanidiyuruKS12}. Such bidders are again clearly not additive over their values for the items, but have additive dimension $n$. Specifically, make $n$ meta-items. Define $g(S)$ to assign bidder $i$ item $j$ if they received exactly $j$ items in $S$. Define $h(S')$ to assign bidder $i$ exactly $j$ items if they were awarded item $j$ in $S'$ (it doesn't matter in what order the items are handed out, lexicographically works). Also define $\mathcal{F}'$ so that an allocation is feasible iff it assigns each bidder at most $1$ meta-item and when bidder $i$ is assigned meta-item $j_i$, we have $\sum_i j_i \leq n$. Finally, set $V'_{i,B}(j) = V_{i,B}(S)$ where $S$ is any set with cardinality $j$. It is again clear that the two settings are equivalent. It is also clear that the meta-setting is unit-demand, so bidders are again additive. Therefore, combinatorial auctions with symmetric bidders have additive dimension $n$, and any (not necessarily truthful) $\alpha$-approximation algorithm for maximizing welfare implies a (truthful) $(\alpha - \epsilon)$-approximation algorithm for maximizing revenue whose runtime is $\poly(n,T,1/\epsilon,b)$. It is also clear that the meta-setting is downwards-closed, and therefore all (not necessarily truthful) $\alpha$-approximation algorithms for maximizing welfare can accomodate negative values.

In addition, we note here that it is possible to exactly optimize welfare in time $\poly(n,m)$ for symmetric bidders (even with negative, not necessarily submodular values) using a simple dynamic program. We do not describe the algorithm as that is not the focus of this work. We make this note to support that this is another interesting non-additive setting that can be solved using our techniques. \vfill \eject
\section{Omitted Proofs from Section~\ref{sec:WSO}}\label{app:WSO}
\subsection{Omitted Proofs from Section~\ref{subsec:threepolytopes}.}\label{app:threepolytopes}
	
	\begin{prevproof}{Lemma}{lem:chain}
		If $\vec{\pi} \in P_0$, then $\vec{\pi} \cdot \vec{w} \leq \alpha \max_{x\in P} \{x\cdot\vec{w}\} \leq \mathcal{A}(\vec{w})\cdot\vec{w}$ 
		  for all $\vec{w} \in [-1,1]^{d}$, since $\mathcal{A}$ is an $\alpha$-approximation algorithm. 
		  Therefore, $\vec{\pi} \in P_1$ as well. So $P_0 \subseteq P_1$.
		  
		Recall now that a point $\vec{\pi}$ is in the convex hull of $S$ if and only if for all $\vec{w} \in [-1,1]^d$, there exists a point $\vec{x}(\vec{w}) \in S$ such that $\vec{\pi} \cdot \vec{w} \leq \vec{x}(\vec{w}) \cdot \vec{w}$. If $\vec{\pi} \in P_1$, then we may simply let $\vec{x}(\vec{w}) = \mathcal{A}(\vec{w})$ to bear witness that $\vec{\pi} \in P_2(A,\mathcal{D})$.
	\end{prevproof}
	
	\subsection{Omitted Proofs from Section~\ref{subsec:WSO}.}\label{subapp:WSO}
	
		\begin{prevproof}{Fact}{fact:WSOpolytope} Any hyperplane output by $WSO$ is of the form $\vec{w}^* \cdot \vec{\pi} \leq t^*$. Because $\vec{w}^*,t^*$ was accepted by $\widehat{WSO}$, we must have $t^* \geq \mathcal{A}(\vec{w}^*)\cdot\vec{w}^{*}$. As every point in $P_1$ satisfies $\vec{\pi} \cdot \vec{w}^* \leq \mathcal{A}(\vec{w}^*)\cdot\vec{w}^{*} \leq t^*$, we get that $P_1 \subseteq Q$.
	\end{prevproof}
	
	\begin{prevproof}{Fact}{fact:P1}
In order for $WSO$ to reject $\vec{\pi}$, its internal ellipsoid algorithm that uses $\widehat{WSO}$ must find some feasible point $(t^*,\vec{w}^*)$. As $\widehat{WSO}$ accepts $(t^*,\vec{w}^*)$, such a point clearly satisfies the following two equations:

$$t^* < \vec{\pi} \cdot \vec{w}^*$$
$$t^* \geq \mathcal{A}(\vec{w}^*)\cdot\vec{w}^{*}$$

Together, this implies that $\vec{\pi} \cdot \vec{w}^* > \mathcal{A}(\vec{w}^*)\cdot\vec{w}^{*}$, so $\vec{\pi} \notin P_1$.
\end{prevproof}

	\begin{prevproof}{Corollary}{cor:WSOandP1}
	By Fact~\ref{fact:P1}, whenever $WSO$ rejects $\vec{\pi}$, $\vec{\pi} \notin P_{1}$. By Fact~\ref{fact:WSOpolytope}, any halfspace output by $WSO$ contains $P_1$. This is all that is necessary in order for $WSO$ to act as a valid separation oracle for $P_1$ when it rejects $\vec{\pi}$.
	\end{prevproof}
	
		\begin{prevproof}{Lemma}{lem:convhull}
	Define the polytope $P(S)$ as the set of $t,\vec{w}$ that satisfy the following inequalities:
	$$t - \vec{\pi} \cdot \vec{w} \leq - \delta$$
	$$t \geq \mathcal{A}(\vec{w}') \cdot \vec{w},\ \forall \vec{w}' \in S$$
	$$\vec{w} \in [-1,1]^d$$
	
	We first claim that if $\vec{\pi} \notin Conv(\{\mathcal{A}(\vec{w}) | \vec{w} \in S\})$, then $P(S)$ is non-empty. This is because when $\vec{\pi} \notin Conv(\{\mathcal{A}(\vec{w})| \vec{w} \in S\})$, there is some weight vector $\vec{w}^* \in [-1,1]^d$ such that $\vec{w}^* \cdot \vec{\pi} > \max_{\vec{w}' \in S}\{\vec{w}^* \cdot \mathcal{A}(\vec{w}')\}$. For appropriately chosen $\delta$ (Lemma~\ref{lem:delta} in Section~\ref{sec:runtime} provides one), there is also a $\vec{w}^*$ such that $\vec{w}^* \cdot \vec{\pi} \geq \max_{\vec{w}' \in S}\{\vec{w}^* \cdot \mathcal{A}(\vec{w}')\} + \delta$. Set $t^* := \max_{\vec{w}' \in S}\{\vec{w}^* \cdot \mathcal{A}(\vec{w}')\}$ and consider the point $(t^*,\vec{w}^*)$. As $t^*$ is larger than $\mathcal{A}(\vec{w}')\cdot \vec{w}^*$ for all $\vec{w}' \in S$ by definition, and $\vec{w}^* \in [-1,1]^d$ by definition, we have found a point in $P(S)$.
	
	Now, consider that in the execution of $WSO(\vec{\pi})$, $\widehat{WSO}$ outputs several halfspaces. As $S$ is exactly the set of weights $\vec{w}$ that $WSO$ queries to $\widehat{WSO}$, these are exactly the halfspaces:
	
	$$t \geq \mathcal{A}(\vec{w}')\cdot \vec{w}, \ \forall \vec{w}' \in S$$
	
	During the execution of $WSO(\vec{\pi})$, other halfspaces may be learned not from $\widehat{WSO}$, but of the form $t-\vec{\pi} \cdot \vec{w} \leq -\delta$ or $-1 \leq w_{i}$, $w_i \leq 1$ for some $i\in[d]$. Call the polytope defined by the intersection of all these halfspaces $P(WSO)$. As all of these halfspaces contain $P(S)$, it is clear that $P(WSO)$ contains $P(S)$. 
	
	Now we just need to argue that if $N$ is sufficiently large, and $WSO(\vec{\pi})$ could not find a feasible point in $N$ iterations, then $P(S)$ is empty. Corollary~\ref{cor:N} in Section~\ref{sec:runtime} provides an appropriate choice of $N$. Basically, if $P(S)$ is non-empty, we can lower bound its volume with some value $V$ (independent of $S$). If $N = \poly(\log (1/V))$, then the volume of the ellipsoid containing $P(WSO)$ after $N$ iterations will be strictly smaller than $V$. As $P(WSO)$ contains $P(S)$, this implies that $P(S)$ is empty. Therefore, we may conclude that $\vec{\pi} \in Conv(\{\mathcal{A}(\vec{w}) | \vec{w} \in S\})$.
	\end{prevproof}
\section{Omitted Figures and Proofs from Section~\ref{sec:revenue}}\label{app:revenue}
	\begin{figure}[ht]
	\colorbox{MyGray}{
	\begin{minipage}{0.46\textwidth} {
	\noindent\textbf{Near-Optimal LP:}\\
	
	\noindent\textbf{Variables:}
	\begin{itemize}
	\item $p_i(\vec{v}_i)$, for all bidders $i$ and types $\vec{v}_i \in T_i$, denoting the expected price paid by bidder $i$ when reporting type $\vec{v}_i$ over the randomness of the mechanism and the other bidders' types.
	\item $\pi_{ij}(\vec{v}_i)$, for all bidders $i$, items $j$, and types $\vec{v}_i \in T_i$, denoting the probability that bidder $i$ receives item $j$ when reporting type $\vec{v}_i$ over the randomness of the mechanism and the other bidders' types.
	\end{itemize}
	\textbf{Constraints:}
	\begin{itemize}
	\item $\vec{\pi}_i(\vec{v}_i) \cdot \vec{v}_i - p_i(\vec{v}_i) \geq \vec{\pi}_i(\vec{w}_i)\cdot \vec{v}_i - p_i(\vec{w}_i)$, for all bidders $i$, and types $\vec{v}_i,\vec{w}_i \in T_i$, guaranteeing that the reduced form mechanism $(\vec{\pi},\vec{p})$ is BIC.
	\item $\vec{\pi}_i(\vec{v}_i) \cdot \vec{v}_i - p_i(\vec{v}_i) \geq 0$, for all bidders $i$, and types $\vec{v}_i \in T_i$, guaranteeing that the reduced form mechanism $(\vec{\pi},\vec{p})$ is individually rational.
	\item $\vec{\pi}\in F(\mathcal{F},\mathcal{D}')$, guaranteeing that the reduced form $\vec{\pi}$ is feasible for $\mathcal{D}'$.
	\end{itemize}
	\textbf{Maximizing:}
	\begin{itemize}
	\item $\sum_{i=1}^{m} \sum_{\vec{v}_i \in T_i} \Pr[t_i = \vec{v}_i]\cdot p_i(\vec{v}_i)$, the expected revenue \textbf{when played by bidders sampled from the true distribution $\mathcal{D}$}.\\
	\end{itemize}}
	\end{minipage}}
	\caption{An LP with near-optimal revenue.}
	\label{fig:revenue benchmark}
	\end{figure}
	
	\begin{prevproof}{Lemma}{lem:goodRevenue}
Let $(\vec{\pi}^*,\vec{p}^*)$ denote the reduced form output by the LP in Figure~\ref{fig:revenue benchmark}. Then we claim that the reduced form $(\alpha \vec{\pi}^*,\alpha \vec{p}^*)$ is a feasible solution after replacing $F(\mathcal{F},\mathcal{D}')$ with $P_0$. It is clear that this mechanism is still IR and BIC, as we have simply multiplied both sides of every incentive constraint by $\alpha$. It is also clear that ${\alpha}\vec{\pi}^* \in P_0$ by definition. As we have multiplied all of the payments by $\alpha$, and the original LP had revenue $\opt - \epsilon$~\cite{CaiDW12b}, it is clear that revenue of the reduced form output in the new LP is at least $\alpha (\opt - \epsilon)$.
\end{prevproof}

\begin{prevproof}{Proposition}{prop:WSOrevenue}
Let $Q_{0}$ be the set of $(\vec{\pi},\vec{p})$ that satisfy $\vec{\pi} \in P_0$, the BIC and IR constraints, as well as the revenue constraint $\sum_{i=1}^{m} \sum_{\vec{v}_i \in T_i} \Pr[t_i = \vec{v}_i]\cdot p_i(\vec{v}_i) \geq x$. As $x \leq Rev(P_0)$, we know that there is some feasible point $(\vec{\pi}^*,\vec{p}^*)\in Q_{0}$. Therefore, the ellipsoid algorithm using a valid separation oracle for $Q_0$ and the correct parameters will find a feasible point. 

Now, what is the difference between a valid separtion oracle for $Q_0$ and $WSO'$ as used in Proposition~\ref{prop:WSOrevenue}? A separation oracle for $Q_0$ first checks the BIC, IR, and revenue constraints, then executes a true separation oracle for $P_0$. $WSO'$ first checks the BIC, IR, and revenue constraints, then executes $WSO$. So let us assume for contradiction that the Ellipsoid using $WSO'$ outputs infeasible, but $Q_0$ is non-empty. 
It has to be then that $WSO$ rejected every point that was queried to it.\footnote{In particular, even if Ellipsoid deems the feasible region lower-dimensional, and continues in a lower-dimensional space, etc., then still if the final output of Ellipsoid is infeasible, then all points that it queried to $WSO$ were rejected.} However, Corollary~\ref{cor:WSOandP1} guarantees that when rejecting points, $WSO$ acts as a valid separation oracle for $P_0$ (i.e. provides a valid hyperplane separating $\vec{\pi}$ from $P_0$). As the only difference between a separation oracle for $Q_0$ and $WSO'$ was the use of $WSO$, and $WSO$ acted as a valid separation oracle for $P_0$, this means that in fact $WSO'$ behaved as a valid separation oracle for $Q_0$. So we ran the ellipsoid algorithm using a valid separation oracle for $Q_0$ with the correct parameters, but output ``infeasible'' when $Q_0$ was non-empty, contradicting the correctness of the ellipsoid algorithm.

Therefore, whenever $Q_0$ is non-empty, $WSO'$ must find a feasible point. As $Q_0$ is non-empty whenever $x \leq Rev(P_0)$, this means that $WSO'$ will find a feasible point whenever $x \leq Rev(P_0)$, proving the proposition.\end{prevproof}

\begin{prevproof}{Corollary}{cor:WSOrevenue}
Consider running the LP of Figure~\ref{fig:revenue benchmark} with $WSO$. The optimization version of the ellipsoid algorithm will do a binary search on possible values for the objective function and solve a separate feasibility subproblem for each. Proposition~\ref{prop:WSOrevenue} guarantees that on every feasibility subproblem with $x \leq Rev(P_0)$, the ellipsoid algorithm will find a feasible point. Therefore, the binary search will stop at some value $x^* \geq Rev(P_0)$, and we get that $Rev(WSO) \geq Rev(P_0)$.
\end{prevproof}

\begin{prevproof}{Corollary}{cor:implement}
Because $\vec{\pi}^*$ is output, we have $WSO(\vec{\pi}^*) =$ ``yes.'' Lemma~\ref{lem:convhull} tells us that $\vec{\pi}^*$ is therefore in the convex hull of $\{R^A_{\mathcal{D}'}(\vec{w})| \vec{w} \in S\}$. As a convex combination of reduced forms can be implemented as a distribution over the allocation rules that implement them, we have proven the corollary.
\end{prevproof}

\subsection{The Solution is $\epsilon$-BIC.}\label{app:epsBIC}

Here we provide the omitted informal argument from Section~\ref{sec:revenue}. Again, the reader is referred to~\cite{CaiDW12b} for a formal proof. Recall that we are trying to show that the distribution over virtual implementations of $A$ used to implement $\vec{\pi}^*$ when bidders are sampled from $\mathcal{D}'$ implements some reduced form $\vec{\pi}'$ when bidders are sampled from $\mathcal{D}$ satisfying $|\vec{\pi}^*-\vec{\pi}'|_1 \leq \epsilon$. This suffices to guarantee that our mechanism computed in Section~\ref{sec:revenue} is $\epsilon$-BIC.

Let $x$ denote the number of samples taken for $\mathcal{D}'$. There are two steps in the argument: the first is showing that for a fixed allocation rule, the probability that its reduced form when bidders are sampled from $\mathcal{D}'$ is within $\epsilon$ in $\ell_1$ distance of the reduced form when bidders are sampled from $\mathcal{D}$ approaches $1$ exponentially fast in $x$. This can be done by a simple Chernoff bound. The second is showing that, before actually sampling $\mathcal{D}'$, but after choosing how many samples to take, the number of weights $\vec{w}$ that could possibly ever be queried to $\widehat{WSO}$ grows like $2^{\poly(\log x)}$. This can be done by reasoning about bit complexities maintained by the ellipsoid algorithm, and the fact that the bit complexity of $\mathcal{D}'$ grows logarithmically in $x$. With both facts, we can then take a union bound over the entire set of weights that will ever be possibly queried to $\widehat{WSO}$ and get that with high probability, the reduced forms of all the allocation rules corresponding to these weights are $\epsilon$-close under $\mathcal{D}$ and $\mathcal{D}'$. If this holds for all these allocation rules simultaneously, then it clearly also holds for any distribution over them (in particular, for $\vec{\pi}^*$). This entire argument appears formally in~\cite{CaiDW12b} for the case where the LP of Figure~\ref{fig:revenue benchmark} uses $F(\mathcal{F},\mathcal{D}')$ rather than $WSO$, and is nearly identical.
\section{Omitted Proofs from Section~\ref{sec:runtime}}\label{app:runtime}

\subsection{Omitted Proofs on the Runtime of $WSO$ from Section~\ref{sec:runtime}.}\label{app:WSOruntime}

\begin{prevproof}{Lemma}{lem:delta}
Consider the polytope $P'(S)$ with respect to variables $t'$ and $\vec{w}'$ that is the intersection of the following half-spaces (similar to $P(S)$ from the proof of Lemma~\ref{lem:convhull}):
	$$t' \leq d$$
	$$t' \geq \mathcal{A}(\vec{w}) \cdot \vec{w}',\ \forall \vec{w} \in S$$
	$$\vec{w}' \in [-1,1]^d$$
	
	If $\vec{\pi} \notin Conv(\{\mathcal{A}(\vec{w}) | \vec{w}\in S\})$, there exists some weight vector $\vec{w}'$ such that $\vec{\pi} \cdot \vec{w}' > \max_{\vec{w} \in S} \{\mathcal{A}(\vec{w})\cdot \vec{w}'\}$. This bears witness that there is a point in $P'(S)$ satisfying $\vec{\pi} \cdot \vec{w}' > t'$. If such a point exists, then clearly we may take $(\vec{w}',t')$ to be a corner of $P'(S)$ satisfying the same inequality. As the bit complexity of every halfspace defining $P'(S)$ is $\poly(d,\ell)$, the corner also has bit complexity $\poly(d,\ell)$. Therefore, if $t' - \vec{\pi} \cdot \vec{w}' < 0$, $t' - \vec{\pi} \cdot \vec{w}' \leq -{4}\delta$, for some $\delta = 2^{-\poly(d,\ell,b)}$. 
\end{prevproof}

\begin{lemma}\label{lem:goodvolume} Let $S$ be any subset of weights. Let also $b$ be the bit complexity of $\vec{\pi}$, and {$\ell$ an upper bound on the bit complexity of ${\cal A}(\vec{w})$ for all $\vec{w}$.} Then, if we choose $\delta$ as prescribed by Lemma~\ref{lem:delta}, the following polytope ($P(S)$) is either empty, or has volume at least $2^{-\poly(d,\ell,b)}$:

	$$t - \vec{\pi} \cdot \vec{w} \leq - \delta$$
	$$t \geq \mathcal{A}(\vec{w}') \cdot \vec{w},\ \forall \vec{w}' \in S$$
	$$\vec{w} \in [-1,1]^d$$
\end{lemma}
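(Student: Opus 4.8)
The plan is to exhibit an explicit full-dimensional box inside $P(S)$ whenever $P(S)$ is nonempty, and to lower bound its volume by a quantity of the form $2^{-\poly(d,\ell,b)}$. First I would invoke Lemma~\ref{lem:delta}: if $P(S)$ is nonempty, then in particular $\vec{\pi} \notin Conv(\{\mathcal{A}(\vec{w}) \mid \vec{w} \in S\})$ (if it were in the convex hull, then for every $\vec{w}^* \in [-1,1]^d$ we would have $\vec{\pi}\cdot\vec{w}^* \le \max_{\vec{w}'\in S}\{\mathcal{A}(\vec{w}')\cdot\vec{w}^*\}$, contradicting the first defining inequality $t - \vec{\pi}\cdot\vec{w} \le -\delta$ together with $t \ge \mathcal{A}(\vec{w}')\cdot\vec{w}$ for all $\vec{w}'\in S$). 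Hence Lemma~\ref{lem:delta} gives a weight vector $\vec{w}^*\in[-1,1]^d$ with $\vec{\pi}\cdot\vec{w}^* \ge \max_{\vec{w}'\in S}\{\mathcal{A}(\vec{w}')\cdot\vec{w}^*\} + 4\delta$. Set $t^* := \max_{\vec{w}'\in S}\{\mathcal{A}(\vec{w}')\cdot\vec{w}^*\} + 2\delta$; then $(t^*,\vec{w}^*)$ is a point of $P(S)$ with all the relevant inequalities slack by at least $2\delta$ (for the first constraint) or by $2\delta$ (for the constraints indexed by $S$), and $\vec{w}^*$ is strictly interior to $[-1,1]^d$ only coordinate-wise — which is the one point needing a touch of care, addressed below.

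Next I would perturb $\vec{w}^*$ slightly to make it interior to the cube in every coordinate. Since $\vec{w}^*$ is a corner of the polytope $P'(S)$ from the proof of Lemma~\ref{lem:delta}, whose defining half-spaces have bit complexity $\poly(d,\ell)$, the coordinates $w^*_i$ are rationals of bit complexity $\poly(d,\ell)$; any coordinate equal to $\pm 1$ can be pushed inward by $2^{-\poly(d,\ell)}$ while changing each dot product $\mathcal{A}(\vec{w}')\cdot\vec{w}$ and $\vec{\pi}\cdot\vec{w}$ by at most (dimension)$\times$(coordinate change)$\times$(max magnitude of entries, which is $1$ since $\mathcal{A}(\vec{w}')\in P\subseteq[-1,1]^d$ and $\vec{\pi}$ has bit complexity $b$ hence bounded magnitude) $= 2^{-\poly(d,\ell,b)}$. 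Choosing this perturbation smaller than $\delta$ preserves all the slacks up to a constant factor. Now I have a point $(\hat t,\hat{\vec w})$ with $\hat{\vec w}$ in the open cube, at $\ell_\infty$-distance at least $2^{-\poly(d,\ell,b)}$ from $\partial[-1,1]^d$, and with every other defining inequality of $P(S)$ slack by at least $\delta$. Around this point I take the axis-aligned box $B = \{(t,\vec w) : |t-\hat t|\le r,\ \|\vec w - \hat{\vec w}\|_\infty \le r\}$ for $r := 2^{-\poly(d,\ell,b)}$ chosen small enough that every defining inequality of $P(S)$ — each of which is Lipschitz in $(t,\vec w)$ with constant at most $d+1$ in $\ell_\infty$ norm, since the coefficient vectors are $(1,-\vec\pi)$, $(1,-\mathcal{A}(\vec w'))$, and the cube constraints — remains satisfied on all of $B$. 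Then $B \subseteq P(S)$ and $\mathrm{vol}(B) = (2r)^{d+1} = 2^{-\poly(d,\ell,b)}$, as desired.

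The main obstacle is purely bookkeeping: one must track that $\delta = 2^{-\poly(d,\ell,b)}$ from Lemma~\ref{lem:delta}, the perturbation needed to interiorize $\vec{w}^*$, and the box radius $r$ can all be taken to be $2^{-\poly(d,\ell,b)}$ simultaneously, with the Lipschitz constant $d+1$ only costing an additive $\poly(d)$ in the exponent. There is no genuine difficulty — the inequality $t - \vec\pi\cdot\vec w \le -\delta$ is the one with the ``$-\delta$'' margin built in precisely so that a full-dimensional neighborhood survives, and the $4\delta$ (rather than $\delta$) in the conclusion of Lemma~\ref{lem:delta} is exactly the slack we spend on the box radius and the cube-interiorization. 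So the proof is: nonempty $\Rightarrow$ $\vec\pi$ outside the convex hull $\Rightarrow$ Lemma~\ref{lem:delta} gives a deep interior point $\Rightarrow$ a small explicit box of volume $2^{-\poly(d,\ell,b)}$ fits inside $P(S)$.
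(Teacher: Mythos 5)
Your proposal is correct, and the overall skeleton (deduce $\vec{\pi}\notin \mathrm{Conv}$ from $P(S)\neq\emptyset$, invoke Lemma~\ref{lem:delta} to get a deep witness, and fit an axis-aligned box) matches the paper. But you take a genuinely different route at the one delicate step — ensuring the witness point is interior to the cube $[-1,1]^d$ — and it is worth comparing.

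You handle interiority by perturbation: you observe that the $\vec{w}^*$ of Lemma~\ref{lem:delta} is a vertex of $P'(S)$, hence has bit complexity $\poly(d,\ell)$, so every coordinate is either exactly $\pm 1$ or at distance $\geq 2^{-\poly(d,\ell)}$ from $\pm 1$; you then nudge boundary coordinates inward by some $2^{-\poly(d,\ell,b)}$ smaller than $\delta/(4d)$, spending part of the $4\delta$ margin on the nudge. This works, but it requires re-invoking the vertex/bit-complexity structure that was already used to establish $\delta$, and the bookkeeping (per-coordinate perturbation versus cumulative dot-product drift, which must be kept below a fraction of $\delta$, not merely below $\delta$ as you write at one point) has to be threaded carefully. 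The paper sidesteps all of this with a simpler observation: since the polytope $P(S)$ (and the witness property) is preserved under scaling $(t,\vec{w})\mapsto(t/2,\vec{w}/2)$ — the scaling keeps $t\geq\mathcal{A}(\vec{w}')\cdot\vec{w}$ and turns the $-4\delta$ margin into $-2\delta$, still safely below $-\delta$ — the point $(t^*/2,\vec{w}^*/2)$ lies in $P(S)$ with $\vec{w}^*/2\in[-1/2,1/2]^d$, giving a uniform $1/2$ margin to $\partial[-1,1]^d$ for free, with no bit-complexity reasoning needed for the cube constraints. After that, the paper's explicit box $[\frac{t^*}{2}+\frac{\delta}{2},\frac{t^*}{2}+\frac{3\delta}{4}]\times\prod_i[\frac{w^*_i}{2},\frac{w^*_i}{2}+\frac{\delta}{4d}]$ has volume $\frac{\delta^{d+1}}{4^{d+1}d^d}$, exactly your $(2r)^{d+1}$ with the constants pinned down. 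Both proofs are valid; the halving trick is cleaner because it converts the hard-to-quantify "distance to $\partial[-1,1]^d$" into a fixed constant, whereas your route buys nothing extra in return for the additional perturbation analysis. If you keep your version, do tighten "Choosing this perturbation smaller than $\delta$" to "smaller than $\delta/(cd)$ for a suitable constant $c$," since the perturbation is per-coordinate while the constraint slacks are measured via dot products over $d$ coordinates.
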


\begin{prevproof}{Lemma}{lem:goodvolume}
First, it will be convenient to add the vacuous constraint $t \leq d$ to the definition of the polytope. It is vacuous because it is implied by the existing constraints,{\footnote{{Since $P\in[-1,1]^{d}$, WLOG we can also assume $\vec{\pi}\in [-1,1]^{d}$, thus from the constraints of $P(S)$ it follows that $t\leq d$.}}} but useful for the analysis. Define $P'(S)$ by removing the first constraint. That is, $P'(S)$ is the intersection of the following halfspaces (this is the same as $P'(S)$ from the proof of Lemma~\ref{lem:delta}):

	$$t \leq d$$
	$$t \geq \mathcal{A}(\vec{w}') \cdot \vec{w},\ \forall \vec{w}' \in S$$
	$$\vec{w} \in [-1,1]^d$$
	
	If there is a point in $P(S)$, then there is some point in $P'(S)$ satisfying $t - \vec{\pi} \cdot \vec{w} \leq -\delta$. If such a point exists, then clearly there is also a corner of $P'(S)$ satisfying $t - \vec{\pi} \cdot \vec{w} \leq - \delta$. Call this corner $(t^*,\vec{w}^*)$. Recall that $\delta$ was chosen in the proof of Lemma~\ref{lem:delta} so that we are actually guaranteed $t - \vec{\pi} \cdot \vec{w} \leq -4\delta$. Therefore, the point $(t^*/2,\vec{w}^*/2)$ is also clearly in $P(S)$, and satisfies $t-\vec{\pi} \cdot \vec{w} \leq -2\delta$. 
	
	Now, consider the box $B =[\frac{t^*}{2}+\frac{\delta}{2},\frac{t^*}{2}+\frac{3\delta}{4}] \times (\times_{i=1}^d [\frac{w^*_i}{2},\frac{w^*_i}{2}+\frac{\delta}{4d}])$. We claim that $B \subseteq P(S)$. Let $(t,\vec{w})$ denote an arbitrary point in $B$. It is clear that we have $\vec{w} \in [-1,1]^d$, as we had $\vec{w}^*/2 \in [-1/2,1/2]^d$ to start with. As each coordinate of $\vec{\pi}$ and $\mathcal{A}(\vec{w}')$ for all $\vec{w}'$ is in $[-1,1]$, it is easy to see that:
	
	$$(\vec{w}^*/2) \cdot \vec{\pi}-\frac{\delta}{4} \leq \vec{w} \cdot \vec{\pi}\leq  (\vec{w}^*/2)\cdot \vec{\pi}+\frac{\delta}{4}, $$ and for all $\vec{w}' \in S$,
	$$(\vec{w}^*/2) \cdot \mathcal{A}(\vec{w}')-\frac{\delta}{4} \leq \vec{w} \cdot \mathcal{A}(\vec{w}') \leq (\vec{w}^*/2)\cdot \mathcal{A}(\vec{w}') +\frac{\delta}{4}.$$
	
	As we must have $t \geq \frac{t^*}{2} + \frac{\delta}{2}$, and we started with $t^* \geq \vec{w}^* \cdot \mathcal{A}(\vec{w}')$ for all $\vec{w}' \in S$, it is clear that we still have $t \geq \vec{w} \cdot \mathcal{A}(\vec{w}')$ for all $\vec{w}' \in S$. Finally, since we started with $t^*/2 - \vec{\pi} \cdot \vec{w}^*/2 \leq -2\delta$, and $t \leq t^*/2 + \frac{3\delta}{4}$, we still have $t - \vec{\pi}\cdot \vec{w} \leq -\delta$.
	
	Now, we simply observe that the volume of $B$ is $\frac{\delta^{d+1}}{d^d4^{d+1}}$, which is $2^{-\poly(d,\ell,b)}$. Therefore, if $P(S)$ is non-empty, it contains this box $B$, and therefore has volume at least $2^{-\poly(d,\ell,b)}$.
\end{prevproof}

\begin{prevproof}{Corollary}{cor:N}
By Lemma~\ref{lem:goodvolume}, if $P(S)$ is non-empty, $P(S)$ has volume at least some $V=2^{-\poly(d,\ell,b)}$. Since $P \subseteq [-1,1]^d$, the starting ellipsoid in the execution of WSO can be taken to have volume {$2^{O(d)}$}. As the volume of the maintained ellipsoid shrinks by a multiplicative factor of at least $1-\frac{1}{\poly(d)}$ in every iteration of the ellipsoid algorithm, after some $N=\poly(d,\ell,b)$ iterations, we will have an ellipsoid with volume smaller than $V$ that contains $P(S)$ (by the proof of Lemma~\ref{lem:convhull}), a contradiction. Hence $P(S)$ must be empty, if we use the $N$ chosen above for the definition of $WSO$ and the ellipsoid algorithm in the execution of $WSO$ does not find a feasible point after $N$ iterations.
\end{prevproof}

\begin{prevproof}{Corollary}{cor:WSOruntime}
By the choice of $N$ in Corollary~\ref{cor:N}, $WSO$ only does $\poly(d,\ell,b)$ iterations of the ellipsoid algorithm. {Note that the starting ellipsoid can be taken to be the sphere of radius $\sqrt{d}$ centered at $\vec{0}$, as $P \subseteq [-1,1]^d$. Moreover, the hyperplanes output by $\widehat{WSO}$ have bit complexity $O(\ell)$, while all other hyperplanes that may be used by the ellipsoid algorithm have bit complexity ${\rm poly}(d,\ell,b)$ given our choice of $\delta$. So by \cite{GLS1988}, $\widehat{WSO}$ will only be queried at points of bit complexity ${\rm poly}(d,\ell,b)$, and every such query will take time $\poly(\poly(d,\ell,b), rt_{\mathcal{A}}(\poly(d,\ell,b))$ as it involves checking one inequality for numbers of bit complexity $\poly(d,\ell,b)$ and making one call to ${\cal A}$ on numbers of bit complexity $\poly(d,\ell,b)$.} 
Therefore, $WSO$ terminates in the promised running time.
\end{prevproof}

\subsection{Computing Reduced Forms of (Randomized) Allocation Rules.}\label{app:reducedForms}
For distributions that are explicitly represented (such as $\mathcal{D}'$), it is easy to compute the reduced form of a deterministic allocation rule: simply iterate over every profile in the support of ${\cal D}'$, run the allocation rule, and see who receives what items. For randomized allocation rules, this is trickier as computing the reduced form exactly would require enumerating over the randomness of the allocation rule. One approach is to approximate the reduced form. This approach works, but is messy to verify formally, due to the fact that the bit complexity of reduced forms of randomized allocations takes effort to bound. The technically cleanest approach is to get our hands on a deterministic allocation rule instead.

Let $A$ be a randomized allocation rule that obtains an $\alpha$-fraction of the maximum welfare in expectation. Because the welfare of the allocation output by $A$ cannot be larger than the maximum welfare, the probability that $A$ obtains less than an $(\alpha-\gamma)$-fraction of the maximum welfare is at most $1-\gamma$. So let $A'$ be the allocation rule that runs several independent trials of $A$ and chooses (of the allocations output in each trial) the one with maximum welfare. If the number of trials is $x/\gamma$, we can guarantee that $A'$ obtains at least an $(\alpha-\gamma)$-fraction of the maximum welfare with probability $1-e^{-x}$. From this it follows that, if $O((\ell+\tau)/\gamma)$ independent trials of $A$ are used for $A'$, then $A'$ obtains an $(\alpha-\gamma)$-fraction of the maximum welfare for \emph{all} input vectors of bit complexity $\ell$, with probability at least $1-2^{-\tau}$. This follows by taking a union bound over all $2^\ell$ possible vectors of bit complexity $\ell$. For $\ell, \tau$ to be determined later, we fix the randomness used by $A'$ in running $A$ ahead of time so that $A'$ is a deterministic algorithm. Define $\mathcal{A}'$ using $A'$ in the same way that $\mathcal{A}$ is defined using $A$. 

As $A'$ is a deterministic algorithm and $\mathcal{D}'$ a uniform distribution over ${\rm poly}(n, T,1/\epsilon)$ profiles, we can compute $R^{\mathcal{A}'}_{\mathcal{D}'}(\vec{w})$ for a given $\vec{w}$ by enumerating over the support of ${\cal D}'$ as described above. The resulting $R^{\mathcal{A}'}_{\mathcal{D}'}(\vec{w})$ has bit complexity polynomial in the dimension $T$ and the logarithm of $\poly(n,T,1/\epsilon)$.\footnote{{This is because the probability that bidder $i$ gets item $j$ conditioned on being type $B$ is just the number of profiles in the support of $\mathcal{D}'$ where $t_i = B$ and bidder $i$ receives item $j$ divided by the number of profiles where $t_i = B$. The definition of ${\cal D}'$ (see~\cite{CaiDW12b}) makes sure that the latter is non-zero.}}

Now let us address the choice of $\ell$. We basically want to guarantee the following. Suppose that we use $\mathcal{A}'$ inside $WSO$. We want to guarantee that $\mathcal{A}'$ will work well for any vector $\vec{w}$ that our algorithm will possibly ask ${\cal A}'$.\footnote{Namely, we want that $\mathcal{A}'$ approximately optimizes the linear objective $\vec{w}\cdot \vec{x}$ over $\vec{x} \in F({\cal F},{\cal D}')$ to within a multiplicative factor $\alpha-\gamma$.} We argue in the proof of Lemma~\ref{lem:LPruntime} that, regardless of the bit complexity of the hyperplanes output by $WSO$, throughout the execution of our algorithm $WSO$ will only be queried on points of bit complexity $\poly(n,T,\hat{\ell},\log 1/\epsilon)$, where $\hat{\ell}$ is the bit complexity of the coordinates of the points in $\times_i T_i$. From Corollary~\ref{cor:WSOruntime} it follows then that ${\cal A}'$ will only be queried on inputs of bit complexity ${\rm poly}(n,T,\hat{\ell},\log 1/\epsilon)$. This in turns implies that $A'$ will only be queried on inputs of bit complexity ${\rm poly}(n,T,\hat{\ell},\log 1/\epsilon)$. So setting $\ell$ to some ${\rm poly}(n,T,\hat{\ell},\log 1/\epsilon)$ guarantees that $A'$ achieves an $(\alpha-\gamma)$-fraction of the maximum welfare, for all possible inputs it may be queried simultaneously, with probability at least $1-2^{-\tau}$.

Finally, for every input $\vec{w}$ of bit complexity $x$, the running time of ${\cal A}'$ is polynomial in $x$, the support size and the bit complexity of ${\cal D}'$ (which is ${\rm poly}(n,T,\hat{\ell},1/\epsilon)$), and the running time of $A'$ on inputs of bit complexity ${\rm poly}(n,T,\hat{\ell}, \log 1/\epsilon, x)$. The latter is just a factor of ${\rm poly}(n,T,\hat{\ell},\log 1/\epsilon,\tau,1/\gamma)$ larger than that of $A$ on inputs of bit complexity ${\rm poly}(n,T,\hat{\ell}, \log 1/\epsilon, x)$. Overall, 
\begin{align*}
&rt_{{\cal A}'}(x)= {\rm poly}(n,T,\hat{\ell},1/\epsilon,\tau,1/\gamma, x) \\&~~~~~~~~~~~~~~~~~~~\cdot rt_A({\rm poly}(n,T,\hat{\ell}, \log 1/\epsilon, x)).
\end{align*}

%
%
%
%
%

\subsection{Omitted Proofs on the Runtime of the Revenue-Maximizing LP from Section~\ref{sec:runtime}.}\label{app:LPruntime}
\begin{prevproof}{Lemma}{lem:LPruntime}
Ignoring computational efficiency, we can use the construction of~\cite{CaiDW12b} to build a separation oracle for $P_0$, defined as in Section~\ref{sec:revenue}. Suppose that we built this separation oracle, $SO$, and used it to solve the LP of Figure~\ref{fig:revenue benchmark} with $P_0$ in place of $F(\mathcal{F},\mathcal{D}')$ using the ellipsoid algorithm. It follows from~\cite{GLS1988} that the ellipsoid algorithm using $SO$ would terminate in time polynomial in $n,T,\hat{\ell},\log 1/\epsilon$ and the running time of $SO$ on points of bit complexity $\poly(n,T,\hat{\ell},\log 1/\epsilon)$.\footnote{Note that for any guess $x$ on the revenue, we can upper bound the volume of the resulting polytope by $2^{O(T)}$ and lower bound it by some $2^{- {\rm poly}(n,T,\hat{\ell},\log 1/\epsilon)}$, whatever its dimension is. We can also take the precision to be ${\rm poly}(n,T,\hat{\ell}, \log 1/\epsilon)$.} As we are running exactly this algorithm (i.e. with the same parameters and criterion for deeming the feasible region lower-dimensional), except replacing the separation oracle for $P_0$ with $WSO$, our solution will also terminate in time polynomial in $n,T,\hat{\ell},\log 1/\epsilon$ and the runtime of $WSO$ on points of bit complexity $\poly(n,T,\hat{\ell}, \log1/\epsilon)$. Indeed, for every guess on the revenue and as long as the Ellipsoid algorithm for that guess has not terminated, it must be that $WSO$ has been rejecting the points that it has been queried, and by Corollary~\ref{cor:WSOandP1} in this case it acts as a valid separation oracle for $P_0$, and hence is input points of the same bit complexity that could have been input to $SO$, namely $\poly(n,T,\hat{\ell},\log 1/\epsilon)$. Corollary~\ref{cor:WSOruntime} shows that the runtime of $WSO$ on points of bit complexity $\poly(n,T,\hat{\ell},\log 1/\epsilon)$ is 
\begin{align*}&\poly(n,T, \hat{\ell}, 1/\epsilon,\log 1/\eta,1/\gamma) \\&~~~~~~~~~~~~\cdot rt_A({\rm poly}(n, T, \hat{\ell}, \log 1/\epsilon)),\end{align*}
so the entire running time of our algorithm is as promised.
\end{prevproof}
\section{Generic Theorem Statement}

\yangnote{
The following theorem summarizes our results for approximately optimizing over a convex polytope using a weird separation oracle. The summary is stated in a more generic form than is required for the main results of this paper so that it can be applied more easily in future work.
\begin{theorem}
Let $P$ be a $d$-dimensional bounded polytope containing the origin, and let $\mathcal{A}$ be any algorithm that takes any direction $\vec{w}\in[-1,1]^{d}$ as input and outputs a point $\mathcal{A}(\vec{w}) \in P$ such that $\mathcal{A}(\vec{w})\cdot \vec{w}\geq \alpha\cdot \max_{\vec{x}\in P} \vec{x}\cdot\vec{w}$ for some absolute constant $\alpha\in(0,1]$. Then we can design a ``weird'' separation oracle $WSO$ such that,
\begin{enumerate}
\item Every halfspace output by $WSO$ will contain  $\alpha P=\{\alpha\cdot\vec{\pi}\ |\ \vec{\pi} \in P\}$.
\item Whenever $WSO(\vec{x}) =$ ``yes,'' the execution of $WSO$ explicitly finds directions $\vec{w}_1,\ldots,\vec{w}_k$ such that $\vec{x} \in \text{Conv}\{\mathcal{A}(\vec{w}_1),\ldots, \mathcal{A}(\vec{w}_k)\}$.
\item Let $b$ be the bit complexity of $\vec{x}$, $\ell$ be an upper bound of the bit complexity of $\mathcal{A}(\vec{w})$ for all $\vec{w}\in[-1,1]^{d}$, and $rt_{\mathcal{A}}(y)$ be the running time of algorithm $\mathcal{A}$ on some input with bit complexity $y$. Then on input $\vec{x}$, $WSO$ terminates in time $\poly(d,b,\ell,rt_{\mathcal{A}}(\poly(d,b,\ell)))$ and makes at most $\poly(d,b,\ell)$ many queries to $\mathcal{A}$.
\end{enumerate} \end{theorem}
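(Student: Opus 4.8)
The theorem is obtained by instantiating the construction and lemmas of Sections~\ref{sec:WSO} and~\ref{sec:runtime} directly, with the given $P$ and $\mathcal{A}$ in the roles of $P$ and $\mathcal{A}$ of Definition~\ref{def:avcg}; note first that the hypothesis $P\subseteq[-1,1]^{d}$ is without loss of generality for a bounded $P$, since everything here concerns multiplicative guarantees. We define $WSO$ exactly as in Figure~\ref{fig:WSO}, with the slack $\delta=2^{-\poly(d,\ell,b)}$ prescribed by Lemma~\ref{lem:delta} and the iteration bound $N=\poly(d,\ell,b)$ prescribed by Corollary~\ref{cor:N}, where $b$ is the bit complexity of the challenge $\vec{x}$ and $\ell$ is an upper bound on the bit complexity of $\mathcal{A}(\vec{w})$ over all $\vec{w}\in[-1,1]^{d}$. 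Recall the regions $P_0=\{\vec{\pi}:\vec{\pi}/\alpha\in P\}=\alpha P$, $P_1$, and $P_2$ of Section~\ref{subsec:threepolytopes}.

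For claim~(1): every halfspace output by $WSO$ has the form $\vec{w}^{*}\cdot\vec{\pi}\le t^{*}$ with $t^{*}\ge\mathcal{A}(\vec{w}^{*})\cdot\vec{w}^{*}$ (since $\widehat{WSO}$ accepted $(\vec{w}^{*},t^{*})$), so by Fact~\ref{fact:WSOpolytope} it contains $P_1$, and $\alpha P=P_0\subseteq P_1$ by Lemma~\ref{lem:chain}. For claim~(2): if $WSO(\vec{x})=$``yes,'' let $S=\{\vec{w}_1,\dots,\vec{w}_k\}$ be the directions on which the execution queried $\widehat{WSO}$ (these are explicitly recorded by the algorithm); Lemma~\ref{lem:convhull} then gives $\vec{x}\in\text{Conv}(\{\mathcal{A}(\vec{w}):\vec{w}\in S\})=\text{Conv}\{\mathcal{A}(\vec{w}_1),\dots,\mathcal{A}(\vec{w}_k)\}$. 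For claim~(3): the inner ellipsoid loop inside $WSO$ runs at most $N=\poly(d,\ell,b)$ iterations, each of which calls $\widehat{WSO}$ once, and each such call makes exactly one query to $\mathcal{A}$, so at most $\poly(d,\ell,b)$ queries to $\mathcal{A}$ are made; the overall running time is Corollary~\ref{cor:WSOruntime}, whose proof uses that the starting ellipsoid may be taken to be the ball of radius $\sqrt{d}$, that the hyperplanes from $\widehat{WSO}$ have bit complexity $O(\ell)$ and all others $\poly(d,\ell,b)$ given our $\delta$, so by standard ellipsoid bookkeeping~\cite{GLS1988} $\widehat{WSO}$ is queried only at points of bit complexity $\poly(d,\ell,b)$, each such query costing $\poly(d,\ell,b)+rt_{\mathcal{A}}(\poly(d,\ell,b))$, and multiplying by $N$ yields the stated bound.

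The real content—and the step I expect to be the main obstacle—is not this assembly but the justification underlying Lemma~\ref{lem:convhull}: that $N=\poly(d,\ell,b)$ iterations genuinely certify emptiness of the auxiliary polytope $P(S)$ whenever $\vec{x}\notin\text{Conv}(\{\mathcal{A}(\vec{w}):\vec{w}\in S\})$. This requires three nested quantitative arguments, all of which must be made uniform in $S$: (a) the separation-gap bound of Lemma~\ref{lem:delta}, proved by taking a corner of the auxiliary polytope $P'(S)$ witnessing $\vec{x}\cdot\vec{w}'>t'$ and bounding its bit complexity by $\poly(d,\ell)$ from the bit complexity of the halfspaces defining $P'(S)$; (b) the volume lower bound of Lemma~\ref{lem:goodvolume}, obtained by exhibiting an explicit axis-aligned box of side $\Theta(\delta/d)$ around a scaled corner and checking it lies in $P(S)$; and (c) comparing the resulting volume $2^{-\poly(d,\ell,b)}$ against the geometric decay $2^{O(d)}\,(1-1/\poly(d))^{N}$ of the maintained ellipsoid. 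Once $\delta$ and $N$ are pinned down this way, Lemma~\ref{lem:convhull} holds, and the three claims of the theorem follow as above.
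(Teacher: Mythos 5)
Your proposal is correct and follows the same route as the paper's own proof: the paper likewise disposes of claim (1) via Fact~\ref{fact:WSOpolytope}/Corollary~\ref{cor:WSOandP1} together with $\alpha P=P_0\subseteq P_1$ from Lemma~\ref{lem:chain}, claim (2) via Lemma~\ref{lem:convhull}, and claim (3) via Corollary~\ref{cor:WSOruntime}. Your additional unpacking of Lemmas~\ref{lem:delta} and~\ref{lem:goodvolume} and the volume-decay comparison underlying Corollary~\ref{cor:N} is an accurate summary of the supporting machinery, not a deviation from it.
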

\begin{proof}
In fact, we have already proved all three claims in some previous Lemmas and Corollaries. For each claim, we point out where the proof is. For the first claim, using Corollary~\ref{cor:WSOandP1} we know $P_{1}$ is contained in all halfspaces output by $WSO$, therefore $\alpha P\subseteq P_{1}$ is also contained in all halfspaces. The second claim is proved in Lemma~\ref{lem:convhull}, and the third claim is proved in Corollary~\ref{cor:WSOruntime}.
\end{proof}
}
\end{document}